\theoremstyle{plain} \newtheorem{prop}{Proposition}
\theoremstyle{plain} \newtheorem{theorem}{Theorem}
\theoremstyle{plain} \newtheorem{property}{Property} 
\theoremstyle{plain} \newtheorem{lemma}{Lemma}
\newtheorem{definition}{Definition} 
\newtheorem{corollary}{Corollary} 
\theoremstyle{definition}
\begin{document}
\vspace{-12mm}
%
% paper title
% can use linebreaks \\ within to get better formatting as desired
\title{Efficient Polling Protocol  for  \\ Decentralized Social 
Networks$^*$\thanks{$^*$Funded by ANR Streams project.}\vspace{-10mm}}

% author names and affiliations
% use a multiple column layout for up to two different
% affiliations

\author{\IEEEauthorblockN{Hoang Bao Thien and Abdessamad Imine}
\IEEEauthorblockA{INRIA  Nancy -- Grand-Est, France\\
%line 2: name of organization, acronyms acceptable\\
% France\\
\{bao-thien.hoang, abdessamad.imine\}@inria.fr}
%\and
%\IEEEauthorblockN{Abdessamad Imine}
%\IEEEauthorblockA{Lorraine University\\
%INRIA  Nancy -- Grand-Est, France\\
%line 2: name of organization, acronyms acceptable\\
%line 3: City, Country\\
%abdessamad.imine@inria.fr}
\vspace{-5mm}
}

% conference papers do not typically use \thanks and this command
% is locked out in conference mode. If really needed, such as for
% the acknowledgment of grants, issue a \IEEEoverridecommandlockouts
% after \documentclass

% for over three affiliations, or if they all won't fit within the width
% of the page, use this alternative format:
% 
%\author{\IEEEauthorblockN{Michael Shell\IEEEauthorrefmark{1},
%Homer Simpson\IEEEauthorrefmark{2},
%James Kirk\IEEEauthorrefmark{3}, 
%Montgomery Scott\IEEEauthorrefmark{3} and
%Eldon Tyrell\IEEEauthorrefmark{4}}
%\IEEEauthorblockA{\IEEEauthorrefmark{1}School of Electrical and Computer Engineering\\
%Georgia Institute of Technology,
%Atlanta, Georgia 30332--0250\\ Email: see http://www.michaelshell.org/contact.html}
%\IEEEauthorblockA{\IEEEauthorrefmark{2}Twentieth Century Fox, Springfield, USA\\
%Email: homer@thesimpsons.com}
%\IEEEauthorblockA{\IEEEauthorrefmark{3}Starfleet Academy, San Francisco, California 96678-2391\\
%Telephone: (800) 555--1212, Fax: (888) 555--1212}
%\IEEEauthorblockA{\IEEEauthorrefmark{4}Tyrell Inc., 123 Replicant Street, Los Angeles, California 90210--4321}}

% use for special paper notices
%\IEEEspecialpapernotice{(Invited Paper)}

% make the title area
\maketitle

\thispagestyle{plain}
\pagestyle{plain}

\begin{abstract}       
% \begin{normalsize}
We address the polling problem in social networks where individuals collaborate to choose the most
favorite choice amongst some options, without divulging their vote and publicly exposing their potentially
malicious actions. Given this social interaction model, Guerraoui \textit{et al.}
\cite{DBLP:conf/opodis/GuerraouiHKM09,DBLP:journals/jpdc/GuerraouiHKMV12} recently proposed polling
protocols that do not rely on any central authority or cryptography system, using a simple secret sharing scheme
along with verification procedures to accurately compute the poll's final result.
However, these protocols can be deployed safely and efficiently provided that, inter alia, %a number of assumptions is satisfied:
the social graph structure should be transformed into a ring structure-based overlay and
the number of participating users is perfect square. Consequently, designing \emph{secure} and \emph{efficient} polling
protocols regardless these constraints remains a challenging issue.

In this paper, we present EPol,
a simple decentralized polling protocol that relies on the current state of
social graphs. More explicitly, we define one family of social graphs that satisfy what we call the $m$-broadcasting property (where $m$ is less than or equal to the minimum node degree) and show
their structures enable low communication cost and constitute necessary and sufficient condition to ensure vote
privacy and limit the impact of dishonest users on the accuracy of the polling output. In a social network of $N$ users with diameter $\Delta_G$ and $D\le (m-1)\Delta_G/2$ dishonest ones (and similarly to the works \cite{DBLP:conf/opodis/GuerraouiHKM09,DBLP:journals/jpdc/GuerraouiHKMV12} where
they considered $D<\sqrt{N}$), a \textit{privacy parameter} $k$ enables us
to obtain the following results:
(i) the maximum probability that an honest node's vote is disclosed with certainty is
$(D/N)^{k+1}$ and without certainty is  $\bigl( \frac{D}{N}/(1-2\frac{D}{N})\bigr)\bigl[1-\sum_{i=0}^{k}\gamma_i(2\frac{D}{N})^{2i+1}\bigr]$ where $\gamma_i$ is the proportion of nodes voting for $2i+1$ shares and $0\le i\le k$;
(ii) up to $2D$ votes can be revealed with certainty;
(iii) the maximum impact on the final result is $(6k+4)D$, and the average impact is  $\Bigl[\bigl(\sum_{i=0}^{k}\gamma_i (2i+1)\bigr)\bigl(1+2\sum_{i=0}^{k}\gamma_i\frac{i+\alpha}{2i+1}\bigr)+1\Bigr]D$, where $\alpha$ is the proportion of users correctly voting;
(iv) unlike \cite{DBLP:conf/opodis/GuerraouiHKM09,DBLP:journals/jpdc/GuerraouiHKMV12}, our protocol is effective to compute more precisely the final result.
Furthermore, despite the use of richer social graph structures, the communication and spatial complexities of EPol are close to be linear.      
    
% \end{normalsize} 
\end{abstract}

\begin{IEEEkeywords}
%component; formatting; style; styling;
Social networks, Polling protocol, Secret sharing, Privacy.
\end{IEEEkeywords}

% For peer review papers, you can put extra information on the cover
% page as needed:
% \ifCLASSOPTIONpeerreview
% \begin{center} \bfseries EDICS Category: 3-BBND \end{center}
% \fi
%
% For peerreview papers, this IEEEtran command inserts a page break and
% creates the second title. It will be ignored for other modes.
\IEEEpeerreviewmaketitle

\section{Introduction}\label{sec:intro}

In recent years, within billions of users, online social networks (OSNs) and social applications have changed the way users interact with the Internet. 
People can discuss, exchange photos and personal news, find others of a common interest, and many more.
\emph{Polling} is one of the current practical, useful but
important topics in OSNs.
% online social networks (OSNs). 
In general, that is the problem of providing to all  participants  the outcome of a poll conducted among themselves, thus giving the most favorite choice among some options. 
Each participant can express his/her preference by submitting a vote, then all votes are aggregated and  the majority option will be chosen as the final result.
Just to demonstrate one typical example, a university has just launched a new administrative  procedure and may ask students whether or not this method is helpful, and each user will choose one option between ``Yes'' and ``No''.  
For the sake of simplicity, we here consider such a polling problem with only two options ``$+1$'' and ``$-1$'' for the concerning question. 

% We have seen the power of social media and its effect on society in the last few
% years. Online social network (OSN) has been the most useful and typical
% technology of that.  The user number of such networks is
% blowing up exponentially. Just to demonstrate one typical example, as of now
% Facebook has more than 1.11 billion active users and 655 million daily active
% users on 
% average.\footnote{http://newsroom.fb.com/Key-Facts} OSN allows
% participant to do anything for a variety of purposes concerning
% business, entertainment, world's events and culture such as getting friendship,
% publishing and sharing information, exchanging documents,
% expressing opinions in politics. 

% In this work, we approach to one of the current practical, useful but
% sensitive topic in Online Social Networks (OSN), the \emph{polling} process. In
% general, polling is
% the way to determine the most favorite choice amongst some options from the
% participants. Each participant can distribute his preference by
% submitting vote, and after aggregating all votes, the majority option will
% be chosen as the final result. For instance, one company of mobile phone has
% just launched a new product and may want to ask customers whether or not its
% features
% are comfortable, and user will choose one option between ``Yes'' or
% ``No''. We here consider simply a binary polling with only two options ``$+1$'' or
% ``$-1$'' for the concerning question. 

The goal in studying this problem is to devise a polling protocol such that it can perform a secure and accurate process to sum up the initial votes with the presence of dishonest users, who try to bias the outcome and disclose the votes of honest ones. 
Despite the simplicity characteristics of this problem, it takes an important part in incorporating 
user's opinion online. Thus, currently, several  studies and solutions for
this problem %in two approaches, 
using two settings, centralized and distributed networks, are proposed.
In the centralized OSN, for instance Facebook Poll\footnote{http://apps.facebook.com/opinionpolls/} and  
Doodle\footnote{http://www.doodle.com/}, such a computation process may be easily achieved through a central server which is used to collect the users' votes before summing up them to obtain the output. 
% Facebook Poll\footnote{http://apps.facebook.com/opinionpolls/} and  Doodle\footnote{http://www.doodle.com/} are well illustrative examples.
Nevertheless, this solution suffers from server failures and particularly  privacy problems:  user might generally not want his/her vote to be known by a central entity, and it is not guaranteed the server will not bias and disclose the user votes.

\begin{table*}[t]
\begin{footnotesize}
\begin{center}
     \begin{tabular}{  l| r|  c|  c|  c|  c| c|  c  }
    \hline
    \textbf{Algorithm} & \textbf{Graph} & \textbf{Max. Impact} & \textbf{Privacy} & \textbf{Nb. of Dishonest Nodes} & \multicolumn{2}{ |c| }{\textbf{Complexity}} & \textbf{Crash}  	\\
      &   &    &  &   & \textbf{Spatial}  &  \textbf{Message} &      \\ \hline\hline

   \textbf{DPol} \cite{DBLP:conf/opodis/GuerraouiHKM09} &  Overlay & $(6k+2)D$ &  $(D/N)^{k+1}$ &  $D<\sqrt{N}$ &  $\mathcal{O}(rk+|g_i|)$ & $\mathcal{O}(rk+|g_i|)$ & Yes    \\   

   \textbf{DPol*} \cite{DBLP:journals/jpdc/GuerraouiHKMV12} &  Overlay & $(6k+4)D$ &  $(D/N)^{k+1}$ &  $D<\sqrt{N}$ &  $\mathcal{O}(rk+|g_i|)$ & $\mathcal{O}(rk+|g_i|)$ & Yes    \\   
                  
   \textbf{Pol} \cite{DBLP:conf/opodis/HoangI12}  &  General & $(6k+4)D$ &  $(D/N)^{k+1}$ &  $D<N/5$ &  $\mathcal{O}(N^2)$ & $\mathcal{O}(k+N^2)$ & No   \\   

   \textbf{MPOL*} \cite{DBLP:conf/trustcom/EnglertG13}  &  Overlay & $(6k+2)D$ &  $(D/N)^{k+1}$ & $D<\sqrt{N}$  &  $\mathcal{O}(rk^2|g_i|)$ & $\mathcal{O}(rk+|g_i|)$ & No   \\   

   \textbf{PDP} \cite{DBLP:conf/srds/BenkaouzGEH13} &  Overlay & $2(k+N)D$ &  $(D/N)^{k+1}$ &  $D\ge k+1$ &  $\mathcal{O}(rk+|g_i|)$ & $\mathcal{O}(rk+|g_i|)$ & No   \\   
                  
   \textbf{DiPA} \cite{DBLP:journals/computing/BenkaouzE13}  &  Overlay & $2(k+N)D$ &  $(D/N)^{k+1}$ & $D\ge k+1$ &  $\mathcal{O}(rk+|g_i|)$ & $\mathcal{O}(rk+|g_i|)$ & No   \\   
                  
   \textbf{This work}   &  General & $(6k+4)D$ &  $(D/N)^{k+1}$ &  $D\le(m-1)\Delta_G/2$ &  $\mathcal{O}(k+mN)$ & $\mathcal{O}(k+N.(d_0-m))$ & Yes   \\   

  \hline
    \end{tabular}
\end{center}
\caption{\small Comparison of distributed polling protocols where ``Max. Impact'': the maximum difference between the output and the expected result, ``Privacy'': the probability a node's vote is revealed, ``Nb. of Dishonest Nodes'': the number of dishonest nodes the system can tolerate, ``Spatial complexity'': the total space a node must hold,  ``Message complexity'': the number of messages sent by a node, $r$: number of groups, $|g_i|$: group size, $d_0$: maximum node degree, $\Delta_G$: network diameter. Entries marked with an asterisk (*) show the results for binary polling.} % title of Table
\label{table:comparision_protocols}    

\end{footnotesize}
\vspace{-5mm}
\end{table*}

In this paper, we focus on the polling protocol based on the decentralized OSN where
information is not concentrated in a central point and hence, user privacy is improved. 
%We take into account three fundamental aspects related to the distributed implementation of a polling problem: accuracy, privacy  and scalability (number of dishonest nodes to be tolerated).
%Clearly, devising a polling protocol ensuring these issues is very challenging.
%Recently, there are some studies and solutions for this approach which are summerized in Table \ref{table:comparision_protocols}.
In addition, we do not want to rely on cryptography for ensuring privacy or accuracy because: (i) the cryptography uses complicated computation that impacts to the scalability and practicality of the protocols; (ii) all cryptography
techniques rely on the assumptions which are not proved and might be broken (e.g., the
difficulty for factorizing the product of two big prime numbers or solving the discrete logarithm problem); and (iii) some traditional distributed computing problems  can be solved without cryptography as motivated in \cite{DBLP:conf/fc/MalkhiMP02,DBLP:conf/rsalabo/rivest98}. 

Guerraoui  \textit{et al.} \cite{DBLP:conf/opodis/GuerraouiHKM09,DBLP:journals/jpdc/GuerraouiHKMV12}
have opened the way to a novel and promising approach to perform secure distributed polling by proposing DPol, a simple distributed polling protocol  based on secret sharing scheme and without using cryptography. 
% In DPol, participants care about their reputation: information related to a user is intimately considered to reflect the associated real person.  
% Indeed, they do not want their votes to be disclosed nor their misbehaviors, if any, to be publicly exposed.  
% %A distributed verification process is executed which, instead of masking or preventing dishonest behaviors, dissuades such behaviors.
The authors also presented the verification procedures to dissuade user misbehaviors, and enable honest users to tag profiles of dishonest ones.
%To dissuade user misbehaviors, distributed verification procedures are manipulated to detect with a non-zero probability these misbehaviors and enable honest users to tag profiles of dishonest ones. 
The probability of dishonest users violating privacy is balanced by their impact on the accuracy of the final result.   
%Moreover, DPol ensures privacy of
%votes and final result accuracy by limiting the impact of dishonest users. 
However, DPol has practically some disadvantages. Firstly, DPol relies on a
structured overlay, a cluster-ring-based structure inspired from \cite{DBLP:journals/ipl/GalilY87}. 
%Besides efficient in terms of communication cost, it  is on top and really apart  from the normal social graph. 
Although it is efficient in term of communication cost, it  is on top and really apart  from the normal social graph.
It does not take 
into account the social links among users in the sense  it builds the uniform distribution of users into groups. This is not practical as
we have to target a special case using
notion of group instead of reserving the normal structure of the graph. 
% This overlay construction would be necessarily based on centralized solution. 
% Hence we lose
% the benefit of a fully decentralized polling protocol. 
Secondly,
the number of users should be a perfect square number such that one social graph with 
$N$ users %are 
is divided into $\sqrt{N}$ groups of size $\sqrt{N}$.                            
Thirdly, DPol assumes that dishonest nodes cannot wrongfully blame honest ones which seems to be a too strong assumption. This assumption helps honest nodes to correctly identify the dishonest misbehaviors.
Finally, transforming a graph into an overlay is not always useful since that may affect some security properties like accuracy and complexity.
For instance, if the network is a clique, then each node can easily obtain all data from all its friends because they are fully connected. However after transforming a clique into an overlay,  as  the data sent by a node may be corrupted by the intermediate nodes, to ensure the accuracy,  DPol should do a verification procedure which of course increases the communication cost. In addition, as stated in DPol,
an honest node may decide on the arbitrary data sent by dishonest ones, and thus, it might be incorrect. It implies the impact of dishonest nodes on accuracy may be high.

Later, several protocols and extensions inspired from the idea of DPol have been proposed such as
MPOL \cite{DBLP:conf/trustcom/EnglertG13}, PDP \cite{DBLP:conf/srds/BenkaouzGEH13} and DiPA \cite{DBLP:journals/computing/BenkaouzE13}. 
However these protocols rely on the same ring-based overlay structure, and have minor contribution compared to DPol. 
Unlike these works, authors of \cite{DBLP:conf/opodis/HoangI12} introduced a distributed polling protocol and a family of more general social graphs  which ensures the correctness of the protocol and vote privacy of nodes. 
% That family of graphs also includes the DPol's graph. In that work, the natural property of the graph in the sense user and social links are preserved. 
Nonetheless, the communication model is \emph{synchronous} and the communication cost is super-linear in $N$, and is $\mathcal{O}(N^2)$ in the worst case, with the presence of dishonest nodes. 
It should be noted, as opposed to \cite{DBLP:conf/opodis/HoangI12}, all 
other works based on an overlay structure considered complexities with only honest nodes.
%(this is the difference between \cite{DBLP:conf/opodis/HoangI12} and other works based on an overlay which considered complexity with only honest nodes).
% This makes this work not be scalable in the domain of social networks. 

Accordingly, devising efficient and decentralized polling protocols without cryptography
and constraints (such as the use of  overlay structure and perfect square number of users) imposed
in \cite{DBLP:conf/opodis/GuerraouiHKM09,DBLP:journals/jpdc/GuerraouiHKMV12}
remains a challenging problem. 
% \vspace{-5mm} 

% \medskip
\noindent\textbf{Contributions.}
In this paper, we propose the design of a simple decentralized polling protocol not requiring any central authority or cryptography system. Unlike
\cite{DBLP:conf/opodis/GuerraouiHKM09,DBLP:journals/jpdc/GuerraouiHKMV12}, our protocol is deployed on the original social network in such a way 
each individual can perform the voting process privately and securely without resorting to the group division.
 %In this paper, inspired from
%\cite{DBLP:conf/opodis/GuerraouiHKM09,DBLP:journals/jpdc/GuerraouiHKMV12}, first, we 
%propose  a design of a simple decentralized polling protocol that uses social graphs.  
%Second, 
We describe properties required for the social graph to ensure the correctness of our protocol. 
%Furthermore, we cover a general case for the graph topology on which the protocol can run properly. 
Despite the use of richer social graph structures (they also include the ring-based structure given in
\cite{DBLP:conf/opodis/GuerraouiHKM09,DBLP:journals/jpdc/GuerraouiHKMV12}), 
one node can receive/send so many duplicated  messages from/to other nodes.  This can lead to flooding the local storage and getting high communication cost. 
%By thoroughly using the graph structure, our protocol enables one node to deal with only necessary messages. 
Inspired from \cite{DBLP:conf/isit/ShahRR13}, we introduce a method for efficiently broadcasting messages by using the concept that we call  the \emph{$m$-broadcasting property}.
A graph satisfies the $m$-broadcasting property for a parameter $m\in \mathbb{N}$ such that  $1\le m\le d_{min}$, where $d_{min}$ is the minimum node degree, if for each source node, there exists a topological ordering of the nodes such that every node either connects directly to the source or to some $m$ nodes preceding it in the ordering w.r.t. the source. Consequently, instead of accepting all messages originating from a source, a node stores only $m$ ones passed by \emph{ordered paths}.
% By ordering the nodes of our graph structure w.r.t a source node, our protocol enables one node to deal with only necessary messages. Instead of accepting all messages originating from a source, a node stores only $m$ ones passed by \emph{ordered paths}, where $m$ is a parameter such that $1\le m \le d_{min}$ and $d_{min}$ is the minimum node degree in the network.  
% We show that the communication and spatial complexities of our protocol are close to be linear.
Appendix \ref{appxsub:particular_graphs} presents some examples of social graphs satisfying the $m$-broadcasting property.
%  and Appendix \ref{appxsub:algo_construct_m_broad} shows an algorithm to construct these for building these structures.
The construction of a graph satisfying the $m$-broadcasting property from a general graph is \emph{beyond} the scope of this paper. However, to be more practice, we describe an algorithm to construct  this kind of graphs in   Appendix \ref{appxsub:algo_construct_m_broad}. 

\begin{comment}
Like \cite{DBLP:conf/opodis/GuerraouiHKM09,DBLP:journals/jpdc/GuerraouiHKMV12}, we consider a system consisting of both \emph{honest} and \emph{dishonest} users. The honest users completely follow the assigned protocol while the dishonest ones might not. All users care about their \emph{reputation}:  information related to a user is intimately considered to reflect the associated real person. Thus, they do not want their vote to be disclosed nor their misbehaviors, if any, to be publicly exposed. 
In particular, the dishonest users never do any misbehavior which will jeopardize their reputation with probability 1. 
To dissuade the user misbehaviors, an activity affected to the profile of the node tagged as dishonest is given. For example, when Alice is detected
as dishonest user by Bob then Alice's profile is tagged with the statement
``Alice has been detected with bad behavior by Bob''. 
%   Furthermore, we do not take into account the
% situation
% that dishonest nodes wrongfully blame honest ones, or do Sybil attacks  and
% spam since that kinds of misbehavior can be detected by some tools or several
% existing systems such as SybilGuard\cite{DBLP:journals/ton/YuKGF08}, SybilLimit
% \cite{DBLP:journals/ton/YuGKX10},
% \cite{DBLP:conf/nsdi/MislovePDG08,DBLP:conf/nsdi/TranMLS09} (for filtering
% wrongful blames), and
% \cite{DBLP:conf/nsdi/MislovePDG08,DBLP:conf/infocom/SirivianosKY11} 
% (for mitigating spam). 
\end{comment}

To describe carefully the distributed implementation of a polling problem, we consider  the following fundamental criteria:
accuracy, privacy, 
%maximum number of dishonest nodes to be tolerated, 
% the number of dishonest nodes to be tolerated by the protocol,
resilience to dishonest nodes,
and asymptotic complexity. 
% Note that  message and spatial
% complexities of our protocol are given in Table~\ref{table:comparision_protocols}.
Using the same notion of privacy parameter $k$ in \cite{DBLP:conf/opodis/GuerraouiHKM09,DBLP:journals/jpdc/GuerraouiHKMV12}, we get the following results in a system of size $N$ with $D$ dishonest users:
(i) the probability that an honest node's vote is disclosed with certainty is at most
$(D/N)^{k+1}$;
(ii) up to $2D$ votes can be revealed with certainty by the dishonest  coalition;
(iii) in practice, dishonest nodes may also try to reveal a node's vote even if they only get some partial information of the vote (e.g., some shares of that vote). 
We consider the case where dishonest nodes agree on some uncertain vote disclosure rules (section \ref{sec:protocol_with_dishonest}), the maximum probability of greedy (i.e., analyzing \emph{some} shares of the vote) and non-greedy (i.e., analyzing \emph{all} shares of the vote) vote detection are respectively   $\sum_{i=\rho}^{k}\gamma_i\frac{N+1}{N-D+\rho+2}  (\frac{D}{N-D+\rho+1}  ) ^{\rho+1}$
and $\bigl( \frac{D}{N}/(1-2\frac{D}{N})\bigr)\bigl[{1-\sum_{i=0}^{k}\gamma_i(2\frac{D}{N})^{2i+1}}\bigr]$
where $\gamma_i$ is the proportion of nodes voting with $2i+1$ shares and $0\le\rho\le i\le k$;
(iv) the maximum impact  from the dishonest coalition to the final result is $(6k+4)D$, and the average impact is  $\Bigl[\bigl[\sum_{i=0}^{k}\gamma_i (2i+1)\bigr]\bigl[1+2\sum_{i=0}^{k}\gamma_i\frac{i+\alpha}{2i+1}\bigr]+1\Bigr]D$, where $\alpha$ is the proportion of users correctly voting; 
%We validate our solution with a performance evaluation which shows that our protocol is accurate and close to the theoretical average impact, that is $4k+2\alpha+2$, where $\alpha$ is the proportion of users correctly voting. 
(v) the maximum number of dishonest nodes that the system can tolerate is $(m-1)\Delta_G/2$, where $\Delta_G$ is the network diameter; 
and (vi) the communication and spatial complexities of our protocol are close to be linear.
%If the network is a ring structure, it is easy to show we can tolerate more
%dishonest nodes than the model given in~\cite{DBLP:conf/opodis/GuerraouiHKM09,DBLP:journals/jpdc/GuerraouiHKMV12}.

%Comparing to DPol and other works, we have some concrete contributions as follows.
%First, our family of graphs is more general than a ring-based structure, and also includes that overlay (Appendix \ref{appxsub:particular_graphs}).
%Second,  a node may decide the data of other node as the most represented value of the  receiving data. As stated in \cite{DBLP:conf/opodis/GuerraouiHKM09,DBLP:journals/jpdc/GuerraouiHKMV12}, the honest node may decide on the arbitrary data (aka. local tally)  sent by dishonest nodes and thus it might be incorrect. In contrast, we assure each node obtains correct data of other nodes.
We are aware  that the data sent by a node may be corrupted by the intermediate dishonest nodes. Thus, an honest node
may receive distinct values of the same data. 
% As stated in \cite{DBLP:conf/opodis/GuerraouiHKM09,DBLP:journals/jpdc/GuerraouiHKMV12}, an honest node may decide on the arbitrary data (aka. local tally)  sent by dishonest ones and thus it might be \emph{incorrect}.
As opposed to DPol \cite{DBLP:conf/opodis/GuerraouiHKM09,DBLP:journals/jpdc/GuerraouiHKMV12}, we ensure each node can decide the most represented value to obtain \emph{correct} data of other ones.

In addition, 
most of the previous works assume the existence of reliable communication among nodes. However, 
nodes communicate by UDP which may suffer message loss on the communication channels or nodes' crash. In this work, we analyze the effect of these factors  on accuracy and termination of the protocol by determining the impact on the final outcome and the probability of a node failing to decide and compute the final result. 

We illustrate the comparison of contributions between our work and other distributed polling protocols in Table \ref{table:comparision_protocols}.
This table shows that our protocol \emph{tolerates more} dishonest nodes and has \emph{better complexities} than other ones.
More particularly, if the graph is a ring structured, compared to DPol \cite{DBLP:conf/opodis/GuerraouiHKM09,DBLP:journals/jpdc/GuerraouiHKMV12}, our protocol
has the same message complexity, but   tolerates more dishonest nodes and computes more accurately the  poll outcome.
It is also noted that DPol %\cite{DBLP:conf/opodis/GuerraouiHKM09,DBLP:journals/jpdc/GuerraouiHKMV12}  
investigated  the effect  of crash only, and not of message loss.
% while DiPA introduced failure managements but did not show exactly the effect of failures nor compute the probability these events occur.
%Our result encourages  the use of polling protocol without transforming the social graphs into other overlay structures.

% \medskip
\noindent\textbf{Outline.} 
This paper is organized as follows. Section \ref{sec:problem_state} describes
our
polling model, and introduces a family of social graphs. Section
\ref{polling_protocol}
presents our polling protocol and Section \ref{sec:correctness}
analyzes its correctness  %to perform the polling in two cases, 
with and without the presence of dishonest nodes.
%, in Section \ref{sec:protocol_without_dishonest} and Section \ref{sec:protocol_with_dishonest} respectively. 
Section \ref{sec:crash_msg_loss} discusses the impact of crash and message loss on accuracy and termination of the protocol. We review related work in Section
\ref{sec:related_work} and conclude the paper with future research
 in Section \ref{sec:conclusion}.
% Appendix \ref{appxsec:proofs} contains proofs that are omitted from the main text. 
% Appendix  presents some examples of our social graph structures.
% Appendix \ref{appxsub:particular_graphs} presents some examples of our social graph structures and Appendix \ref{appxsub:algo_labelling} shows some algorithms for building these structures.
\vspace{-2mm}

%------------------------------------------------
\section{Social network model}\label{sec:problem_state} 
% \vspace{-2mm}
This section defines the user behaviors and presents
the social graph models. It should be noted that we
consider the same assumptions given in
\cite{DBLP:conf/opodis/GuerraouiHKM09,DBLP:journals/jpdc/GuerraouiHKMV12}.
\vspace{-3mm}

\subsection{User behaviors}\label{subsec:system_model}
% \vspace{-1mm}
The polling  problem consists of a system 
modeled as the form of an undirected social  graph $G=(V,E)$  
%where $V=\{u_0,u_1,...,u_{N-1}\}$ is a set of uniquely identified nodes representing users of size $N$, and $E$ is a set of social links. 
with $N=|V|$ uniquely  identified nodes representing users and $E$ is a set of social links. 
%with $N=|V|$ uniquely  identified nodes representing users of a social network, . 
%Each participant $n$ (or simply, $n$)\footnote{For the sake of simplicity, we use notation $n$ to represent node $n$ whenever the context allows.}
Each participant $n$ expresses its opinion by giving a vote $v_n\in \{-1,1\} $. After collecting the votes of all nodes, the expected
outcome is $\sum_{\substack{n}}v_n $. 
In this work, we consider the following assumptions:

We consider the \emph{asynchronous} model where each node can communicate (send/receive messages) with its neighbors (e.g., direct friends).
% by transmitting messages.
%, and receive messages from any other nodes in the network.  
Some messages may arrive to the destination with some delay.
%All messages is guaranteed to arrive to the destination even with some delay. 
%Furthermore,
%we assume there is no crash of nodes and message loss. 
All nodes have to send/receive/forward messages  if they are requested.

Each node is either honest or dishonest.  The honest node completely complies with the protocol and takes care about its privacy in the sense that  the vote value is not  disclosed. 
All nodes care about their \emph{reputation}:  information related to a user is intimately considered to reflect the associated real person.  
In particular, dishonest nodes never do any misbehavior which will jeopardize their reputation. 

All dishonest nodes can form a coalition to get the full  knowledge of the network and try
to do everything to achieve these goals without being detected: (i) bias the result of the election by promoting their votes or changing the values they received from other honest nodes;
(ii) infer the opinions of other nodes. 
%We suppose that 
%the dishonest node always tries to enforce the final output to be
%lower as much as possible.
%Some dishonest nodes 
%can collaborate together as the coalition to impact  the outcome of the poll as
%much as possible. It is possible that there are two or more coalitions of
%dishonest nodes having
%the same opinion. However, they will get a greater impact on the outcome when
%merged. The coalitions expressing opposite opinions will have compensating
%effects. 

In order to unify the opinions and not give compensating effects, all dishonest nodes make the single coalition $\mathcal{D}$ of size $D$. However, they also want  to protect their reputation from being affected. They are selfish in the sense that each dishonest node prefers to take care about its own reputation to covering up each other \cite{DBLP:conf/opodis/GuerraouiHKM09,DBLP:journals/jpdc/GuerraouiHKMV12}. As such the  dishonest nodes  are rather restricted but more reflective of the real human behavior than Byzantine nodes \cite{DBLP:journals/toplas/LamportSP82}.

To tolerate the existence of dishonest nodes with a limited vote corruption, we assume each node has at least one honest friend but it does not know which friend is honest or not.

%Note that, in asynchronous systems, each node does not know if the transmitter sent it any message at all, and it is difficult for the receiver to decide whether to continue waiting for a message from the sender or stop to do other procedure. Therefore, the termination property is hardly preserved. 
%In this work, we assume all nodes have to send/receive messages if they are requested. 

To dissuade the user misbehaviors, an activity affected to the profile of the concerned node is given. More precisely, if node $u$ is detected
as dishonest one by $v$ then $u$'s profile is tagged with the statement
``$v$ accused $u$ as a dishonest user'' and $v$'s profile has the
statement ``$u$ is a bad guy''. Notice that in social networks, no one would like to be tagged as dishonest. Furthermore, we do not take into account 
% the situation that dishonest nodes wrongfully blame honest ones, or do 
the Sybil attacks  and
spam since those kinds of misbehaviors can be detected by some tools or several
 systems such as SybilGuard\cite{DBLP:journals/ton/YuKGF08}, SybilLimit
\cite{DBLP:journals/ton/YuGKX10},
% \cite{DBLP:conf/nsdi/MislovePDG08,DBLP:conf/nsdi/TranMLS09} (for filtering wrongful blames), 
and
\cite{DBLP:conf/nsdi/MislovePDG08,DBLP:conf/infocom/SirivianosKY11} 
(for mitigating spam).
However,
% is stronger than the one given in \cite{DBLP:conf/opodis/GuerraouiHKM09,DBLP:journals/jpdc/GuerraouiHKMV12} as we allow the case where 
dishonest nodes can wrongfully blame other ones.
\vspace{-3mm}

\begin{figure}[t]
% \begin{wrapfigure}{r}{0.5\textwidth} 
\centering
\vspace{-10mm}
\begin{tikzpicture} 
[>=stealth', thick, scale=0.8,  
terminal/.style={
% The shape:
circle,
minimum size=3mm,
% The rest
very thick,draw,%=blue!80!black!50,
%top color=white,bottom color=blue!50,
font=\sffamily}, 
every label/.style={thick}
]
\foreach \name/\pos/\value in {{n/(0,0)/n},{u1/(-2,1)/x},{u2/(-2,0)/y},{u3/(-2,-1)/z},{v1/(2,1)/a},{v2/(2,0)/b},{v3/(2,-1)/c}}
	\node[terminal,label=$\value$] (\name) at \pos {}  ;

\foreach \source/\dest in {n/v3,n/v2,u1/n,u2/n}                                   
	\path[->] (\source) edge node[right] {} (\dest) ;
\foreach \source/\dest  in {n/v1}                                  
	\path[] (\source) edge (\dest) ;                  
\foreach \source/\dest in {n/u3}                                   
	\path[<->] (\source) edge node[right] {} (\dest) ;  		

\draw [decorate,decoration={brace,amplitude=10pt,mirror} ]
(u1.north -| u1.west)--(u3.south -| u3.west) node [black,midway,xshift=-40pt] {\footnotesize producers $\mathcal{R}_n$}; 

\draw [decorate,decoration={brace,amplitude=8pt} ]
(v2.north -| v2.east)--(v3.south -| v3.east) node [black,midway,xshift=40 pt] {\footnotesize consumers $\mathcal{S}_n$};

\end{tikzpicture} 
\caption{Producers and consumers of $n$.}
\label{fig:producers_vs_consumers}
\vspace{-7mm}
 
% \end{wrapfigure}
\end{figure}
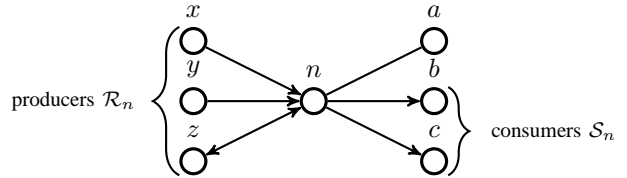

\subsection{Social graph model}\label{sec:social_graph_model}
% \vspace{-1mm}
% We present the social network in our problem as the form of models of social graph. 
In this section, firstly, we define the terms and notations of the social graph used
throughout this work. Then we describe the graphs with low communication cost. Finally, we demonstrate the description of our family of graphs.
\vspace{-1mm}

% \medskip
\noindent{\textbf{Notations.}} \label{par:notation}
% Let $G=(V,E)$ be an undirected graph where
%$V=\{u_0,u_1,...,u_{N-1}\}$ is
%a set of uniquely identified nodes of size $N$, and $E$ is an edge set. 
%Each node is either honest or dishonest. We represent $\mathcal{H}(X)$ and
%$\mathcal{D}(X)$ as the  set of honest nodes and
%dishonest nodes of size $D=|\mathcal{D(X)}|$ in graph $X$.
%%Without loss of generality, given an arbitrary order ``$<$'' of $V$ so that
%%$(V,<)$  becomes a totally ordered set, i.e., $\forall x,y\in V, x<y$ iff
%%$x=u_i$, $y=u_j$ and $i<j$ for some $i,j$. 
Each node $n$  maintains a set of direct neighbors $\Gamma(n)$ of size $d_n$, and two subsets of $\Gamma(n)$: 
a set $\mathcal{S}_n$ of \emph{consumers} containing nodes that $n$ sends messages to, and $\mathcal{R}_n$ of \emph{producers} relating to nodes for which $n$ acts as a consumer. They might not be  disjoint, i.e., $\mathcal{S}_n\cap\mathcal{R}_n\neq \varnothing$, as depicted in Fig. \ref{fig:producers_vs_consumers} (where node $z$ is a producer and is a consumer as well). 
%Figure \ref{fig:producers_vs_consumers} gives an example of the mentioned sets, where node $z$ is both producer and consumer of $n$. 
%In addition, we denote $\Gamma_n^1=\Gamma_n$, and define recursively the set of friends at distance $j$ of $n$, where $j>1$ by $\Gamma^j_n=\{u\mid u\in\Gamma_v, v\in\Gamma_u^{j-1}\}$.
The procedure to define the set of producers and consumers is as follows. Each node chooses an odd number $2i+1$ of neighbors  as its consumers such that $0\le i\le k$ where $k$ is a privacy parameter. It then establishes a three-way-handshake protocol to inform its consumers that it is one of their producers. By this way, it also obtains a set of producers.

We denote $\Delta_G$ as the diameter of the network $G$.
To our best knowledge, the distributed algorithms for computing exact  diameter  take time $\mathcal{O}(N)$ \cite{DBLP:conf/podc/HolzerW12,DBLP:conf/icalp/PelegRT12}.

% \medskip
\noindent{\textbf{Graphs with low communication cost.}}
In the network, consider the broadcasting operation initiated by a single node, called  \emph{source}. The source has a data and wants to disseminate it to all nodes in the network. 
In the naive approach, upon receiving a message from a neighbor, a node stores the data then forwards it on every other edge. Despite the use of richer social graph structures, one node can receive/send so many duplicated messages (which may be passed by many paths) from/to other nodes. This leads to flooding the local storage.
%in such a way that each node eventually obtains the correct value. 
%In the naive approach, with the presence of dishonest nodes, to ensure the reception of a correct source's value, a node not directly connected to the honest source has to receive/send all messages passed by all the paths between them. This leads to flooding the local storage. 
% It may also receive/send so many duplicated messages from/to other nodes. 
% Moreover, in the Shamir's classical $(N,m)$ secret sharing scheme \cite{DBLP:journals/cacm/Shamir79}, each node recovers the secret if it gets $m$ messages which are passed by $m$ node-disjoint paths \cite{DBLP:journals/jacm/DolevDWY93}. However, the requirement  of setting up $m$ node-disjoint paths from the source to every node needs the knowledge of global topology, and also incurs a high communication cost. 
% To solve this problem, existing methods require separate secure transmissions to be performed from the source to other ones across the network \cite{DBLP:journals/jacm/DolevDWY93}.
In this paper, inspired from \cite{DBLP:conf/isit/ShahRR13}, we propose a method for efficiently broadcasting message by using the concept that we call  the \emph{$m$-broadcasting property}.
A graph satisfies the $m$-broadcasting property for a positive integer $m$ such that  $1\le m\le d_{min}$, where $d_{min}$ is the minimum node degree, if for each source node, there exists a topological ordering of the nodes in the graph such that every node either connects directly to the source or to some $m$ nodes preceding it in the ordering w.r.t. the source. Accordingly, instead of accepting all messages originating from a source, a node receives and stores only $m$ ones passed by ordered paths.

We denote by $\beta_n(s)$ a number of neighbors of $n$ preceding it in the ordering w.r.t. source $s$. (We sometimes omit the mentioned source where no confusion arises.)
% Appendix 
% \ref{appxsub:particular_graphs} 
% describes some examples of graphs satisfying this condition. 
%Just to demonstrate one typical example, for a graph satisfied 3-broadcasting property in Fig.\ref{fig:protocol_sf1} (where each node's order is represented in the parentheses) and node $A$ does the broadcast operation for its data. In the naive approach, node $F$ can receive all messages passed by all the paths between $F$ and $A$. However, by using 3-broadcasting property, $F$ just receives three messages from  neighbors $E,B,D$. 
% The construction of a graph satisfying the $m$-broadcasting property from a general graph is \emph{beyond} the scope of this paper.
\vspace{-2mm}

\begin{comment}

With the graph satisfying that property, the procedure to assign ordering for each node w.r.t. a source $s$ could be done through a Breadth-First-Search tree BFS($s$) with root $s$. Each node $v$ gets labelled (order) by a unique number $\tau(v)$, which is the round in which the BFS visited $v$ for the first time. We describe this process in Algorithm \ref{algo_labelling} in Appendix \ref{appxsub:algo_labelling}. Moreover, by relaxing condition of unique label of nodes in the sense two different nodes may get the same label if they do not directly connect to each other, we propose Algorithm \ref{algo_optimal_labelling} in Appendix \ref{appxsub:algo_labelling} which gives the better complexity. 
% \vspace{-4mm}
 
\end{comment}
 
\subsection{Secret sharing based graphs}\label{subsec:graph_model}
% \vspace{-2mm}

%So far, we presented the user behaviors in which participants are either honest or dishonest. 
In this part, we present the family of graphs without and with the
presence of dishonest nodes.
We use a predefined parameter $k\in \mathbb{N}$ (like
\cite{DBLP:conf/opodis/GuerraouiHKM09,DBLP:journals/jpdc/GuerraouiHKMV12}) and a parameter $m\in \mathbb{N}$ to present the features of our
social graphs. Let $G=(V,E)$ be a social graph with the following properties:
% \vspace{-2mm}
\begin{property}[$P_{g_1}$] \label{graph_prop1}
$d_n\ge 2k+1$, $|\mathcal{S}_n|=2i+1$ and $|\mathcal{R}_n|\le 2k+1$ where $0\le i\le k$, for every $n\in V$.
% \vspace{-2mm}
\end{property} 

\begin{property}[$P_{g_2}$] \label{graph_prop2}
$G$ satisfies the $m$-broadcasting property.
% \vspace{-1mm}
\end{property}
 
\begin{property}[$P_{g_3}$] \label{graph_prop3}
% $D\le \frac{m-1}{2}\Delta_G$.
For a source node, each other node has less than $m/2$ dishonest neighbors preceding it in the ordering (w.r.t. source node).
\vspace{-2mm}
\end{property}

According to Property $P_{g_1}$, the set of consumers and the set of producers of a node have the size of \emph{at most} $2k+1$ and might \emph{not be disjoint}. This condition distinguishes our graph family from other structures in \cite{DBLP:conf/trustcom/EnglertG13,DBLP:conf/opodis/GuerraouiHKM09,DBLP:journals/jpdc/GuerraouiHKMV12} and is more flexible than a graph family in \cite{DBLP:conf/opodis/HoangI12} since they all consider the restricted condition where each node has \emph{exactly} $2k+1$ consumers. In addition, it also differs from \cite{DBLP:journals/computing/BenkaouzE13,DBLP:conf/srds/BenkaouzGEH13} which do not give any condition to the upper bound of the number of producers (that one node should have), thus, a dishonest node can send arbitrary summing data to others and the accuracy of the global outcome is easily affected. 
Property $P_{g_2}$ enables us to reduce the communication cost   in the system.
It is also noted that this condition implicitly implies the condition that $G$ is an honest graph mentioned in \cite{DBLP:conf/opodis/HoangI12}, i.e.,  for every honest nodes $u$, $v$, there exists a path between $u$ and $v$ containing only intermediate honest nodes.
Property $P_{g_3}$ ensures each honest node always obtains one correct version of data from other honest ones. 
 Property  $P_{g_3}$ also enables us to limit the size of dishonest users, that is $D\le \frac{m-1}{2}\Delta_G$ (presented in Theorem \ref{theo:max_tolerance}).

\noindent From these properties, we characterize two families of graphs:
\begin{enumerate}[(i)]
\vspace{-1mm}
 \item $\mathcal{G}_1=\{G\mid \mathcal{D}(G)=\varnothing$ and $G$ satisfies
$P_{g_1}, P_{g_2}\}$.
\item $\mathcal{G}_2=\{G\mid \mathcal{D}(G)\ne \varnothing\empty$ and $G$
satisfies
$P_{g_1}, P_{g_2} \mbox{ and } P_{g_3}\}$. 

\end{enumerate}
\vspace{-1mm}

\noindent Graphs in $\mathcal{G}_1$ contain only honest nodes and satisfy property $P_{g_1}$ and $P_{g_2}$. Graphs in $\mathcal{G}_2$ contain honest and dishonest nodes and satisfy properties $P_{g_1}$, $P_{g_2}$ and $P_{g_3}$.\vspace{-2mm}

%---------------------------------------------
\SetKwData{result}{result}
\begin{algorithm*}[t!]

\begin{scriptsize}
\DontPrintSemicolon

\caption{\sc Polling algorithm at node $n\in
\{0,1,...,N-1\}$}\label{algo_polling}

\begin{multicols}{2}

\SetKwFunction{Share}{Sharing}

\SetKwFunction{Broadcast}{Broadcasting}

\SetKwFunction{Aggregate}{Aggregating}
\SetKwFunction{Decide}{Decide}
\SetKwFunction{EventReceiveShare}{RecShareEvt}
\SetKwFunction{EventReceiveData}{RecDataEvt}
\SetKwFor{Upon}{upon event}{do}{endupon}

\SetKwInOut{Input}{Input}\SetKwInOut{Output}{Output}

\SetKwInOut{Var}{Variables}

\textbf{Input:} %a vote $v_n\in\{-1,1\}$ 

 \begin{tabular}{rl}
 
 $v_n$: & A vote of node $n$, value in $ \{-1,1\}$\\
 %$d_n$: & degree of node\\
 $k$~: & privacy parameter \\
 $m$~: & positive integer where $1\le m \le d_{min}$ \\
% $\Gamma(n)$: & set of direct neighbors \\
% $\mathcal{S}_n$:& set of consumers to send shares \\
% $\mathcal{R}_n$:& set of producers to receive shares 
 \end{tabular}
\BlankLine
% \Var{}
\textbf{Variables:}

\begin{tabular}{rl}
  $c_n$:& collected data, $c_n=0$\\
  $\mathcal{C}_n$:&  set of possible collected data\\
  & $\mathcal{C}_n[\{0,1,...,N-1\}\rightarrow \varnothing]$ \\
  $h_n$: & set of final deciding collected data \\
  & $h_n[\{0,1,...,N-1\}\rightarrow  \bot ]$ \\
%   $\mathcal{T}_n$: & routing table  \\& $\mathcal{T}_n[\{0,1,...,N-1\} \rightarrow\varnothing]$ \\
%   $\mathcal{L}_n$: & expected length multiset 
\end{tabular}
\BlankLine
\textbf{Output:} \result
% \Output{\result}

\noindent\rule{.46\textwidth}{0.4pt}
\textbf{Polling Algorithm} 
\noindent\rule[1mm]{.46\textwidth}{0.4pt}

% \BlankLine
\SetAlgoVlined
\DontPrintSemicolon
% \begin{tabular}{rl}

\nl $\Share(v_n,\mathcal{S}_n,i)$   \;
\nl $\Broadcast(c_n)$\;  
\nl $\Aggregate() $
% \end{tabular}

% \BlankLine

\noindent\rule{.46\textwidth}{0.4pt}
\textbf{Procedure} $\Share(v_n,\mathcal{S}_n,i)$
\noindent\rule[1mm]{.46\textwidth}{0.4pt}

% \hrulefill\\

% \textbf{Procedure} $\Share(v_n,\mathcal{S}_n,i)$
% 
% \hrule

\DontPrintSemicolon

\SetKwData{send}{send}
% \BlankLine

% \begin{tabular}{rl}
	 \nl $\mathcal{P}_n \leftarrow \{v_n\}$ \label{mark1str_vot_proc}\;
	 \nl \For{$j \leftarrow 1 $ \KwTo $i$ }{
 	    \nl $\mathcal{P}_n \leftarrow \mathcal{P}_n \cup \{v_n\}\cup\{-v_n\}$\;
	 }  
% 	 \nl $\mathcal{P}_n \leftarrow \mathcal{P}_n \cup \{v_n\}$ \;
	%\tcc*{add $(2k+1)^{th}$ element $v_n$ into $\mathcal{P}_n$}	
% 	 \BlankLine
	 \nl $\mu_n\leftarrow_{rand} \mathcal{P}_n$ % \tcc*{permute randomly}
	 \BlankLine
	 \nl \For{$j \leftarrow 0$ \KwTo $2i$ }{
	  \nl $\send$ (SHARE, $\mu_n[j]$) to  $\mathcal{S}_n[j]$ \label{mark1end_vot_proc}
		}
% \end{tabular}

%\tcc*{send share}

\BlankLine

% \hrulefill\\
\nl $count\leftarrow 0$ \label{mark2str_vot_proc}\;
\nl \While{\emph{(}$count<|\mathcal{R}_n|$\emph{)} }{

% \hrule

% \DontPrintSemicolon
% \BlankLine
% \begin{tabular}{rl}
  \nl \Upon{\emph{(receive (SHARE, $p$) from neighbor $r$ in the first time)}}{
  \nl \If{$(r\in \mathcal{R}_n$ \textbf{\emph{and}} $p\in \{-1,1\})$}{
  \nl $c_n \leftarrow c_n + p$ \;
  \nl $count\leftarrow count+1$ \label{mark2end_vot_proc}\;
    }           
  }
% \end{tabular}
}

\BlankLine
\columnbreak    

\noindent\rule{.475\textwidth}{0.4pt}
\textbf{Procedure} $\Broadcast(c_n)$
\noindent\rule[1mm]{.475\textwidth}{0.4pt}

% \hrulefill\\
% \textbf{Procedure} $\Broadcast(c_n)$
% 
% \hrule
% \BlankLine 

%\SetKwData{send}{send}
\SetKwData{True}{true}
\SetKwData{False}{false}
\SetKwData{Exit}{exit}
\SetKwData{Break}{break}
\SetKwData{Continue}{continue}

  \nl 
    \ForEach{$(r\in \Gamma(n))$\label{mark1st_br_proc}}
	   {\nl \send(DATA, $n$, $c_n$) to $r$  \label{mark1end_br_proc}\;}
  \nl $count\leftarrow 0$ \;
  \nl \While{\emph{(}$count<N-1$\emph{)} }{

   \nl \Upon{\emph{(receive message (DATA, $s$, $c_s$) from direct neighbor $r$ preceding $n$ in the ordering w.r.t. source $s$)}}{
		\nl \uIf{$(r=s)$  \label{mark2st_br_proc}}
			   { \nl $h_n[s]\leftarrow c_s$\;
			     \nl $count\leftarrow count+1$\;
			     \nl Forward (DATA, $s$, $c_s$) to other friends succeeding $n$ in the ordering w.r.t. source $s$ \label{mark2end_br_proc}
			    }
		  \nl \ElseIf{$(s\notin \Gamma(n))$}{  
		        \nl $\mathcal{C}_n[s]\leftarrow \mathcal{C}_n[s]\cup \{c_s\}$  \label{mark3st_br_proc}\;
%		     \nl \If{\emph{(there are $\lceil(m+1)/2\rceil$ values of $c_s$ in $\mathcal{C}_n[s]$)}}{
			 \nl \If{$(|\mathcal{C}_n[s]|=m)$  \label{mark4st_br_proc}}{
			   \nl $h_n[s]\leftarrow \Decide(\mathcal{C}_n[s])$\;
			   \nl $count\leftarrow count+1$\;
			   \nl \send (DATA, $s$, $h_n[s]$) to other $d_n-\beta_n(s)$ friends succeeding $n$ in the ordering w.r.t. source $s$  \label{mark4end_br_proc}
                }
	 
		  }
	   }
  }
\BlankLine

\noindent\rule{.475\textwidth}{0.4pt}
\textbf{Function} $\Decide(\mathcal{Z}) $
\noindent\rule[1mm]{.475\textwidth}{0.4pt}

% \hrulefill\\
% 
% \textbf{Function} $\Decide(\mathcal{Z}) $
% 
% \hrule

% \DontPrintSemicolon
% \BlankLine       
  \nl \Return the most represented value in $\mathcal{Z}$ 
% \BlankLine

\noindent\rule{.475\textwidth}{0.4pt}
\textbf{Procedure} $\Aggregate() $
\noindent\rule[1mm]{.475\textwidth}{0.4pt}

% \hrulefill\\
% 
% \textbf{Procedure} $\Aggregate() $
% 
% \hrule
% 
% \DontPrintSemicolon
% \BlankLine       
  \nl $\result\leftarrow 0$  \label{mark1st_aggr_proc}\;
  \nl \For{$s \leftarrow 0$ \KwTo $N-1$}{
  \nl \uIf{ $(s\ne n)$ }{
	  \nl $\result \leftarrow  \result +h_n[s]$ \;
	 }\nl\lElse { $\result \leftarrow  \result +c_n$ \label{mark1end_aggr_proc}}
  }  
\end{multicols}
\end{scriptsize} \vspace{-2mm}  
\end{algorithm*} 
\vspace{-1mm}

%-----------------------------

\section{Polling protocol}\label{polling_protocol}
% \vspace{-2mm}

In this section, we present our polling protocol, \emph{EPol},  for the network without crash and message loss. EPol is composed of the following  phases:
% \subsection{Description of EPol}\label{subsec:general_protocol}  
% Generally,  the polling protocol includes three phases (see Algorithm \ref{algo_polling}):
% $(i)$ \emph{Sharing}, $(ii)$ \emph{Broadcasting} and $(iii)$ \emph{Aggregating}.
% Phase                                                
% \emph{Sharing} describes the generation, distribution of the node's shares to its neighbors as well as collecting these shares
% from its neighbors. In the \emph{Broadcasting} phase, each node broadcasts 
% messages containing the sum of the shares, which are collected in the sharing phase,
% to its direct and indirect neighbors. The last phase, \emph{Aggregating},
% shows the process of computing the final outcome. 

%\subsection{Description}\label{subsec:detail_protocol}  
%We present the polling protocol in details of each phase.

% \medskip
\noindent\textbf{Sharing.} In this phase, each node $n$ contributes its
opinion
by sending a set of shares expressing its vote $v_n\in \{-1,1\}$ to its
consumers.
We inspired the sharing scheme proposed in 
\cite{DBLP:conf/opodis/Delporte-GalletFGR07} to generate shares. Namely, first $n$ chooses \emph{randomly} a value $i$ such that $i\in \{0,1,...,k\}$. This value $i$ is not known by other nodes. 
Then it
generates $2i+1$ shares  $\mathcal{P}_n=\{p_1,p_2,...,p_{2i+1}\}$, where  
$p_j\in\{-1,1\}$ and $1\le j\le 2i+1$,  including: $i+1$ shares of value $v_n$,
and $i$ shares of opposite $v_n$'s value. 
The intuition of this creation is to regenerate the vote $v_n$ when the shares are summed. 
Later it randomly generates
 a permutation $\mu_n$ of $\mathcal{P}_n$, and sends shares to
$2i+1$ consumers. Lines \ref{mark1str_vot_proc}--\ref{mark1end_vot_proc} in Algorithm \ref{algo_polling} describe this activity. 
Node also receives  $|\mathcal{R}_n|$ shares from its producers. 
%We can target this strict number of nodes in the set of consumers and producers by the following approach: 
%For instance, specially, each node $n$ defines the set of consumers                                
%$\mathcal{S}_n=\{u_{(n+1) \bmod N},u_{(n+2) \bmod{N}},\ldots,u_{(n+2k+1) \bmod{N}} \}$ of size $2k+1$. It also infers  a  set of producers $\mathcal{R}_n = \{u_{(n-1)\bmod{N}},
%u_{(n-2)\bmod{N}},\ldots, u_{(n-2k-1)\bmod{N}}\}$ of size $2k+1$. 
Note that $\mathcal{S}_n$ and $\mathcal{R}_n$ might not be disjoint.\footnote{This distinguishes our protocol from approaches in \cite{DBLP:journals/computing/BenkaouzE13,DBLP:conf/srds/BenkaouzGEH13,DBLP:conf/trustcom/EnglertG13,DBLP:conf/opodis/GuerraouiHKM09,DBLP:journals/jpdc/GuerraouiHKMV12}. The set of consumers and producers in these approaches are separated for each of size $2k+1$.}

After each node collects   shares from its producers, and sums into
\textit{collected data} $c_n$ (lines \ref{mark2str_vot_proc}--\ref{mark2end_vot_proc}  in Algorithm \ref{algo_polling}),
this phase is over.
It is also noted that because the votes and their generating shares belong to the set $\{-1,+1\}$, nodes cannot distinguish between a vote and a share. Hence,  if a node opts a value $i=0$ and generates only $2i+1=1$ share, the dishonest consumer receiving a message from that node could not distinguish if such share was generated as a single one or it is one among  many generated shares of that node.

Figure \ref{protocol_figure} illustrates an example of the protocol for $i=k=1$. 
Figure \ref{fig:protocol_sf1} presents the network and the ordering of nodes w.r.t. source $A$ in the parentheses. Figure \ref{fig:protocol_sf2} depicts the sharing phase at node $A$. Node $A$ would like to vote $+1$, thus, it generates  a set of $2i+1=3$
shares $\{+1,-1,+1\}$. 
Node $A$ collects the shares from its producers and computes the collected
data $c_A=3$. 
\vspace{-1mm}

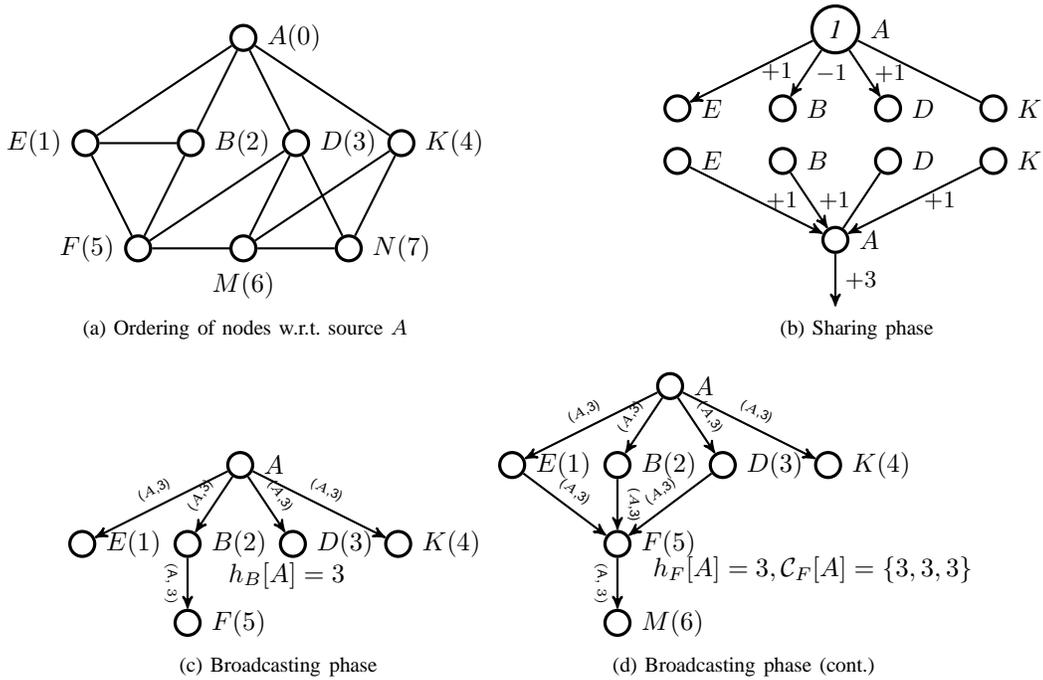
\begin{figure*}[t]
\vspace{-10mm}
\centering
\hspace*{\fill}
\subfloat[Ordering of nodes w.r.t. source $A$]{
\begin{tikzpicture} 
[>=stealth', thick, scale=0.7,  
nonterminal/.style={
	% The shape:
	circle,
	% The size:
	minimum size=3mm,
	% The border:
	very thick,
	draw,%=red!50!black!50, % 50% red and 50% black,
	% and that mixed with 50% white
	% The filling:
	%top color=white, % a shading that is white at the top...
	%bottom color=red!50!black!20, % and something else at the bottom
	% Font
	font=\itshape
},
terminal/.style={
% The shape:
circle,
minimum size=3mm,
% The rest
very thick,draw,%=blue!80!black!50,
%top color=white,bottom color=blue!50,
font=\sffamily}, 
every label/.style={thick}
]

\foreach \name/\value/\pos in {{A/$A(0)$/(0,0)}, {B/$B(2)$/(-1,-2)},{D/$D(3)$/(1,-2)},{K/$K(4)$/(3,-2)}}
	\node[terminal,label={right:\value}] (\name) at \pos {}  ;

\foreach \name/\value/\pos in {{M/$M(6)$/(0,-4)}}
	\node[terminal,label={below:\value}] (\name) at \pos {}  ;

\foreach \name/\value/\pos in {{F/$F(5)$/(-2,-4)},{E/$E(1)$/(-3,-2)}}
	\node[terminal,label={left:\value}] (\name) at \pos {}  ;

\foreach \name/\value/\pos in {{N/$N(7)$/(2,-4)}}
	\node[terminal,label={right:\value}] (\name) at \pos {}  ;

% create connection
\foreach \source/\dest  in {
A/E,A/B,A/D,A/K,E/B,E/F,D/M,D/N,K/N,B/F,D/F,F/M,M/N,M/K}                                   
	\path[-] (\source) edge node[font=\sffamily\small,auto,swap] {} (\dest) ;

\end{tikzpicture}
\label{fig:protocol_sf1}
}
\hfill
\subfloat[Sharing phase]{
\begin{tikzpicture} 
[>=stealth', thick, scale=0.7,  
nonterminal/.style={
	% The shape:
	circle,
	% The size:
	minimum size=3mm,
	% The border:
	very thick,
	draw,%=red!50!black!50, % 50% red and 50% black,
	% and that mixed with 50% white
	% The filling:
	%top color=white, % a shading that is white at the top...
	%bottom color=red!50!black!20, % and something else at the bottom
	% Font
	font=\itshape
},
terminal/.style={
% The shape:
circle,
minimum size=3mm,
% The rest
very thick,draw,%=blue!80!black!50,
%top color=white,bottom color=blue!50,
font=\sffamily}, 
every label/.style={thick}
]
 
\foreach \name/\pos/\value in {{E/(-3,-1)/},{B/(-1,-1)/},{D/(1,-1)/},{K/(3,-1)/}}
	\node[terminal,label=right:$\name$] (\name) at \pos {\value}  ;

\foreach \name/\pos/\value in {{A/(0,0.5)/1}}
	\node[nonterminal,label=right:$\name$] (\name) at \pos {\value}  ;

 \foreach \source/\dest/\text  in {A/E/+1,A/B/-1,A/D/+1}                                   
	\path[->] (\source) edge node[right,font=\sffamily\small] {$\text$} (\dest) ;
\foreach \source/\dest  in {A/K}                                  
	\path[-] (\source) edge (\dest) ;

%-----------------

\foreach \name/\pos/\value in {{E1/(-3,-2)/E},{B1/(-1,-2)/B},{D1/(1,-2)/D},{K1/(3,-2)/K}}
	\node[terminal,label=right:$\value$] (\name) at \pos {}  ;

\foreach \name/\pos/\value in {{A1/(0,-3.5)/A}}
	\node[nonterminal,label=right:$\value$] (\name) at \pos {}  ;
 \coordinate [below=0.7cm of A1] (x);

\foreach \source/\dest/\text  in {E1/A1/+1,B1/A1/+1,K1/A1/+1,A1/x/+3}                                   
	\path[->] (\source) edge node[right,style={,thick},font=\sffamily\small] {$\text$} (\dest) ;
\foreach \source/\dest  in {A1/D1}                                  
	\path[] (\source) edge (\dest) ;
% \foreach \source/\dest/\text  in {A1/x/+3}                                   
% 	\path[->] (\source) edge node[right,style={thick}] {$\text$} (\dest) ; 
% 	  		
\end{tikzpicture}
\label{fig:protocol_sf2}
}
\hspace*{\fill}
\\
\hspace*{\fill}
\subfloat[Broadcasting phase]{
\begin{tikzpicture} 
[>=stealth', thick, scale=0.7,  
nonterminal/.style={
	% The shape:
	circle,
	% The size:
	minimum size=3mm,
	% The border:
	very thick,
	draw,%=red!50!black!50, % 50% red and 50% black,
	% and that mixed with 50% white
	% The filling:
	%top color=white, % a shading that is white at the top...
	%bottom color=red!50!black!20, % and something else at the bottom
	% Font
	font=\itshape
},
terminal/.style={
% The shape:
circle,
minimum size=3mm,
% The rest
very thick,draw,%=blue!80!black!50,
%top color=white,bottom color=blue!50,
font=\sffamily}, 
every label/.style={thick}
]

\foreach \name/\pos/\value in {{E/(-3,-1)/{E(1)}},{B/(-1,-1)/{B(2)}},{D/(1,-1)/{D(3)}},{K/(3,-1)/{K(4)}},{F/(-1,-2.5)/{F(5)}}}
	\node[terminal,label=right:$\value$] (\name) at \pos {}  ;

\foreach \name/\pos in {{A/(0,0.5)}}
	\node[nonterminal,label=right:$\name$] (\name) at \pos {}  ;
%  \coordinate [below=0.7cm of B] (x);
 \coordinate [below right=0.4cm of B,label={right:${h_B[A]=3}$}] (y);

\foreach \source/\dest/\text in {A/E/{($A$,3)},A/B/{($A$,3)},A/D/{($A$,3)},A/K/{($A$,3)}}                                   
	\draw[->]
	  (\source) to node[above, sloped,font=\sffamily\tiny] {\text}  (\dest);	 		
 
 \foreach \source/\dest/\text  in {B/F/{($A$,3)}}                                   
   \path[->] (\source) edge node[below, sloped,font=\sffamily\tiny] {$\text$} (\dest) ; 
%  \foreach \source/\dest/\text  in {B/x/{h_B[A]=3}}                                   
%    \path[->] (\source) edge node[right,style={thick}] {$\text$} (\dest) ; 
\end{tikzpicture}
\label{fig:protocol_sf4}
}
\subfloat[Broadcasting phase (cont.)]{\begin{tikzpicture} 
[>=stealth', thick, scale=0.7,  
nonterminal/.style={
	% The shape:
	circle,
	% The size:
	minimum size=3mm,
	% The border:
	very thick,
	draw,%=red!50!black!50, % 50% red and 50% black,
	% and that mixed with 50% white
	% The filling:
	%top color=white, % a shading that is white at the top...
	%bottom color=red!50!black!20, % and something else at the bottom
	% Font
	font=\itshape
},
terminal/.style={
% The shape:
circle,
minimum size=3mm,
% The rest
very thick,draw,%=blue!80!black!50,
%top color=white,bottom color=blue!50,
font=\sffamily}, 
every label/.style={thick}
]

\foreach \name/\pos/\value in {{E/(-3,-1)/{E(1)}},{B/(-1,-1)/{B(2)}},{D/(1,-1)/{D(3)}},{K/(3,-1)/{K(4)}}, {F/(-1,-2.5)/{F(5)}},{M/(-1,-4)/{M(6)}}}
	\node[terminal,label=right:$\value$] (\name) at \pos {}  ;

\foreach \name/\pos in {{A/(0,0.5)}}
	\node[nonterminal,label=right:$\name$] (\name) at \pos {}  ;
%  \coordinate [below=0.7cm of F] (x);
 \coordinate [below right=0.3cm of F,label=right:${h_F[A]=3,\mathcal{C}_F[A]=\{3,3,3\}}$] (y);
 
\foreach \source/\dest/\text in {A/E/{($A$,3)},A/B/{($A$,3)},A/D/{($A$,3)},A/K/{($A$,3)},E/F/{($A$,3)},B/F/{($A$,3)},D/F/{($A$,3)}}                                   
	\draw[->]
	  (\source) to node[above, sloped,font=\sffamily\tiny] {\text}  (\dest);	 		
 
 \foreach \source/\dest/\text  in {F/M/{($A$,3)}}                                   
   \path[->] (\source) edge node[below, sloped,font=\sffamily\tiny] {$\text$} (\dest) ; 
%  \foreach \source/\dest/\text  in {F/x/{h_F[A]=3,\mathcal{C}_F[A]=\{3,3,3\}}}                                   
%    \path[->] (\source) edge node[right,style={thick}] {$\text$} (\dest) ; 
\end{tikzpicture}
\label{fig:protocol_sf5}
}
\hspace*{\fill}
\caption{{Polling algorithm for $i=k=1$ and $m=3$.}}
\label{protocol_figure}
 \vspace{-7mm}
\end{figure*}  

\noindent\textbf{Broadcasting.}   In this phase, each node needs to disseminate its collected data to all other nodes in such a way that each other node eventually obtains that correct data. 
In the naive approach, upon receiving a message from the neighbor, a node stores the data then forwards it on every other edge.
% ; and upon receiving the message with duplicated value, it discards and does nothing. 
Despite the use of richer social graph structures, and with the presence of dishonest nodes which can corrupt data, one node can receive/send so many duplicated messages (which may be passed by many paths) from/to other nodes. This leads to flooding the local storage.
As motivated in Section \ref{subsec:graph_model}, we propose a method for efficiently broadcasting messages by using our concept of the $m$-broadcasting property. For a graph satisfying the $m$-broadcasting property, each node $n$ first
 sends its collected data to all neighbors (lines \ref{mark1st_br_proc}--\ref{mark1end_br_proc}).
%--\ref{mark1end_br_proc}). 
Then, upon receipt of the message  containing the collected data of source $s$ from neighbor $r$ preceding it in the ordering (w.r.t. source $s$), node $n$ does one of the following activities:
\vspace{-2mm}
\begin{itemize}
\item $r=s$: It decides on the data of source $s$ by storing the value $c_s$ in  $h_n[s]$. When the value $h_n[s]$ is assigned, it is further forwarded  to all $d_n-\beta_n(s)$ nodes succeeding it in the ordering (lines \ref{mark2st_br_proc}--\ref{mark2end_br_proc}).  

In Fig. \ref{fig:protocol_sf4}, node $A$ broadcasts its data, then  $B$ receives, stores that data  in $h_B[A]$, and forwards it to $F$.

\item $r\ne s$: To avoid the case that value $h_n[s]$ might be calculated (and broadcast) twice for the direct neighbor $s$, node $n$ only considers the case  $r\ne s \wedge s\notin\Gamma(n)$. If this condition is satisfied, it adds the value $c_s$ to the multiset $\mathcal{C}_n[s]$ of possible collected data for  $s$ (line \ref{mark3st_br_proc}). When node $n$ has received the expected number $m$ of possible collected data for a given source $s$, it decides on the collected data by choosing the most represented value in $\mathcal{C}_n[s]$ and puts it in $h_n[s]$. (Since the decision is based on the most represented value, instead of waiting for receiving all $m$ forwarded data, node  $n$ can decide the source's data upon receipt of more than $m/2$ identical data.)  Node $n$ then further forwards the data to all $d_n-\beta_n(s)$ nodes succeeding it (lines \ref{mark4st_br_proc}--\ref{mark4end_br_proc}). 

Fig. \ref{fig:protocol_sf5} depicts node $F$ receives messages from its neighbors about the data of source $A$. It has four friends, but  receives only $m=3$ messages from preceding neighbors $E,B,D$.  As all values in $\mathcal{C}_F[A]$ are 3, node $F$ decides that value as the collected data of $A$ and stores it in $h_F[A]$. It then forwards that data to its succeeding node  $M$. 
\vspace{-1mm}
\end{itemize}
When a node decides the collected data of $s$ and has no succeeding friend, the value $h_n[s]$ is no longer forwarded.
This phase is complete if a node decides on the collected data of all other ones in the network.

% \medskip
\noindent\textbf{Aggregating.}
The final result is obtained at each node by simply 
%summing all deciding collected data in the set $h_n$ and its own collected data: 
doing this computation:
$\result=c_n+\sum_{j\ne n}h_n[j]$ (lines \ref{mark1st_aggr_proc}--\ref{mark1end_aggr_proc}).
\vspace{-3mm}

\section{Correctness and complexity analysis}\label{sec:correctness}
% \vspace{-1mm}
In this section, we   present the correctness and complexity analysis of our protocol when deployed on the graphs %of $\mathcal{G}_1$ and $\mathcal{G}_2$
 without and with the presence of dishonest nodes.

\subsection{Protocol and graph without dishonest nodes}\label{sec:protocol_without_dishonest} 
% \vspace{-1mm}
We first analyze 
the correctness (including accuracy and termination) of our protocol for the graphs of $\mathcal{G}_1$ in which all participants are honest in Theorem \ref{lem:accuracy_idealcase}. Then we analyze the spatial and message
complexities in  Propositions 1 and 2 respectively. 
% Due to space limitation, all proofs are given in Appendix \ref{appxsub:correctness_perfect}.  

% \medskip\noindent\textbf{Correctness} \label{subsec:correctness_perfect_case}

\begin{theorem}[Correctness] \label{lem:accuracy_idealcase}
Consider a polling system of size $N$ with only honest nodes where each node $n$  expresses a vote $v_n$.  The polling algorithm is guaranteed to eventually terminate and each node  outputs $\sum_{n=0}^{N-1} v_n$.
\vspace{-2mm}
\end{theorem}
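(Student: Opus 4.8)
The plan is to establish two things separately: termination (every honest node eventually decides the collected data of every other node and exits all loops) and accuracy (the value each node outputs equals $\sum_{n=0}^{N-1} v_n$). Since $\mathcal{G}_1$ contains only honest nodes, Property $P_{g_3}$ is vacuous and I only use $P_{g_1}$ and $P_{g_2}$.

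First I would argue termination of the \textsc{Sharing} procedure. By $P_{g_1}$ each node has $|\mathcal{S}_n| = 2i+1$ consumers and $|\mathcal{R}_n| \le 2k+1$ producers, all of which are honest, so each node sends exactly $2i+1$ SHARE messages and, since every node does likewise and the channels eventually deliver (asynchronous but reliable in this section), each node receives exactly $|\mathcal{R}_n|$ SHARE messages; the \texttt{while} loop condition $count < |\mathcal{R}_n|$ is thus eventually falsified and $c_n = \sum_{r \in \mathcal{R}_n} \mu_r[\text{index of }n]$ is well defined. Next, termination of \textsc{Broadcasting}: here I would do induction on the position of a node $n$ in the topological ordering w.r.t. a fixed source $s$ (the ordering exists by $P_{g_2}$). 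Base case: nodes directly adjacent to $s$ receive (DATA, $s$, $c_s$) directly from $s$ and set $h_n[s] \leftarrow c_s$. Inductive step: a node $n$ not adjacent to $s$ has, by the $m$-broadcasting property, at least $m$ neighbors preceding it in the ordering; by the induction hypothesis each such predecessor eventually decides $h[s]$ and forwards (DATA, $s$, $\cdot$) to all its succeeding neighbors, in particular to $n$; hence $n$ receives at least $m$ such messages, $|\mathcal{C}_n[s]| = m$ is reached, and $n$ decides $h_n[s]$. Summing over all $N-1$ sources $s \neq n$, the outer \texttt{while} loop with condition $count < N-1$ terminates. Finally \textsc{Aggregating} is a bounded \texttt{for} loop, so the whole algorithm terminates.

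For accuracy, the key sub-claim is that $h_n[s] = c_s$ for every honest $n$ and every source $s$. Since all nodes are honest, every DATA message is forwarded verbatim, so by the same induction on the ordering every value placed in $\mathcal{C}_n[s]$ equals $c_s$; therefore the most-represented value returned by \textsc{Decide} is $c_s$, and likewise for the $r=s$ branch $h_n[s] = c_s$ directly. Then in \textsc{Aggregating} node $n$ computes $\mathrm{result} = c_n + \sum_{j \neq n} h_n[j] = \sum_{j=0}^{N-1} c_j$. It remains to show $\sum_{j=0}^{N-1} c_j = \sum_{j=0}^{N-1} v_j$: by construction in \textsc{Sharing}, node $j$ emits a permutation $\mu_j$ of the multiset $\mathcal{P}_j$ consisting of $i_j+1$ copies of $v_j$ and $i_j$ copies of $-v_j$, whose total is $v_j$; each share is sent to exactly one (distinct) consumer and every share a node holds in $c_n$ comes from exactly one producer, so $\sum_n c_n = \sum_n \sum_{r \in \mathcal{R}_n} (\text{share } r \text{ sent to } n) = \sum_j \sum_{p \in \mathcal{P}_j} p = \sum_j v_j$, using that the consumer-producer relation is a handshake-consistent bijection between emitted and received shares. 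Combining, each node outputs $\sum_{n=0}^{N-1} v_n$.

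The main obstacle is the termination induction for \textsc{Broadcasting}: one must carefully justify that the topological ordering from $P_{g_2}$ guarantees at least $m$ preceding neighbors for every non-source-adjacent node (so that $|\mathcal{C}_n[s]|$ actually reaches $m$ rather than stalling below it), and that a predecessor forwards to \emph{all} succeeding neighbors including $n$ — i.e., that "succeeding in the ordering" is the exact complement of "preceding," so no forwarding edge is missed. A secondary subtlety is the bookkeeping that $c_n$ counts each received share once and that consumers assigned by distinct producers do not collide in a way that double-counts or drops shares; this follows from the three-way handshake establishing $\mathcal{S}_n$ and $\mathcal{R}_n$, but should be stated explicitly.
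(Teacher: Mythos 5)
Your proposal is correct and follows essentially the same route as the paper's proof: it splits the claim into termination and accuracy, uses the fact that each node's shares sum to $v_n$ and that the multiset of emitted shares coincides with the multiset of received ones to get $\sum_j c_j=\sum_j v_j$, and argues that broadcasting delivers every $c_s$ to every node. Your explicit strong induction on the topological ordering (base case: neighbors of $s$; inductive step: $m$ deciding predecessors forward to $n$) is just a more careful rendering of the paper's informal statement that direct neighbors receive one message and indirect neighbors receive $m$ forwarded messages, so no new idea is involved.
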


\begin{proof}[Proof (Accuracy)]
In the sharing phase, each node $n$ sends a set of shares
$\mathcal{P}_n=\{p_{n_1},p_{n_2},...,p_{n_{2i+1}}\}$ to its  consumers where
 $\sum p_{n_j}=(i+1)\ldotp v_n+i\ldotp(-v_n)=v_n$, and also receives a set of shares
$\{p'_{n_1},p'_{n_2},...,p'_{n_l}\}$ from its producers ($l=|\mathcal{R}_n|$) to obtain a collected data of value $c_n = \sum_{j=1}^{l}p'_{n_j}$.
With the assumption there is no dishonest node and without crash or
message
loss, each message from the source successfully reaches the destination, and
thus the set of all sending shares of all nodes will be
exactly coincided with the set of all receiving shares of all nodes, namely:
$\bigcup_{V} \{p_{n_1},p_{n_2},...,p_{n_{2i+1}}\}
=\bigcup_{V}\{p'_{n_1},p'_{n_2},...,p'_{n_{l}}\} $.

In the broadcasting phase, each node $n$ broadcasts its collected data to their neighbors, then they do honestly forward that value to neighbors of neighbors of $n$ (succeeding it in the ordering w.r.t. source $n$) and so on. 
The messages   are finally received by all direct and indirect neighbors. It infers an array $h_n$ contains all
collected data of all nodes in the system. Consequently, the final computation gives us
the value:
% \vspace{-2mm}
% \[
% \begin{split}
$\result=c_n+\sum_{\substack{0\le j<N, j\neq n}}h_n[j]
=\sum_{j=0}^{N-1}{c_j}  
=\sum_{j=0}^{N-1}{\sum_{t=1}^{l}p'_{j_t}  }
=\sum_{j=0}^{N-1}{\sum_{t=1}^{2i+1}p_{j_t}}
=\sum_{j=0}^{N-1} v_j.
% \end{split}
% \]
$

\noindent\emph{Proof (Termination).}
In the sharing phase, each node has to receive a finite number ($|\mathcal{R}|$)
of messages. 
In the broadcasting phase, for a source $s$, each direct neighbor of $s$ receives only one message, and each other indirect neighbor receives $m$ messages. 
Since every node sends the required number of messages and they are eventually arrived to destination, each phase completes. 
As the protocol has a limited number of phases, it is ensured to
eventually terminate.
\end{proof}

%---------------------------------------------
% \medskip\noindent\textbf{Complexity}

\begin{prop}[Spatial complexity]\label{propos:spatial_ideal}
The total space each node must hold is  $\mathcal{O}(k+m.N)$. 
\vspace{-2mm}
\end{prop}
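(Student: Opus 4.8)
The plan is to account for every variable that a node $n$ maintains during a run of Algorithm~\ref{algo_polling}, and to bound the space each one requires. There are essentially four contributions: the scalar $c_n$ (the collected data), the array $h_n$ of final decided values indexed by the $N$ possible sources, the multiset-valued array $\mathcal{C}_n$ again indexed by sources, and the per-source bookkeeping implicit in tracking $\beta_n(s)$ and the ordering. First I would observe that $c_n$ is a single integer bounded in absolute value by $|\mathcal{R}_n|\le 2k+1$, so it costs $\mathcal{O}(\log k)$ bits, i.e.\ $\mathcal{O}(1)$ words or, being generous about the privacy parameter, $\mathcal{O}(k)$ at most; similarly each entry $h_n[s]$ is a single integer bounded by $2k+1$ in magnitude, and there are $N$ of them, giving $\mathcal{O}(N)$ (or $\mathcal{O}(k+N)$ if one counts the word size).

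The crux is bounding $\sum_{s} |\mathcal{C}_n[s]|$. Here I would invoke Property~$P_{g_2}$ (the $m$-broadcasting property) directly: for each source $s$ such that $s\notin\Gamma(n)$, line~\ref{mark4st_br_proc} shows that node $n$ stops adding to $\mathcal{C}_n[s]$ as soon as $|\mathcal{C}_n[s]| = m$, and for sources $s\in\Gamma(n)$ the multiset $\mathcal{C}_n[s]$ is never populated at all (the algorithm only reaches line~\ref{mark3st_br_proc} when $s\notin\Gamma(n)$), while the direct-source case $r=s$ stores a single value in $h_n[s]$ and nothing in $\mathcal{C}_n[s]$. Hence each of the at most $N$ multisets has size at most $m$, and the total space for $\mathcal{C}_n$ is $\mathcal{O}(m\cdot N)$. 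Adding the $\mathcal{O}(k)$ for the input/share-related scalars and the $\mathcal{O}(N)$ for $h_n$, the dominant term is $\mathcal{O}(k + m\cdot N)$, which is the claimed bound.

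I would then note the minor remaining items — the neighbor set $\Gamma(n)$ of size $d_n$, the consumer/producer subsets $\mathcal{S}_n,\mathcal{R}_n$ each of size at most $2k+1$, and the counters — all fit within $\mathcal{O}(d_n + k)$, and since $d_n \le N$ and this is already absorbed into $\mathcal{O}(mN + k)$ (note $m\ge 1$), they do not change the asymptotics. The argument is essentially a careful bookkeeping exercise rather than a deep one.

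The main obstacle, such as it is, is making sure nothing is silently $\Theta(N^2)$: the naive broadcast would have each $\mathcal{C}_n[s]$ grow with the number of distinct paths from $s$ to $n$, which can be superlinear, so the entire point is that the $m$-broadcasting property caps each list at $m$ and that a node never needs to retain more than the $N$ decided values in $h_n$. I would therefore be explicit that the algorithm does not store the raw message history — once $|\mathcal{C}_n[s]|$ reaches $m$, the value $h_n[s]$ is computed and any further DATA messages for source $s$ received by $n$ are simply ignored (they fail the implicit ``first time'' / still-collecting guard), so no unbounded accumulation occurs. With that observation in place the bound $\mathcal{O}(k + m\cdot N)$ follows immediately.
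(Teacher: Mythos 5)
Your proof is correct and follows essentially the same route as the paper's: a direct accounting of the node's stored state (consumer/producer sets of size $\mathcal{O}(k)$, the neighbor list, the $N$ node identities and $h_n$ entries, and the per-source multisets $\mathcal{C}_n[s]$ each capped at $m$), summing to $\mathcal{O}(k+m\cdot N)$. Your added remarks on why each $\mathcal{C}_n[s]$ stops at $m$ and on absorbing $d_n\le N$ into the $\mathcal{O}(mN)$ term only make explicit what the paper's bookkeeping leaves implicit.
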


\begin{proof}
Each node $n$  maintains a set of producers and consumers (at most $2(2k+1)$), a list of $d_n$ direct neighbors, a set of $N$ identities of nodes in the systems, a set of $m$  possible collected data for each of source $s$,  a set $h_n$ to store the deciding collected data. 
Therefore, the spatial complexity is $\mathcal{O}(k)+\mathcal{O}(d_n)+\mathcal{O}(N)+\mathcal{O}(m.(N-1))+\mathcal{O}(N-1)=\mathcal{O}(k+m.N)$.
\end{proof}

\begin{prop}[Message complexity]\label{propos:msg_ideal}
The  number of messages sent by a node $n$ is  $\mathcal{O}(k+N.(d_n-m))$.
\vspace{-2mm}
\end{prop}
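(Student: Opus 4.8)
The plan is to count, phase by phase, the messages that node $n$ transmits during one execution of Algorithm~\ref{algo_polling}, and to bound each contribution separately.

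First I would handle the \emph{sharing} phase: node $n$ sends exactly one \textsc{share} message to each of its consumers, and by Property~$P_{g_1}$ we have $|\mathcal{S}_n| = 2i+1 \le 2k+1$, so this phase costs $\mathcal{O}(k)$ messages. The $|\mathcal{R}_n|$ shares it processes in this phase are \emph{received}, not sent, so they do not enter the count. The \emph{aggregating} phase is purely local and sends nothing, so the whole burden is the \emph{broadcasting} phase. There, $n$ first sends its own collected data $c_n$ to every neighbor, which is $d_n = \mathcal{O}(d_n)$ messages. Afterwards, for each of the $N-1$ sources $s \neq n$, I claim $n$ forwards the \textsc{data} message of $s$ \emph{at most once}: the entry $h_n[s]$ is assigned a single time — either the first time the message arrives directly from $r=s$, or the one time $|\mathcal{C}_n[s]|$ reaches $m$ — and forwarding happens only at that moment; the two branches are mutually exclusive because of the guard $s \notin \Gamma(n)$ and because of the ``first time'' reception condition. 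When it does forward, $n$ sends to the $d_n - \beta_n(s)$ neighbors succeeding it in the ordering w.r.t. $s$.

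The remaining step is to bound $d_n - \beta_n(s)$ by case analysis. If $s \notin \Gamma(n)$, then Property~$P_{g_2}$ (the $m$-broadcasting property) guarantees that $n$ is adjacent to at least $m$ nodes preceding it in the $s$-ordering, i.e.\ $\beta_n(s) \ge m$, so $n$ forwards to at most $d_n - m$ neighbors; summed over the at most $N-1$ such sources this gives $\mathcal{O}\!\left(N(d_n-m)\right)$. If instead $s \in \Gamma(n)$, then $n$ forwards to at most $d_n$ neighbors, and there are at most $d_n$ such sources, contributing $\mathcal{O}(d_n^{\,2})$. Adding the three phases, node $n$ sends $\mathcal{O}(k) + \mathcal{O}(d_n) + \mathcal{O}\!\left(N(d_n-m)\right) + \mathcal{O}(d_n^{\,2})$ messages; since $d_n \le N-1$, both $d_n$ and $d_n^{\,2}$ are $\mathcal{O}(Nd_n)$ and are absorbed into $\mathcal{O}\!\left(N(d_n-m)\right)$ in the regime of interest, yielding the stated bound $\mathcal{O}\!\left(k + N(d_n-m)\right)$.

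The main obstacle is the inequality $\beta_n(s) \ge m$ for every non-neighbor source $s$: this is precisely the content of the $m$-broadcasting property — a node not adjacent to the source must, by definition of an admissible topological ordering w.r.t. that source, be adjacent to at least $m$ of its predecessors — so once this observation is made explicit, everything else is bookkeeping. A secondary subtlety worth stating carefully is that each \textsc{data} message is forwarded only once per source, which I would justify from the single assignment of $h_n[s]$, the ``receive $\dots$ in the first time'' guard, and the disjointness of the $r=s$ and $r\neq s \wedge s\notin\Gamma(n)$ branches in the broadcasting procedure.
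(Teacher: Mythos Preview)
Your argument is correct and follows the same phase-by-phase accounting as the paper's proof. The paper is terser: it simply asserts ``forwards at most $d_n-m$ messages when receiving collected data of each source $s$'' and sums $\mathcal{O}(k)+\mathcal{O}(d_n)+\mathcal{O}((N-1)(d_n-m))$, without separating the cases $s\in\Gamma(n)$ and $s\notin\Gamma(n)$ or justifying $\beta_n(s)\ge m$ from~$P_{g_2}$. Your case split is in fact more honest, since the $m$-broadcasting property only guarantees $\beta_n(s)\ge m$ for non-neighbors of~$s$; the paper silently treats all sources uniformly. Your final absorption of the $\mathcal{O}(d_n^2)$ term into $\mathcal{O}(N(d_n-m))$ is hedged (``in the regime of interest''), and rightly so---it fails when $d_n=m$---but the paper's own simplification $\mathcal{O}(d_n)=\mathcal{O}(N(d_n-m))$ has the same defect, so you are matching its level of rigor.
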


\begin{proof}
In the sharing phase, node $n$ sends at most $2k+1$ messages. In the broadcasting phase, it sends $d_n$ messages containing its collected data, and forwards at most $d_n-m$ messages when receiving collected data of each source $s$ from its neighbors.    Thus, the  message complexity is
$\mathcal{O}(k)+\mathcal{O}(d_n)+\mathcal{O}((N-1).(d_n-m))=\mathcal{O}(k+N.(d_n-m))$.
\end{proof}
\vspace{-3mm}

\subsection{Protocol and graph with dishonest nodes}\label{sec:protocol_with_dishonest}
% \vspace{-1mm}

% \noindent
In this section, we study the impact of dishonest nodes on privacy and accuracy when  
EPol is deployed on  graphs of $\mathcal{G}_2$.  
% We present the proofs in Appendix \ref{appxsub:correctness_normal}.

% \medskip
\noindent\textbf{Privacy analysis}

% \medskip
\noindent We denote by $\gamma_i$ the proportion of nodes voting with $2i+1$ shares in the sharing phase, where $0\le i\le k$ and $\sum_{i=0}^{k}\gamma_i=1$.
% A node vote could be revealed with: (i) certainty if the dishonest nodes are sure about this disclosing, or (ii) uncertainty when the dishonest nodes try to reveal a vote without sureness. 
Assuming $D\le (m-1)\Delta_G/2$ (presented later in Theorem \ref{theo:max_tolerance}),                                
we consider two cases for disclosing a node's vote as follows.

% \medskip
\noindent\textbf{Vote disclosing with certainty.}
We discuss the case when the vote of a given node may be disclosed with certainty in  the following theorem.

% We discuss the case when the vote of a given node may be disclosed with certainty in two cases: (i) all nodes send $2k+1$ shares, and (ii) nodes can send any odd number $2i+1$ shares where $0\le i\le k$ (by generating $i+1$ shares of its desired vote and $i$ opposite shares). We do this by modifying line 1 in Algorithm \ref{algo_polling} from  $\Share(v_n,\mathcal{S}_n,k)$ to $\Share(v_n,\mathcal{S}_n,i)$. 
% It is noted that because the votes and their shares belong to the set $\{-1,+1\}$, nodes cannot distinguish between a vote and its shares. Hence,  for a user who generates only one share ($i=0$), the dishonest consumer receiving a message from that node could not distinguish if such share was generated as a single one or it is one among the previously generated $2i+1$ shares ($i>0$). 
% We denote by $\gamma_i$ the proportion of nodes voting with $2i+1$ shares in the sharing phase where $0\le i\le k$ and $\sum_{i=0}^{k}\gamma_i=1$.

%\begin{theorem}[Privacy]\label{theo:privacy_normal}
%If all nodes send $2k+1$ shares in the sharing phase, then the probability that a coalition of $D$ dishonest nodes reveals the vote of such a node 
%is bounded by $ \bigl(\frac{D}{N}\bigr)^{k+1}$.
%\end{theorem}
%
%\begin{proof}
%The coalition reveals vote $v$ of a node if and only if $k+1$ consumers receiving $k+1$ identical shares of value $v$ belong to the dishonest coalition. This event occurs with probability ${\binom{D}{k+1}}/{\binom{N}{k+1}}\le \bigl(\frac{D}{N}\bigr)^{k+1}$.
%\end{proof}

\begin{theorem}[Certain Privacy]\label{theo:privacy_generalcase}
Assume a coalition of $D$ dishonest nodes knows the number  of shares sent by a  node.
The probability $P_{ce}$  this coalition reveals correctly with certainty a vote of the honest node voting with $2i+1$ shares  ($0\le i\le k$)
is at most $\gamma_i  \bigl(\frac{D}{N}\bigr)^{i+1} $.
\vspace{-3mm}
\end{theorem}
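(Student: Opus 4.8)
The plan is to reason about what the dishonest coalition needs in order to reconstruct the vote $v_n$ of an honest node $n$ that generated $2i+1$ shares. Since the shares are distributed to $|\mathcal{S}_n| = 2i+1$ consumers and the coalition knows this number, the coalition can pin down $v_n$ with certainty exactly when it receives \emph{enough} of these shares to determine the majority value among them. Recall the sharing scheme: among the $2i+1$ shares, $i+1$ carry the value $v_n$ and $i$ carry $-v_n$. If the coalition holds at most $i$ shares, the multiset it sees is consistent with either choice of $v_n$ (any $j \le i$ shares of a given sign could be the "minority" $i$ shares of a vote of the opposite sign, or part of the "majority" of a vote of that sign), so no certain inference is possible. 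Hence the coalition learns $v_n$ with certainty only if it controls \emph{at least $i+1$} of the $2i+1$ consumers of $n$ — in fact it then sees at least $i+1$ shares, a strict majority, and the majority sign is forced to be $v_n$.

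**Next I would** bound the probability of this event. By Property $P_{g_1}$ the consumers of $n$ are chosen among $n$'s neighbors, and — following the modelling assumption inherited from \cite{DBLP:conf/opodis/GuerraouiHKM09,DBLP:journals/jpdc/GuerraouiHKMV12} that the identity of honest versus dishonest nodes is not correlated with the graph structure from $n$'s viewpoint — each consumer is dishonest independently with probability at most $D/N$. The event "the coalition can reveal $v_n$ with certainty" requires at least $i+1$ of the $2i+1$ consumers to lie in $\mathcal{D}$. The dominant (largest) term of this tail, and a valid upper bound up to the combinatorial factor which we absorb, is the probability that a \emph{specific} set of $i+1$ consumers are all dishonest, namely $(D/N)^{i+1}$. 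Finally, multiplying by $\gamma_i$, the proportion of nodes that chose to vote with $2i+1$ shares, gives the contribution $\gamma_i (D/N)^{i+1}$ to the overall certain-disclosure probability $P_{ce}$, which is the claimed bound.

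**The main obstacle** I expect is making precise — and honest — the step from "at least $i+1$ of $2i+1$ consumers are dishonest" to the clean bound $\gamma_i(D/N)^{i+1}$: one must argue that the binomial tail $\sum_{t=i+1}^{2i+1}\binom{2i+1}{t}(D/N)^t(1-D/N)^{2i+1-t}$ is dominated by (a constant multiple of) its leading term $(D/N)^{i+1}$, or else the authors are implicitly taking the coalition's "best case" to be that it holds \emph{exactly} the decisive set and treating $(D/N)^{i+1}$ as the probability that a fixed minimal decisive set is captured. I would follow the latter, simpler reading, consistent with how the analogous bound $(D/N)^{k+1}$ is stated for DPol: the probability that the coalition is positioned to force the value is at most the probability that some fixed set of $i+1$ of $n$'s consumers is entirely dishonest, which is $(D/N)^{i+1}$; weighting by $\gamma_i$ yields the theorem. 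A secondary subtlety is confirming that holding $i+1$ shares is genuinely \emph{sufficient} for certainty — this uses that the multiset of the coalition's shares then contains strictly more than half of a set whose majority value is $v_n$, so no alternative vote is consistent — and that holding $\le i$ is \emph{insufficient}, which together justify that $i+1$ is the exact threshold.
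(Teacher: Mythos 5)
There is a genuine gap in your characterization of the disclosure event, and it matters for the bound. You claim that controlling \emph{any} $i+1$ of the $2i+1$ consumers suffices for certainty because ``the majority sign is forced to be $v_n$.'' This is false: the majority of an observed subset need not be the majority of the full multiset. For example, with $i=2$ (three shares of $v$, two of $-v$), a coalition holding the consumers that received $\{v,-v,-v\}$ sees a strict majority of observed shares equal to $-v$, yet the observation is also consistent with a vote of $-v$ (which generates three $-v$ and two $v$ shares), so no certain inference is possible. The correct threshold condition, and the one the paper uses, is that the coalition receives $i+1$ \emph{identical} shares; since at most $i$ shares carry $-v_n$, this forces the repeated value to be $v_n$, and since there are exactly $i+1$ shares of value $v_n$, it forces the coalition to hold precisely the $i+1$ consumers to whom those shares were sent. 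So the event is the capture of one \emph{fixed} set of $i+1$ consumers (determined by the share assignment), not the tail event ``at least $i+1$ consumers are dishonest.''

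This distinction is not cosmetic. For your event, the probability is the tail $\sum_{t=i+1}^{2i+1}\binom{2i+1}{t}\bigl(\frac{D}{N}\bigr)^{t}\bigl(1-\frac{D}{N}\bigr)^{2i+1-t}$, which is \emph{not} bounded by $\bigl(\frac{D}{N}\bigr)^{i+1}$ in general (take $i=1$, $D/N=0.2$: the tail is about $0.104$ versus $0.04$), so the ``absorb the combinatorial factor'' step fails and your alternative ``simpler reading'' would be an unjustified weakening of a larger event to a smaller one. The paper avoids this entirely: because the event is the capture of one specific set of $i+1$ consumers, its probability is the hypergeometric ratio $\binom{D}{i+1}/\binom{N}{i+1}=\frac{D(D-1)\cdots(D-i)}{N(N-1)\cdots(N-i)}\le\bigl(\frac{D}{N}\bigr)^{i+1}$, and multiplying by $\gamma_i$ gives the theorem with no tail estimate needed. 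Your necessity argument (holding at most $i$ shares is always ambiguous) is fine; the fix is to replace your sufficiency claim by the ``$i+1$ identical shares'' criterion and note that this pins down a unique decisive set of consumers.
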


\begin{proof}
The coalition reveals correctly a node's vote $v$ if and only if $i+1$ consumers receiving $i+1$ identical shares of value $v$ belong to the dishonest coalition. There are a proportion $\gamma_i$ of these nodes. Thus, that event occurs with probability $P_{ce}=\gamma_i {\binom{D}{i+1}}/{\binom{N}{i+1}}\le \gamma_i  \bigl(\frac{D}{N}\bigr)^{i+1}$.
%The proofs is yielded from the probability that a coalition  reveals the vote of a node sending  $2i+1$ shares is bounded by $f(i)=\gamma_i \bigl(\frac{D}{N}\bigr)^{i+1}$. 
\end{proof}

\begin{corollary} \label{cor:prob_reveal_certainty}
If all nodes send $2k+1$ shares in the sharing phase, then the probability that a coalition of $D$ dishonest nodes reveals correctly with certainty an honest node's vote 
is at most $ \bigl(\frac{D}{N}\bigr)^{k+1}$.
\vspace{-2mm}
\end{corollary}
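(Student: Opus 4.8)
The plan is to obtain this as an immediate specialization of Theorem~\ref{theo:privacy_generalcase}. The hypothesis that every node sends $2k+1$ shares forces each node to pick the value $i=k$ in the sharing phase; hence, in the notation of the privacy analysis, $\gamma_k=1$ and $\gamma_i=0$ for all $0\le i<k$ (recall that $\sum_{i=0}^{k}\gamma_i=1$). Since the network now contains only nodes of this single ``type'', the probability that the coalition reveals some honest node's vote with certainty is governed by a single term rather than by a sum over $i$.

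First I would recall the disclosure condition established in the proof of Theorem~\ref{theo:privacy_generalcase}: the coalition learns a node's vote $v$ with certainty exactly when the $k+1$ consumers receiving the $k+1$ identical shares of value $v$ all belong to $\mathcal{D}$. Under the (uniform) assignment of the $2k+1$ consumers among the $N$ nodes, this event occurs with probability $\binom{D}{k+1}/\binom{N}{k+1}$. Then I would use the elementary bound
\[
\frac{\binom{D}{k+1}}{\binom{N}{k+1}}=\prod_{j=0}^{k}\frac{D-j}{N-j}\le\Bigl(\tfrac{D}{N}\Bigr)^{k+1},
\]
which holds because $\frac{D-j}{N-j}\le\frac{D}{N}$ for $D\le N$ and $0\le j\le k$.

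Putting these together, and invoking Theorem~\ref{theo:privacy_generalcase} with $i=k$, the probability of certain disclosure is at most $\gamma_k\bigl(\tfrac{D}{N}\bigr)^{k+1}=\bigl(\tfrac{D}{N}\bigr)^{k+1}$, as claimed. There is no genuine obstacle here: the only points to verify are that ``all nodes send $2k+1$ shares'' indeed pins down $i=k$ for every node (so that $\gamma_k=1$) and that the stated inequality on the ratio of binomial coefficients holds; the substantive content is already carried by Theorem~\ref{theo:privacy_generalcase}.
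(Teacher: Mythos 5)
Your proposal is correct and follows the same route as the paper: the paper also derives the corollary directly from Theorem~\ref{theo:privacy_generalcase}, since all nodes sending $2k+1$ shares means $\gamma_k=1$ (and $\gamma_i=0$ for $i<k$), giving the bound $\gamma_k\bigl(\tfrac{D}{N}\bigr)^{k+1}=\bigl(\tfrac{D}{N}\bigr)^{k+1}$. Your extra detail on the binomial ratio $\binom{D}{k+1}/\binom{N}{k+1}\le\bigl(\tfrac{D}{N}\bigr)^{k+1}$ simply re-verifies a step already contained in the theorem's proof, so there is no substantive difference.
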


\begin{proof}
The claim is followed by   Theorem \ref{theo:privacy_generalcase}.
\vspace{-1mm}
\end{proof}

We plot the bound of $P_{ce}$ as a function $f(\gamma_i,i)=\bigl(\frac{D}{N}\bigr)^{i+1}$ in Fig. \ref{fig:upperbound_privacy} for $k=3$, $N=100$ and $D=20$. We see that  $P_{ce}$ increases  with the increase of  $\gamma_i$ and the decrease of $i$. 
Thus, we  get the maximum privacy   when all nodes generate $2k+1$ shares, and the minimum privacy when all nodes generate only one share.
\vspace{-1mm}

\begin{figure*}[t]
\vspace{-8mm}
 \centering
% 	\subfloat[Certainty]{\includegraphics[scale=0.20,angle=-90]{Figures/certain_reveal_vary_gamma_D.eps}
% \label{fig:certain_reveal_vary_gamma_D}}
 	\subfloat[Certainty]{\includegraphics[scale=0.25,angle=-90]{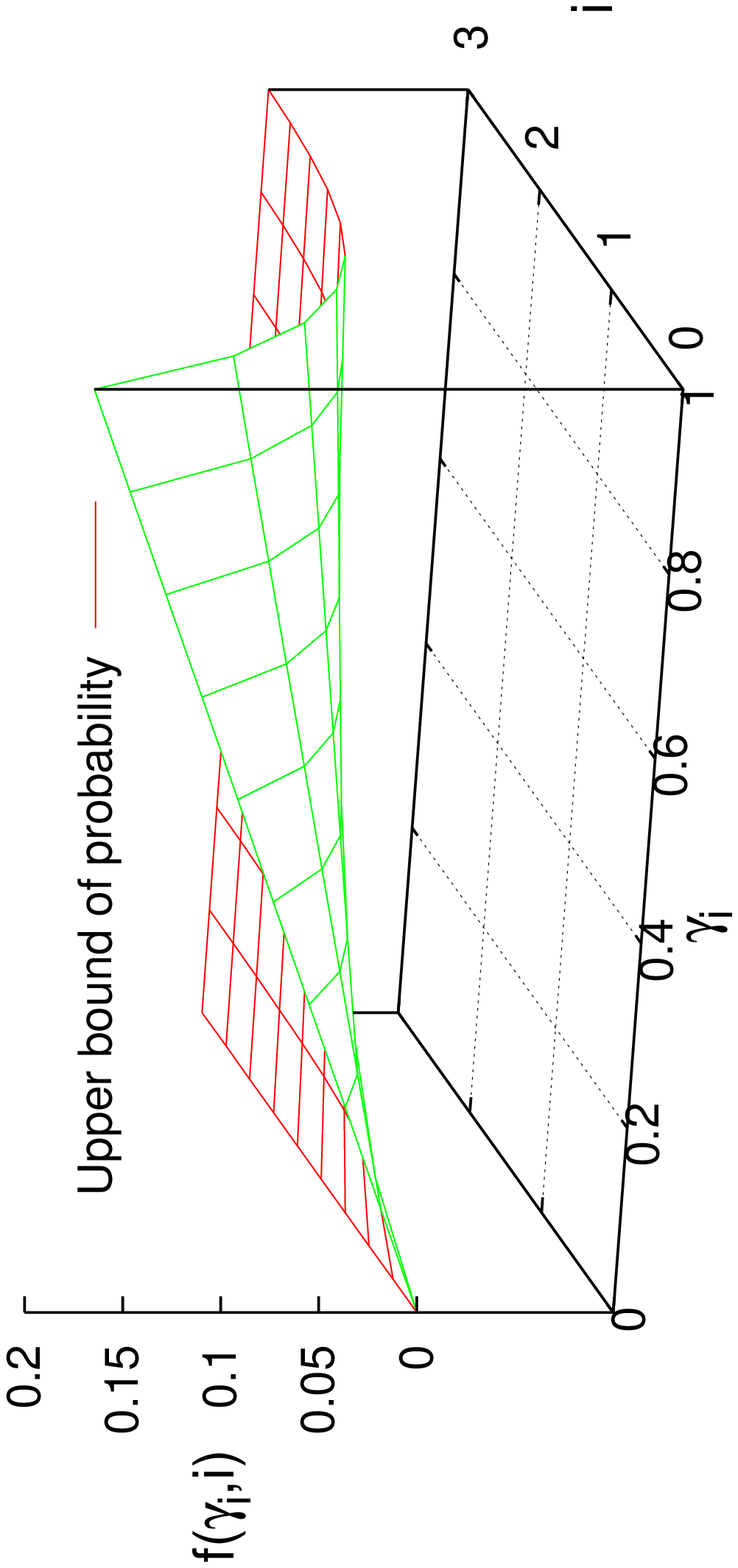}
 \label{fig:upperbound_privacy}}
 	\subfloat[Greedy]{\includegraphics[scale=0.25,angle=-90]{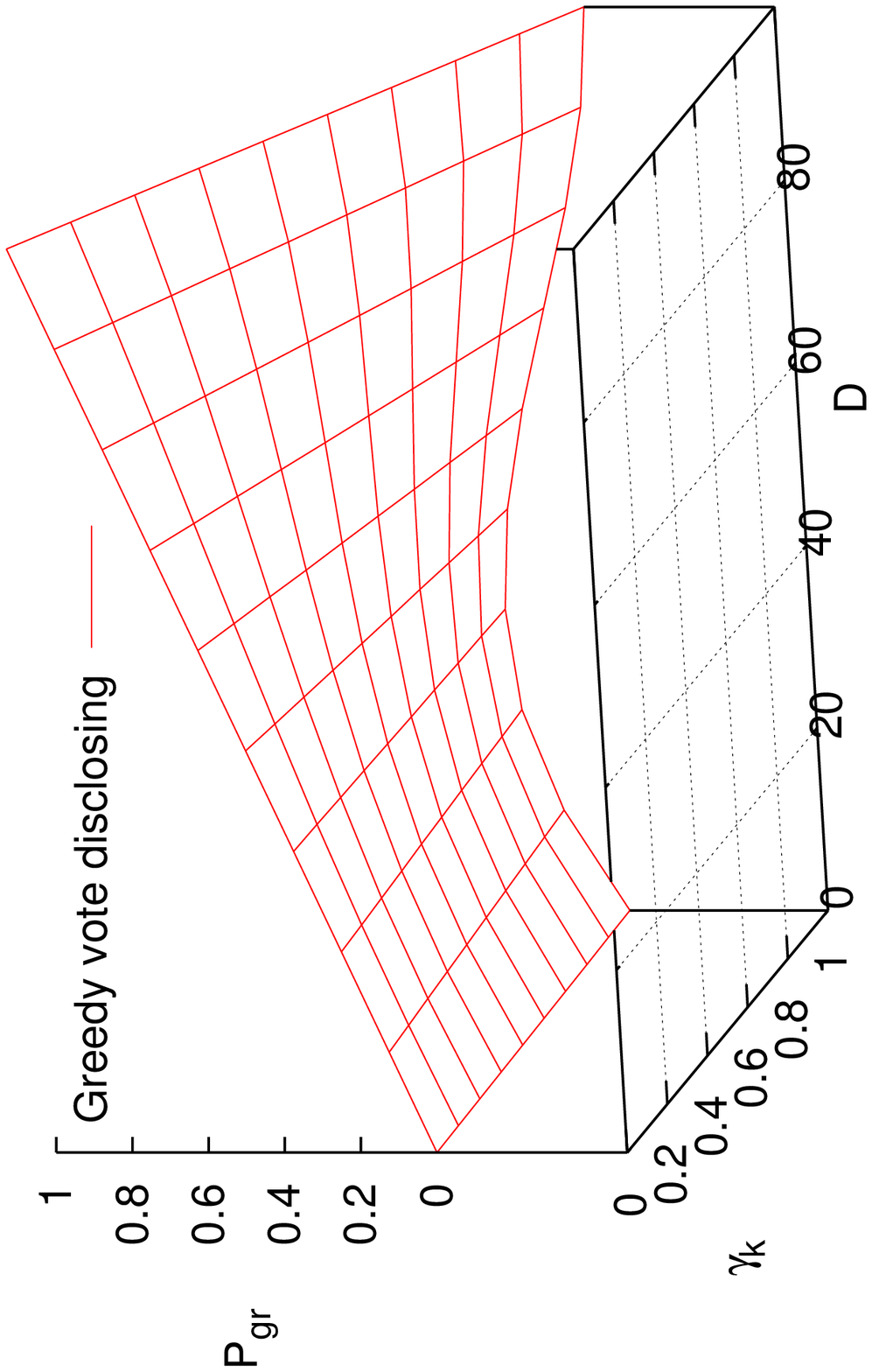}
  \label{fig:greedy_reveal_vary_gamma_D}}
 	\subfloat[Non-greedy]{\includegraphics[scale=0.25,angle=-90]{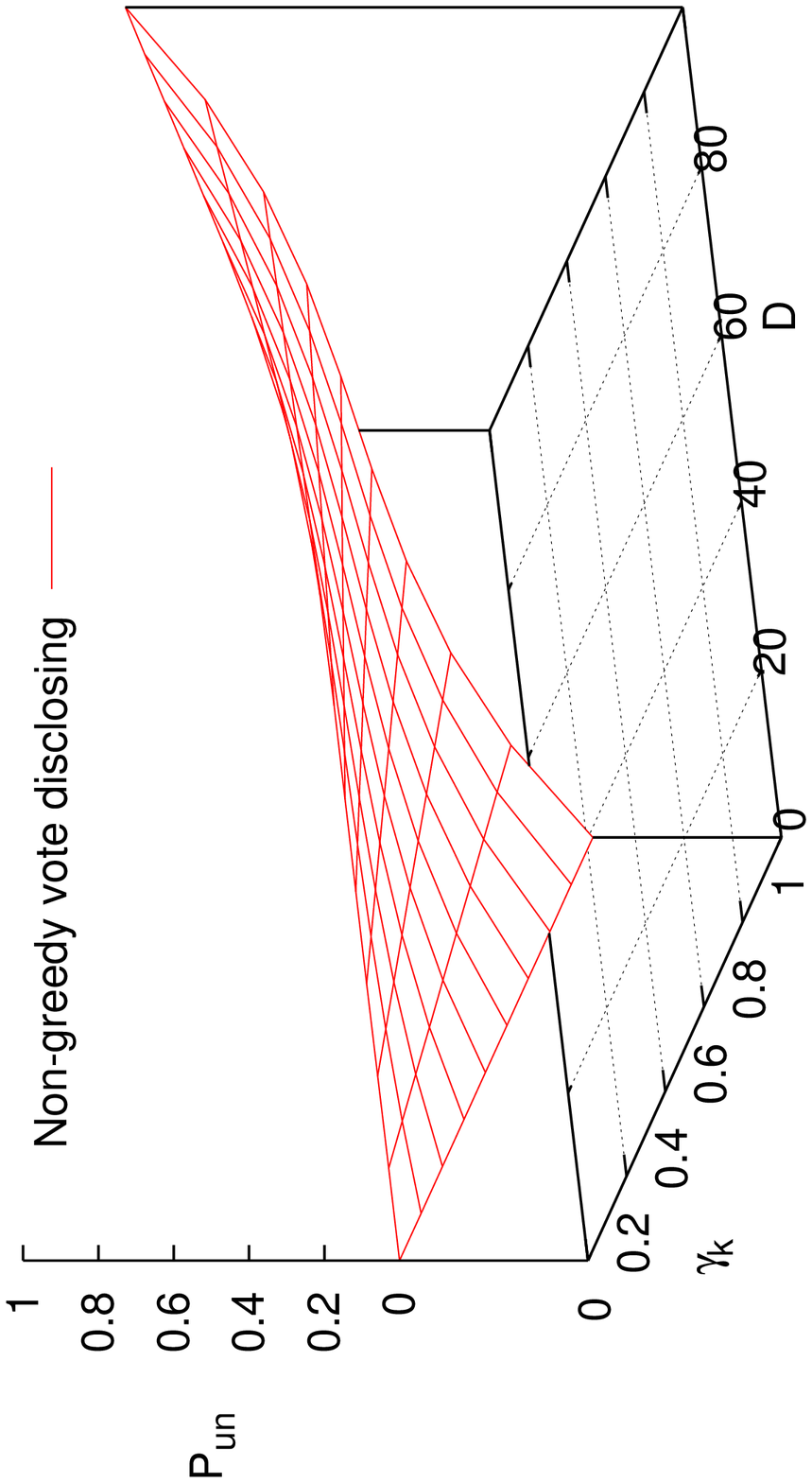}
  \label{fig:nongreedy_reveal_vary_gamma_D}}
%   \vspace{-1mm}
  \caption{Probability to disclose a node vote.}
 \label{fig:privacy}
\vspace{-5mm}
 \end{figure*}

%By Theorem \ref{theo:privacy_generalcase}, for $i>j$: $f(i)>f(j)\Leftrightarrow \gamma_i \bigl(\frac{D}{N}\bigr)^{i+1}>\gamma_j \bigl(\frac{D}{N}\bigr)^{j+1} \Leftrightarrow\gamma_i>\gamma_j \bigl(\frac{N}{D}\bigr)^{i-j}>\gamma_j$.
%It means the privacy decreases with the increase of value  $\gamma_i$ (and $i$). 
%Thus, we  get the minimum privacy of EPol (based on certain detection) when all nodes generate $2k+1$ shares, and the maximum privacy of EPol when only one node generate greater than one shares and the remaining nodes generate only one share (we do not allow all nodes send only one share as the coalition always detect in this case). 
%These results are due to the fact that the coalition is willing to be certain to reveal a node vote, hence, the number nodes of coalition should be large enough to verify the number of consumers of that node. 
%More specifically, the coalition $\mathcal{D}$ cannot detect a node sending less than $2k+1$ shares if $|\mathcal{D}\cap\Gamma(n)|<|\Gamma(n)|-2k$. 
%Detecting votes of nodes using more shares is easier than detecting votes of others using less share. 

If the poll outcome is $N$ (resp. $-N$), it infers all nodes vote for ``$+1$'' (resp. ``$-1$'') and they all are disclosed. 
Moreover, w.l.o.g, assume dishonest nodes always vote for ``$-1$'', thus, if the result is $N-2D$ (resp. $-N$) then it implies all honest nodes vote for ``$+1$'' (resp. ``$-1$'').
Without considering these cases, i.e., all honest nodes do not vote for the same option, Theorem \ref{max_num_revealed_votes} shows us the maximum number of votes the dishonest coalition could discover.
\vspace{-1mm}

\begin{theorem}  \label{max_num_revealed_votes}
If all honest nodes do not vote for the same option, a coalition of $D$ dishonest nodes can reveal at most $2D$ votes of honest nodes.
\vspace{-2mm}
\end{theorem}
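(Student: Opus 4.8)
The plan is to bound the size of the set $R$ of honest nodes whose vote the coalition ends up knowing with certainty, and to show $|R|\le 2D$. First I would catalogue the coalition's sources of information about honest votes: (a)~the shares it receives during the Sharing phase; (b)~the collected data $c_\ell$ of every node $\ell$, which are broadcast in the clear; and (c)~the public poll outcome together with the coalition's own $D$ votes, which yields $S:=\sum_{n\text{ honest}}v_n$ --- equivalently the two counts $a$ and $b$ (with $a+b=N-D$) of honest nodes voting $+1$ and $-1$.

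The first reduction is that source (c) alone never singles out an individual honest vote: knowing the pair $(a,b)$ pins down a particular honest ballot only when $a=0$ or $b=0$, i.e.\ when every honest node votes the same way, which is exactly the case the hypothesis excludes. Hence every node of $R$ must be ``anchored'' by share information that is visible to $\mathcal{D}$. For a node $n$ that casts $2i_n+1$ shares, the argument behind Theorem~\ref{theo:privacy_generalcase} shows that the coalition can recover $v_n$ from shares only if it holds a determining subset of $n$'s shares --- in particular at least the $i_n+1$ majority ones --- which forces that many of $n$'s consumers into $\mathcal{D}$; and the only extra leverage the public values $c_\ell$ give is the ability to ``peel off'' one honest incoming share of $\ell$ once all of $\ell$'s other producers are already known, a propagation that the in-degree bound $|\mathcal{R}_\ell|\le 2k+1$ of Property~$P_{g_1}$ keeps under control. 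I would then set up the bipartite incidence count between $\mathcal{D}$ and the honest nodes along consumer/producer edges, charge each node of $R$ to the dishonest consumers (resp.\ producers) that make its vote determinable, and combine this with the observation that, once the share-anchored part of $R$ is fixed, source (c) can additionally ``finish off'' at most one residual value class --- so as to arrive at $|R|\le 2D$.

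The main obstacle is making the charging tight enough to land at the constant $2$ rather than at the naive $\mathcal{O}(kD)$: a single dishonest node is a consumer of up to $2k+1$ producers, so one has to argue that these slots cannot each independently unlock a distinct honest vote --- using that a determining subset is a \emph{strict majority} of a node's shares (so each unlocked node already consumes several incidences) and, where needed, the structural budget $D\le(m-1)\Delta_G/2$ and Property~$P_{g_3}$. A secondary delicate point is ruling out degenerate configurations that Properties~$P_{g_1}$--$P_{g_3}$ nominally permit (for instance a node whose unique consumer has it as unique producer, so that its collected data equals its vote and is effectively public), and checking that the share-based mechanism and the outcome-deduction step do not compound beyond the claimed $2D$.
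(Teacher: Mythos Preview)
Your approach is quite different from the paper's, and the gap you yourself flag (``landing at $2$ rather than $\mathcal{O}(kD)$'') is real and not closed by the tools you list.

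The paper's proof is a three-line averaging argument at the level of the population proportions $\gamma_i$: the mean number of shares received by a consumer is $\sum_i\gamma_i(2i+1)$, so the coalition holds about $D\sum_i\gamma_i(2i+1)$ shares; revealing a vote of a $(2i+1)$-share node costs $i+1$ shares, so the coalition reveals at most
\[
\left\lfloor D\cdot\frac{\sum_i\gamma_i(2i+1)}{\sum_i\gamma_i(i+1)}\right\rfloor\le 2D
\]
since $2i+1\le 2(i+1)$. It never uses the collected data $c_\ell$, the outcome $S$, Property~$P_{g_3}$, or the bound $D\le(m-1)\Delta_G/2$; the hypothesis ``not all honest votes equal'' is only there to exclude the trivial global reveal. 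In other words, the paper is doing a mean-field count, not a worst-case combinatorial bound.

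Your charging scheme, by contrast, is aiming at a genuine worst-case statement, and it cannot reach $2D$ with the ingredients you propose. The ``strict majority'' observation collapses precisely when $i=0$: a node sending one share is fully revealed by a single incidence, so a dishonest consumer with $|\mathcal{R}|=2k+1$ producers all of whom chose $i=0$ unlocks $2k+1$ votes by itself, and $D$ such nodes unlock $(2k+1)D$. Nothing in $P_{g_1}$--$P_{g_3}$ forbids this configuration, and the broadcasting-phase bound $D\le(m-1)\Delta_G/2$ is about an entirely different phase and gives no leverage on share incidences. So the route you outline either has to fall back to the same averaging over $\gamma_i$ that the paper uses, or it proves a weaker $(2k+1)D$-type bound.
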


\begin{proof}
A consumer receives on average $\sum_{i=0}^{k} \gamma_i(2i+1)$ shares, hence, the dishonest coalition collects at most $D\sum_{i=0}^{k} \gamma_i(2i+1)$ shares. Moreover, a vote $v$ of one node voting with $2i+1$ shares is revealed if and only if the coalition obtains $i+1$ identical shares of value $v$. Thus it recovers at most $
\lfloor D.\frac{\sum_{i=0}^{k}\gamma_i(2i+1)}{\sum_{i=0}^{k}\gamma_i(i+1)} \rfloor\le 2D$ votes.
\end{proof}

% \medskip
\noindent\textbf{Vote disclosing without certainty.}
This part examines the case that the dishonest nodes collude to reveal an honest node's vote without sureness. The coalition decides a node's vote based on the received shares in the sense they can decide the vote after getting some shares or after getting all shares from that node. Thus, they choose one of the following two strategies: (a) Upon receipt of $\rho+1$ identical shares (for some $0\le \rho\le k$) from a given node, they will be considered as its vote; (b) After receiving all shares from a given node, the most represented value of the received shares will be considered as its vote. The former strategy is used by the ``greedy'' dishonest users who want to reveal rapidly the honest user's vote (even if they have just received one share). The latter one is used by the ``non-greedy'' dishonest users who are patient and wait for receiving all node's shares before trying to disclose the vote.
We present the probabilities that a coalition of dishonest nodes discloses an honest node's vote for these situations in Theorems \ref{theo:privacy_greedy} and \ref{theo:privacy_ungreedy}. Recall that each node does not know the number of shares generated by other ones and $\Delta_G$ is a network diameter.
% \vspace{-1mm}

\begin{theorem}[Greedy vote disclosure]\label{theo:privacy_greedy}
Assume a coalition of $D$ dishonest nodes  agrees on the following rule ``upon receipt of $\rho+1$ identical shares ($0\le \rho\le k$) from a given node, they will be considered as the node's vote''.   The probability this coalition reveals correctly a node's vote is 
$P_{gr}(\rho)=\sum_{i=\rho}^{k} \gamma_i\cdot{\binom{D}{\rho+1}}\sum_{j=0}^{\rho}
{\binom{D-\rho-1}{j}}/{\binom{N}{j+\rho+1}}
$
and is bounded by $\sum_{i=\rho}^{k}\gamma_i\frac{N+1}{N-D+\rho+2}  (\frac{D}{N-D+\rho+1}  ) ^{\rho+1}$.
%$P_{gr}(\rho)=\sum_{i=\rho}^{k}\gamma_i\cdot\sum_{j=\rho+1}^{2\rho+1}
%\bigl[{\binom{D}{\rho+1}}{\binom{D-(\rho+1)}{j-(\rho+1)}}\binom{N-D}{i-\rho}{\binom{N-D-(i-\rho)}{i-[j-(\rho+1)]}}\bigr]/\bigl[{\binom{N}{i+1}}{\binom{N-i-1}{i}}\bigr]$.
\end{theorem}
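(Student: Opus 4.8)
The plan is to compute the probability that a greedy coalition of $D$ dishonest nodes, using the rule "declare the vote upon receiving $\rho+1$ identical shares", correctly identifies the vote of an honest node. First I would condition on the honest node having voted with $2i+1$ shares (probability $\gamma_i$); the rule can only trigger if $i\ge \rho$, so the sum ranges over $\rho\le i\le k$. Recall from the sharing scheme that among these $2i+1$ shares, exactly $i+1$ carry the true vote $v_n$ and $i$ carry $-v_n$; the shares are sent to $2i+1$ distinct consumers chosen (for the purposes of the analysis) uniformly at random. The coalition reveals the vote \emph{correctly} precisely when the first block of $\rho+1$ identical shares it accumulates is a block of the majority value $v_n$. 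I would argue that, in the worst case, this happens exactly when the coalition receives $\rho+1$ of the $i+1$ copies of $v_n$ together with at most $\rho$ of the $i$ copies of $-v_n$ — the "at most $\rho$" part guaranteeing the $-v_n$ pile never reaches the trigger threshold first. Summing over $j$, the number of wrong-value shares captured ($0\le j\le\rho$), and using that consumers are assigned to dishonest nodes via a hypergeometric-type selection, gives the exact expression
\[
P_{gr}(\rho)=\sum_{i=\rho}^{k}\gamma_i\binom{D}{\rho+1}\sum_{j=0}^{\rho}\binom{D-\rho-1}{j}\Big/\binom{N}{\,j+\rho+1\,}.
\]

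**Deriving the bound.** For the upper bound I would first drop the $\gamma_i$-weighting's dependence on $i$ beyond the range, writing the factor $\sum_{i=\rho}^{k}\gamma_i$ out front, and then bound the inner double sum. Use $\binom{D}{\rho+1}\le D^{\rho+1}/(\rho+1)!$ and, crucially, the standard estimate $\binom{N}{\ell}\ge (N-\ell+1)^{\ell}/\ell!$, or more conveniently $\binom{N}{j+\rho+1}\ge \binom{N}{\rho+1}\cdot\binom{N-\rho-1}{j}\big/\binom{N}{\rho+1}$-type rearrangements; the cleanest route is to note
\[
\frac{\binom{D-\rho-1}{j}}{\binom{N}{j+\rho+1}}
=\frac{\binom{D-\rho-1}{j}}{\binom{N-\rho-1}{j}}\cdot\frac{1}{\binom{N}{\rho+1}},
\]
so that each term in the $j$-sum is at most $\bigl(\tfrac{D-\rho-1}{N-\rho-1}\bigr)^{j}\big/\binom{N}{\rho+1}\le\bigl(\tfrac{D}{N-\rho-1}\bigr)^{j}\big/\binom{N}{\rho+1}$. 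Then the geometric series $\sum_{j\ge0}\bigl(\tfrac{D}{N-\rho-1}\bigr)^{j}$ converges (since $D<N$) and is bounded by $\tfrac{N-\rho-1}{N-D-\rho-1}$; combining with $\binom{D}{\rho+1}/\binom{N}{\rho+1}\le\bigl(\tfrac{D}{N-\rho}\bigr)^{\rho+1}$ and absorbing the small constants yields a bound of the shape $\sum_{i=\rho}^{k}\gamma_i\,\tfrac{N+1}{N-D+\rho+2}\bigl(\tfrac{D}{N-D+\rho+1}\bigr)^{\rho+1}$ after the routine but slightly fiddly reindexing of the $\rho$-shifts in the denominators.

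**Main obstacle.** The genuinely delicate step is the first one: correctly characterizing when the greedy rule fires \emph{and is correct}, as opposed to firing on the minority value (wrong) or never firing. I would need to be careful that "the coalition accumulates $\rho+1$ copies of $v_n$ and at most $\rho$ copies of $-v_n$" is exactly the right event in the worst case over the order in which shares arrive — in particular that an adversary cannot do better by some adaptive timing, which here it cannot because the $2i+1$ consumers are fixed at sharing time and the coalition's membership is fixed, so the only randomness is the hypergeometric draw of which consumers are dishonest. I would also double-check the boundary bookkeeping: the $\binom{D-\rho-1}{j}$ factor (rather than $\binom{D-\rho-1}{j}$ vs $\binom{D}{j}$) reflects that $\rho+1$ of the dishonest slots are already committed to holding the $v_n$-copies before we choose which hold the $j$ bad copies. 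The rest — the series bound and the algebraic massaging into the stated closed form — is elementary and I would not belabor it.
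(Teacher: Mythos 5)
Your derivation of the exact expression matches the paper's: you characterize the success event the same way ($\rho+1$ shares of the true value $v$ and at most $j\le\rho$ shares of $-v$ landing on dishonest consumers, hypergeometric choice of which consumers are dishonest), and you arrive at the same formula for $P_{gr}(\rho)$. The problem is in the bounding step. The identity you lean on,
\[
\frac{\binom{D-\rho-1}{j}}{\binom{N}{j+\rho+1}}
=\frac{\binom{D-\rho-1}{j}}{\binom{N-\rho-1}{j}}\cdot\frac{1}{\binom{N}{\rho+1}},
\]
is false: the correct relation is $\binom{N}{\rho+1}\binom{N-\rho-1}{j}=\binom{N}{j+\rho+1}\binom{j+\rho+1}{\rho+1}$, so the right-hand side is missing a factor $\binom{j+\rho+1}{\rho+1}$. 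This factor is not a ``small constant'' to be absorbed --- for $j=\rho$ it is $\binom{2\rho+1}{\rho+1}\approx 4^{\rho}$ --- and since you omit a factor $\ge 1$ from a quantity you are trying to upper-bound, the ensuing chain of inequalities does not bound $P_{gr}$ at all. Moreover, even if you repair it (e.g.\ keep the factor and use the negative-binomial series $\sum_{j\ge 0}\binom{j+\rho+1}{j}x^{j}=(1-x)^{-(\rho+2)}$ with $x=\frac{D-\rho-1}{N-\rho-1}$, together with $\binom{D}{\rho+1}/\binom{N}{\rho+1}\le (D/N)^{\rho+1}$), the geometric-series route produces denominators of the form $N-D$, which is a \emph{weaker} bound than the stated $\frac{N+1}{N-D+\rho+2}\bigl(\frac{D}{N-D+\rho+1}\bigr)^{\rho+1}$; so your plan, as described, cannot recover the specific constant claimed in the theorem.

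The paper gets the sharp constant by a different device: it writes $1/\binom{n}{j}=(n+1)\int_0^1 t^{j}(1-t)^{n-j}\,dt$, so that
\[
p(i)=(N+1)\binom{D}{\rho+1}\sum_{j=0}^{\rho}\binom{D-\rho-1}{j}\int_0^1 t^{j+\rho+1}(1-t)^{N-j-\rho-1}\,dt,
\]
then extends the $j$-sum to $D-\rho-1$, applies the binomial theorem inside the integral to turn the sum into $\bigl(1+\tfrac{t}{1-t}\bigr)^{D-\rho-1}$, and evaluates the resulting Beta integral to obtain exactly $\frac{N+1}{N-D+\rho+2}\binom{D}{\rho+1}\big/\binom{N-D+\rho+1}{\rho+1}$, from which the stated bound follows by $\binom{D}{\rho+1}\big/\binom{N-D+\rho+1}{\rho+1}\le\bigl(\frac{D}{N-D+\rho+1}\bigr)^{\rho+1}$. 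If you want to keep your elementary route you must carry the $\binom{j+\rho+1}{\rho+1}$ factor explicitly and accept a weaker bound, or switch to the integral trick to get the theorem as stated.
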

\vspace{-2mm}

\begin{proof}
The dishonest nodes succeed to discover a node's vote $v$ if  that node has sent $2i+1\ge 2\rho+1$ shares in which $\rho+1$ identical ones representing $v$ and up to $\rho$ shares of value $-v$ were received by the dishonest consumers. %, whilst the remaining shares (including shares of values $v$ and $-v$) were received by honest nodes.  
In contrast, the coalition's decision is failed if the node has sent more than $2\rho +1$ messages (i.e., at least $2\rho+3$ messages)  but the dishonest nodes  obtained only $\rho+1$ messages of value $-v$ and up to $\rho$ messages of value $v$. 
The probability a coalition of $D$ dishonest nodes discloses correctly a vote $v$ is:
$P_{gr}(\rho)=\sum_{i=\rho}^{k}\gamma_i p(i)$ where
$p(i)=
{\binom{D}{\rho+1}}\sum_{j=0}^{\rho}
{\binom{D-\rho-1}{j}}/{\binom{N}{j+\rho+1}}
$.
%$P_{gr}(\rho)=\sum_{i=\rho}^{k}Pr(\mathcal{E}(X=2i+1)).Pr(\mathcal{E}(\{Y=\rho\}/\{X=2i+1\}))$ where $\mathcal{E}
%(A)$ denotes the event of variable $A$, $X$ is the number of generating shares of node, and $Y$ is the number of identical shares representing $v$ that a dishonest coalition receives. 
%We have $Pr(\mathcal{E}(X=2i+1))=\gamma_i$, 
%and  $p(i)=Pr(\mathcal{E}(\{Y=\rho\}/\{X=2i+1\}))=
%{\binom{D}{\rho+1}}\sum_{j=0}^{\rho}
%{\binom{D-\rho-1}{j}}/{\binom{N}{j+\rho+1}}
%$.                                        
Using this identity:
%$\int_0^1 t^{\alpha} (1-t)^{\beta}=\frac{!}{\alpha+\beta+1}\frac{1}{\binom{\alpha+\beta}{\alpha}}$, for some positive $\alpha$ and $\beta$. 
%This can be alternatively expressed as follows:  
$1/\binom{n}{j}= (n+1) \int_0^1 t^{j}(1-t)^{n-j} dt$ for some positive $n$ and $j$, then $p(i)$ is rewritten as follows:
$p(i)=(N+1) \binom{D}{\rho+1} \sum_{j=0}^\rho \binom{D-\rho-1}{j} \int_0^1 t^{j+\rho+1}(1-t)^{N-j-\rho-1} dt
% =  (N+1) \binom{D}{\rho+1} \int_0^1 (1-t)^{N-\rho-1} t^{\rho+1} \left[\sum_{j=0}^\rho \binom{D-\rho-1}{j} \left( \frac{t}{1-t}\right)^j \right] dt  
\leq \\ (N+1) \binom{D}{\rho+1} \int_0^1 (1-t)^{N-\rho-1} t^{\rho+1} \left[\sum_{j=0}^{D-\rho-1} \binom{D-\rho-1}{j} \left( \frac{t}{1-t}\right)^j \right] dt 
% = (N+1) \binom{D}{\rho+1} \int_0^1 (1-t)^{N-\rho-1} t^{\rho+1}  \left[ \left(1+ \frac{t}{1-t}\right)^{D-\rho-1} \right] dt 
% = (N+1) \binom{D}{\rho+1} \int_0^1  t^{\rho+1} (1-t)^{N-D} dt
=   \frac{N+1}{N-D+\rho+2}      \binom{D}{\rho+1} /{\binom{N-D+\rho+1}{\rho+1}}
\le \frac{N+1}{N-D+\rho+2} \left(\frac{D}{N-D+\rho+1} \right) ^{\rho+1}.
$

This leads the desired result.
\end{proof}

In Theorem \ref{theo:privacy_greedy}, a vote $v$ of the honest node is discovered  if  that node has sent $2i+1\ge 2\rho+1$ shares in which $\rho+1$ identical ones representing $v$ and up to $\rho$ shares of value $-v$ were received by the dishonest consumers. Moreover, by Theorem \ref{theo:privacy_greedy}, value $P_{gr}$  increases when $\gamma_i$ decreases (and $i$ increases) and $D$ increases. 
% It means the privacy of greedy vote disclosure decreases with the decrease of proportion of users generating higher number of shares. 
For example, with $N=100$, $k=1$ (i.e., each node votes with one share or $2k+1=3$ shares), $\rho=0$, we plot the probability $P_{gr}$ as a function of $D$ and $\gamma_k$ in Fig. \ref{fig:greedy_reveal_vary_gamma_D}.
%  to demonstrate the impact of number of dishonest nodes $D$ and the proportion $\gamma_k$ of nodes voting with $2k+1$ shares on the probability of greedy vote detection. 
As expected, the vote privacy decreases (i.e.,  $P_{gr}$ increases) when 
 $\gamma_k$ decreases and $D$ increases. 
%the proportion of nodes voting with $2k+1$ shares decreases and number of dishonest nodes increases.
% \vspace{-1mm}

\begin{theorem}[Non-greedy vote disclosure]\label{theo:privacy_ungreedy}
Assume a coalition of $D$ dishonest nodes agrees on the following rule ``the most represented value of the received shares from a given node will be considered as the node's vote''.   The probability   this coalition reveals correctly a node's vote 
%of the honest node    generating  $2i+1$ shares ($0\le i\le k$) is 
is $P_{un}=\sum_{i=0}^{k}\gamma_i\cdot\sum_{j=1}^{i+1} \sum_{t=0}^{j-1}  
 {\binom{D}{j}}{\binom{D-j}{t}} /{\binom{N}{j+t}}
 $ and is bounded by  $\bigl( \frac{D}{N}/(1-2\frac{D}{N})\bigr)\bigl[{1-\sum_{i=0}^{k}\gamma_i(2\frac{D}{N})^{2i+1}}\bigr]$.
% $P_{un}=\sum_{i=0}^{k}\gamma_i\cdot\sum_{j=1}^{i+1} \sum_{t=0}^{j-1}  
% \bigl[{\binom{D}{j}}{\binom{D-j}{t}}\binom{N-D}{i+1-j}{\binom{N-D-i-1+j}{i-t}}\bigr]/\bigl[{\binom{N}{i+1}}{\binom{N-i-1}{i}}\bigr]
% $.
\vspace{-1mm}
\end{theorem}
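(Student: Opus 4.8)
The plan is to parallel the structure of the greedy-disclosure proof (Theorem~\ref{theo:privacy_greedy}), adapting the combinatorial bookkeeping to the ``most represented value'' rule. First I would fix a node voting with $2i+1$ shares, so that it sent $i+1$ copies of its true vote $v$ and $i$ copies of $-v$. The non-greedy coalition, after seeing whatever subset of these $2i+1$ shares reaches dishonest consumers, decides $v$ correctly exactly when, among the received shares, the number of $v$'s strictly exceeds the number of $(-v)$'s. So if the coalition receives $j$ shares of value $v$ (with $1\le j\le i+1$) and $t$ shares of value $-v$, correctness requires $t\le j-1$. The probability of receiving a particular such configuration is ${\binom{D}{j}}{\binom{D-j}{t}}/{\binom{N}{j+t}}$, by the same ``which consumers are dishonest'' counting used throughout the privacy section (the $j+t$ relevant consumers are a uniformly random $(j+t)$-subset; $j$ of them must be among the $v$-holders and $t$ among the $(-v)$-holders, but since each consumer independently holds whichever share was routed to it, the clean closed form is the stated ratio of binomials). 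Summing over $i$ (weighted by $\gamma_i$), over $j$ from $1$ to $i+1$, and over $t$ from $0$ to $j-1$ yields the exact expression $P_{un}=\sum_{i=0}^{k}\gamma_i\sum_{j=1}^{i+1}\sum_{t=0}^{j-1}{\binom{D}{j}}{\binom{D-j}{t}}/{\binom{N}{j+t}}$.

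The second half is the asymptotic bound. Here I would again invoke the Beta-integral identity $1/\binom{n}{r}=(n+1)\int_0^1 s^{r}(1-s)^{n-r}\,ds$ already used in the greedy proof, but it is cleaner to first coarsen: bound ${\binom{D}{j}}{\binom{D-j}{t}}/{\binom{N}{j+t}}$ from above by $\binom{j+t}{t}(D/N)^{j}(D/N)^{t}$ — i.e.\ treat each ``dishonest consumer'' event as occurring independently with probability $\le D/N$, which overcounts and so is an upper bound. Then the inner double sum over $\{(j,t): 1\le j, 0\le t\le j-1\}$ becomes $\sum_{j\ge 1}\sum_{t=0}^{j-1}\binom{j+t}{t}x^{j+t}$ with $x=D/N$. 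The condition $t\le j-1$ is what makes this a ``ballot-type'' sum; it evaluates (using $\sum_{t\ge 0}\binom{j+t}{t}x^t = (1-x)^{-(j+1)}$ and a truncation argument, or directly via the generating function for the region below the diagonal) to a closed form of the shape $\frac{x}{1-2x}$ times a correction term that, when one restores the truncation at $j\le i+1$, produces precisely the $[1-\sum_i\gamma_i(2x)^{2i+1}]$ factor. I would present this as: the full sum without the $j\le i+1$ cap equals $\frac{x}{1-2x}$, and the tail $j\ge i+2$ that we removed contributes exactly $\frac{x}{1-2x}(2x)^{2i+1}$, so the capped sum for a fixed $i$ is $\frac{x}{1-2x}[1-(2x)^{2i+1}]$; averaging with the $\gamma_i$ gives the stated bound $\bigl(\frac{D}{N}/(1-2\frac{D}{N})\bigr)\bigl[1-\sum_{i=0}^{k}\gamma_i(2\frac{D}{N})^{2i+1}\bigr]$.

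\textbf{The main obstacle} is pinning down the exact value of the truncated ballot sum $\sum_{j=1}^{i+1}\sum_{t=0}^{j-1}\binom{j+t}{t}x^{j+t}$ and showing its complement equals $\frac{x}{1-2x}(2x)^{2i+1}$ cleanly. The uncapped identity $\sum_{j\ge1}\sum_{t=0}^{j-1}\binom{j+t}{t}x^{j+t}=\frac{x}{1-2x}$ is a standard Catalan/ballot generating-function fact (the region $0\le t<j$ is half the grid minus the diagonal, and $\sum_{a,b\ge0}\binom{a+b}{a}x^{a+b}=\frac{1}{1-2x}$), but extracting the tail in the clean form $(2x)^{2i+1}\cdot\frac{x}{1-2x}$ requires either a bijective/reflection argument on lattice paths crossing the line $t=j-1$ at ``height'' $i+1$, or a careful manipulation of partial sums of the negative-binomial series; I would allocate most of the write-up budget here and keep the routine binomial-to-integral and geometric-series steps terse, exactly as the paper does for the greedy case. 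A secondary subtlety worth one sentence is justifying that the independence-style upper bound $\binom{D}{j}\binom{D-j}{t}/\binom{N}{j+t}\le\binom{j+t}{t}(D/N)^{j+t}$ is valid for all admissible $j,t$ (it follows from $\binom{D}{\ell}/\binom{N}{\ell}\le (D/N)^{\ell}$ applied coordinatewise, or from a hypergeometric-vs-binomial stochastic domination).
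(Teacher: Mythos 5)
Your first half---the exact expression for $P_{un}$ and the bound $\binom{D}{j}\binom{D-j}{t}/\binom{N}{j+t}\le\binom{j+t}{j}\left(\frac{D}{N}\right)^{j+t}$ (cleanest via $\binom{D}{j}\binom{D-j}{t}=\binom{D}{j+t}\binom{j+t}{j}$ and $\binom{D}{\ell}/\binom{N}{\ell}\le (D/N)^{\ell}$)---is exactly the paper's argument and is fine. The gap is in the step you yourself flag as the main obstacle: the two identities you plan to prove there are false. With $x\in(0,1/2)$, the uncapped sum $\sum_{j\ge1}\sum_{t=0}^{j-1}\binom{j+t}{t}x^{j+t}$ does \emph{not} equal $\frac{x}{1-2x}$: its coefficient of $x^{r}$ is $\sum_{t<r/2}\binom{r}{t}$, which is $2^{r-1}$ only for odd $r$ and is $2^{r-1}-\frac12\binom{r}{r/2}$ for even $r$ (e.g.\ the coefficient of $x^{2}$ is $1$, coming only from $(j,t)=(2,0)$, while $\frac{x}{1-2x}$ has coefficient $2$). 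Equivalently, ``half of the grid minus the diagonal'' is $\frac12\bigl(\frac{1}{1-2x}-\frac{1}{\sqrt{1-4x^{2}}}\bigr)$, not $\frac{x}{1-2x}$; the latter is half of the grid minus only the $(0,0)$ term. Consequently the tail over $j\ge i+2$ is not exactly $\frac{x}{1-2x}(2x)^{2i+1}$ either, so the plan ``capped $=$ full $-$ tail, both in closed form'' cannot be carried out; salvaging it with inequalities would additionally require a lower bound on the tail that you have not established and that is delicate.

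No exact evaluation is needed, since the theorem only claims an upper bound. The paper's (much shorter) route: in the capped sum, $j\le i+1$ and $t\le j-1\le i$ force $r:=j+t\le 2i+1$, so group terms by $r$ and bound each row, $\sum_{t<j,\; j+t=r}\binom{r}{t}\le\frac12\sum_{t=0}^{r}\binom{r}{t}=2^{r-1}$, which holds by the symmetry $\binom{r}{t}=\binom{r}{r-t}$ because all included indices satisfy $t<r/2$. Then, with $a=D/N$, $\sum_{j=1}^{i+1}\sum_{t=0}^{j-1}\binom{j+t}{j}a^{j+t}\le\frac12\sum_{r=1}^{2i+1}(2a)^{r}=\frac{a}{1-2a}\bigl(1-(2a)^{2i+1}\bigr)$, and averaging over $\gamma_i$ gives the stated bound. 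If you replace your closed-form/reflection step by this row-wise bound (and drop the Beta-integral detour, which is not needed here), your argument coincides with the paper's proof.
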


\begin{proof}
The dishonest nodes reveal successfully a   vote $v$ of a node voting with $2i+1$ shares if they receive $j$ shares of value $v$ and $t$ shares of value $-v$ such that $i+1\ge j>t\ge 0$. This event occurs with probability 
$p(i)=\sum_{j=1}^{i+1} \sum_{t=0}^{j-1}  {\binom{D}{j}}{\binom{D-j}{t}}/{N\choose j+t}$. 
We have: ${\binom{D}{j}}{\binom{D-j}{t}}/{N\choose j+t}\le \left(\frac{D}{N}\right)^{j+t}\binom{j+t}{j}$. 
We denote $a=D/N$ and find the upper bound of 
$\sum_{j=1}^{i+1}\sum_{t=0}^{j-1} a^{t+j}\binom{j+t}{j}$. 
Rewrite that expression as 
$\sum_{r=1}^{2i+1} a^r\sum_{i<j\wedge i+j=r} \binom{r}{i}\leq \sum_{r=1}^{2i+1}a^r \frac{1}{2}\sum_{i=0}^{r} \binom{r}{i} = \frac{1}{2}\sum_{r=1}^{2i+1} (2a)^r =  \frac{a}{1-2a}\bigl(1-(2a)^{2i+1}\bigr)$.
Since a node sends $2i+1$ shares with probability $\gamma_i$, and we consider all possibilities of value $i$, this  gives us the desired result $P_{un}=\sum_{i=0}^{k}\gamma_i p(i)\le    \frac{a}{1-2a}[1-\sum_{i=0}^{k}\gamma_i (2a)^{2i+1}]$.
\end{proof}

%In Theorem \ref{theo:privacy_ungreedy}, the dishonest nodes reveal successfully a   vote $v$ of a node voting with $2i+1$ shares if they receive $j$ shares of value $v$ and $t$ shares of value $-v$ such that $i+1\ge j>t\ge 0$. And 
By Theorem \ref{theo:privacy_ungreedy}, the quantity $P_{un}$ increases when both values $\gamma_i$ and $D$ increase (and also $i$ increases). 
% It means the privacy of non-greedy vote disclosure decreases with the increase of proportion of users generating higher number of shares and the increase of number of dishonest nodes.
For $N=100$ and $k=1$,
%(i.e., each node votes with one share or $2k+1=3$ shares), 
Fig. \ref{fig:nongreedy_reveal_vary_gamma_D} shows the impact of $D$ and  $\gamma_k$  on $P_{un}$.  According to this result, as expected, the vote privacy decreases (i.e., $P_{un}$ increases)  when both 
%the proportion of nodes voting with $2k + 1$ shares and the number of dishonest nodes increase.
$D$ and $\gamma_k$ increase.
% \vspace{-1mm}

% \medskip
\noindent\textbf{Combining vote disclosure with and without certainty.}
The 
% objective of this part is to present the situation where 
dishonest nodes may try to reveal a node's vote either in certainty or uncertainty.  Assume they respect the vote disclosure rule with and without certainty. From the viewpoint of dishonest nodes, they always want their vote detection to be as certain as possible, i.e., they prefer a node's vote being revealed with certainty to other cases. Hence
their strategy is as follows: they first try to disclose a vote of node with sureness. If they do not succeed, for instance, because of lacking of messages, they will consider the way to detect that vote without certainty.
It implies the probability  a vote is disclosed in this case is $P_{com}=\max\{P_{ce},P_{gr},P_{un}\}$.
% \vspace{-1mm}

%------------------------------------

% \medskip
\noindent\textbf{Accuracy.}
In this section, we evaluate the maximum and average impact on accuracy caused by the dishonest coalition when we deploy EPol on the graphs of $\mathcal{G}_2$. 
% % We assume the graph is satisfied the $m$-broadcasting condition in the sense for each source $s$ (owning collected data to be broadcast), each other node $n$ either connects directly to the source, or there exist at least $m$ nodes preceding it in the ordering.
% % We also suppose that there are at most $\lfloor (m-1)/2\rfloor$ dishonest nodes among preceding nodes of $n$. This assumption is to ensure there are not enough dishonest nodes to affect to node's decision about data of $s$.
    %\medskip\noindent
A dishonest node may affect the poll outcome with the following \emph{misbehaviors}:
\vspace{-5mm}
\begin{enumerate}[(i)]
 \item   Since a node can only generate and send shares to its consumers it is assigned (otherwise the shares are dropped) and there are at most $2k+1$ consumers to be assigned, it must send at most $2k+1$ shares in which at most $k+1$ ones are identical. Hence a dishonest node may give the misbehavior by  sending more than $k+1$ (but not greater than $2k+1$) identical shares.
\item   It inverts each receiving ``$+1$''-share into a ``$-1$''-share to decrease the collected data.
 \item   It modifies the collected data of other honest node
or sends a forged message in the broadcasting phase.
\end{enumerate}
\vspace{-1mm}
\noindent \emph{Verification procedures}: in the first attack, the worst case is when the dishonest node sends all $2k+1$ shares of value ``$-1$''. In the attack (ii),  a node receives at most $2k+1$ shares (since at most $2k+1$ producers are assigned) and thus, the computing collected data must be inside the range $[-(2k+1),2k+1]$. The misbehaviors (i) and (ii) cannot be detected  without inspecting the content of the shares themselves, but the misbehavior (iii) is detected with certainty if the dishonest nodes transmit or corrupt the collected data outside the range $[-(2k+1),2k+1]$. 
Noted that we do not consider the Sybil attacks, 
hence, in case (iii) a dishonest node cannot create a forged message containing identity of other node. Moreover, this activity does not affect the final result since a node receives directly a  message from source $s$ (if it is a neighbor of $s$) or  gets a majority of receiving messages ($\lceil (m+1)/2 \rceil$) containing the correct data of $s$.  
We show the impact of these misbehaviors on accuracy in Theorems \ref{total_impact_theorem} and \ref{theo:avg_impact}.
\vspace{-1mm}

\begin{theorem}[Maximum impact]\label{total_impact_theorem}
Each dishonest node may affect the final result up to $6k+4$. 
\vspace{-3mm}
\end{theorem}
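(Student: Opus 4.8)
The plan is to decompose the behaviour of a single dishonest node $d$ into the three misbehaviours (i)--(iii) enumerated above, to bound separately how much each of them can perturb the quantity $\sum_{s}c_s$ that every honest node ultimately adds up in the aggregating phase, and then to sum the three bounds.

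First I would treat $d$ as a \emph{voter} (misbehaviour (i)). Only the consumers assigned to $d$ accept its shares, and by Property~$P_{g_1}$ there are at most $2k+1$ of them; since every share lies in $\{-1,1\}$, the shares $d$ emits contribute to $\sum_s c_s$ some value in $[-(2k+1),\,2k+1]$, whereas a legitimate vote of $d$ would contribute a value in $\{-1,1\}$. The worst-case discrepancy is therefore $(2k+1)-(-1)=2k+2$. Next I would treat $d$ as a \emph{consumer} (misbehaviour (ii)): by Property~$P_{g_1}$ it has at most $2k+1$ producers, so the collected datum it would honestly compute lies in $[-(2k+1),\,2k+1]$, and by the range-check verification any broadcast value outside $[-(2k+1),\,2k+1]$ is detected with certainty, hence --- by the reputation assumption of Section~\ref{subsec:system_model} --- $d$ reports only values inside that interval; its reported collected datum thus differs from the honest one by at most $(2k+1)-(-(2k+1))=4k+2$ (equivalently, inverting every received ``$+1$''-share costs at most $2(2k+1)$).

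It then remains to show that misbehaviour (iii) contributes nothing to $\sum_s c_s$. A direct neighbour of a source $s$ copies the value $s$ itself broadcasts verbatim, and that value --- $s$'s true collected datum when $s$ is honest, and an in-range quantity already charged to (ii) when $s$ is dishonest --- is what we take as the canonical value of $s$; every other honest node $n$ decides $h_n[s]$ as the most-represented value among the $m$ copies it collects. By Property~$P_{g_3}$ strictly fewer than $m/2$ of the predecessors that forward those copies to $n$ are dishonest, and an induction along the topological ordering with respect to $s$ --- base case the neighbours of $s$, in the spirit of the accuracy part of Theorem~\ref{lem:accuracy_idealcase} --- shows that every honestly forwarded copy carries the canonical value; hence more than $m/2$ of $n$'s copies are canonical and its decision is correct. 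So corrupting a forwarded collected datum within range never survives the majority, corrupting it out of range is detected, and forging a message carrying another node's identity is ruled out since Sybil attacks are not considered.

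Finally, since a single dishonest node is simultaneously a voter and a consumer it may commit (i) and (ii) at once, so summing the two bounds gives a total effect on $\sum_s c_s$, and hence on the output computed by every honest node, of at most $(2k+2)+(4k+2)=6k+4$. The delicate step, and the one I would develop most carefully, is the induction of the previous paragraph: it is precisely there that Property~$P_{g_3}$ and the correctness of the ``most represented value'' rule must be combined to guarantee that, despite up to $\lfloor (m-1)/2\rfloor$ adversarial predecessors per node, no corrupted forward ever changes an honest node's view.
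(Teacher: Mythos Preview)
Your proposal is correct and follows essentially the same approach as the paper: you bound misbehaviour~(i) by $2k+2$, misbehaviour~(ii) by $4k+2$, argue that misbehaviour~(iii) is neutralised by the majority rule under Property~$P_{g_3}$, and sum to $6k+4$. The only difference is cosmetic---you make the induction along the topological ordering for~(iii) more explicit than the paper does, but the underlying argument is identical.
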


\begin{proof}
% Let us recall that a dishonest node may affect the final outcome if it does one of the following misbehaviors:
% \vspace{-3mm}
% \begin{enumerate}[(i)]
% \item It sends more than $k+1$ identical shares.
% \item It inverts each receiving $+1$-share into $-1$-share.
% \item It modifies the collected data of other honest node or sends a forged message.
% \end{enumerate}
% \vspace{-3mm}
In cases (i) and (ii) of the dishonest misbehaviors, the worst case of these misbehaviors occurs when the dishonest node sends $2k+1$  shares of value ``$-1$'' and inverts all the receiving ``$+1$''-shares into ``$-1$''-shares before computing the collected data. As a node is allowed to send only a set of shares that can regenerate to the vote of value ``$-1$'' or ``$+1$'', the first and the second attacks respectively affect up to $2k+1-(-1)=2k+2$ (if a node votes for ``$+1$'' but it generates $2k+1$ shares of value ``$-1$'') and $(2k+1)-(-2k-1)=4k+2$ (if it gets $2k+1$ shares of value ``$+1$'' but inverts all into the shares of value ``$-1$''). 
In the misbehavior (iii), a dishonest node only modifies the collected data of other honest node $s$ such that the data is in the range $[-2k-1,2k+1]$, otherwise its misbehavior will be detected. However this activity does not affect the final result since one node receives a direct message from $s$ (if it is a neighbor of $s$) or receives $m$ forwarding messages from neighbors in which at most $\lfloor (m-1)/2\rfloor$  messages are corrupted, and majority messages ($\lceil (m+1)/2 \rceil$) contain correct data, thus an honest node always obtains the correct collected data of $s$.
Therefore, the maximum impact from one dishonest node is $2k+2+4k+2=6k+4$.
\end{proof}

% By Theorems \ref{theo:privacy_generalcase} and \ref{total_impact_theorem}, we see that the number of users voting with $2k+1$ shares affects to the privacy and accuracy. If that number increases, then the privacy increases but accuracy decreases, and vice versa.

%\begin{corollary} \label{cor:max_number_dishonest_nodes}
%Let $\alpha$, in average, be the proportion of
%nodes voting for ``$+1$''. The maximum  number of dishonest nodes system can
%tolerate so that the final result is not affected is
%$\lfloor\frac{N|2\alpha-1|}{6k+4}\rfloor$. 
%\end{corollary}
% \begin{proof}                                                           
% The expected final outcome is   $\alpha N.1+(1-\alpha)N.(-1)=(2\alpha-1)N$. By Theorem \ref{total_impact_theorem} the  maximum
% impact from dishonest coalition is $(6k + 4)D$. The final result is not affected if $(6k + 4)D\le |2\alpha-1|N$ and this yields the proof. 
% \end{proof}
\begin{theorem}[Average impact]\label{theo:avg_impact}
%Assume a node can send  $2i+1$ shares in the sharing phase where $0\le i\le k$. 
Let  $\alpha$   be the proportion of nodes voting for ``$+1$''. The average impact from a dishonest node is $I_{avg}=\bigl[\sum_{i=0}^{k}\gamma_i (2i+1)\bigr]\bigl[1+2\sum_{i=0}^{k}\gamma_i\frac{i+\alpha}{2i+1}\bigr]+1$.
\vspace{-3mm}
\end{theorem}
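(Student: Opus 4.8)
The plan is to split the average distortion produced by a single dishonest node into the two misbehaviors that actually perturb the tally, namely misbehavior~(i) (distributing biased shares) and misbehavior~(ii) (inverting the shares it receives); misbehavior~(iii) has already been shown, in the discussion preceding Theorem~\ref{total_impact_theorem}, to have no effect on the outcome, since every honest node recovers the correct collected data of each source by majority over the $m$ forwarded copies. What makes the two remaining contributions additive is that the shares a node $n$ \emph{sends} are summed into its consumers' collected data while the shares it \emph{receives} are summed into its own $c_n$ (by the sharing phase of Algorithm~\ref{algo_polling}), and the output equals $\sum_n c_n=\sum_n(\text{sum of the shares sent by }n)$, so the two perturbations fall into disjoint terms of that sum and can be bounded separately and then added.

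For misbehavior~(i): a dishonest node that was assigned $2i+1$ consumer slots — which happens with probability $\gamma_i$ by the very definition of $\gamma_i$ — can, at worst, send $2i+1$ shares all equal to $-1$, so its net contribution to the output is $-(2i+1)$ while its nominal vote, taken to be $+1$ (the value opposite to the coalition's bias, as in the proof of Theorem~\ref{total_impact_theorem}), should have contributed $+1$; the deviation is $2i+2$, and averaging with weights $\gamma_i$ and using $\sum_{i=0}^{k}\gamma_i=1$ yields $\sum_{i=0}^{k}\gamma_i(2i+2)=\bigl[\sum_{i=0}^{k}\gamma_i(2i+1)\bigr]+1$. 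For misbehavior~(ii): a node receives $\sum_{i=0}^{k}\gamma_i(2i+1)$ shares on average, which follows from the double counting $\sum_n|\mathcal{R}_n|=\sum_n|\mathcal{S}_n|=\sum_n(2i_n+1)$ divided by $N$, exactly as in the proof of Theorem~\ref{max_num_revealed_votes}; and a single received share, coming from a producer that votes with $2i+1$ shares (probability $\gamma_i$), is — by the random permutation $\mu$ — a uniformly chosen one of that producer's shares, hence $+1$ with probability $\frac{i+1}{2i+1}$ if the producer voted $+1$ (probability $\alpha$) and $\frac{i}{2i+1}$ otherwise, i.e.\ $\frac{i+\alpha}{2i+1}$ on average over that producer's vote and $\sum_{i=0}^{k}\gamma_i\frac{i+\alpha}{2i+1}$ over a random producer. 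Therefore the dishonest node collects $\bigl[\sum_{i}\gamma_i(2i+1)\bigr]\bigl[\sum_{i}\gamma_i\frac{i+\alpha}{2i+1}\bigr]$ shares of value $+1$ in expectation, each of which, once flipped to $-1$, shifts $c_n$ by $2$; summing with the contribution from~(i) gives $I_{avg}=\bigl[\sum_{i=0}^{k}\gamma_i(2i+1)\bigr]\bigl[1+2\sum_{i=0}^{k}\gamma_i\frac{i+\alpha}{2i+1}\bigr]+1$.

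The main obstacle is the rigorous justification of the averaging in misbehavior~(ii): on a fixed graph, $|\mathcal{R}_n|$, the identities of $n$'s producers, and those producers' votes are not genuinely independent draws governed by $\gamma$ and $\alpha$, so the tidy product form is really a mean-field estimate rather than an identity. I would make it precise by writing everything as an expectation over the uniform choice of a node together with one of its incident producer-links: then $\mathbb{E}\,|\mathcal{R}_n|=\sum_i\gamma_i(2i+1)$ and the probability that such a link carries a $+1$ equals $\sum_i\gamma_i\frac{i+\alpha}{2i+1}$ by linearity of expectation alone, and the product appears only upon invoking the same ``typical node'' modeling of $\gamma_i$ and $\alpha$ that is already used implicitly throughout Section~\ref{sec:protocol_with_dishonest}, so no additional assumption is introduced.
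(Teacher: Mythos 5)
Your proposal is correct and follows essentially the same route as the paper's proof: the same decomposition into misbehaviors (i) and (ii) with (iii) discarded, the same per-share probability $\sum_{i=0}^{k}\gamma_i\frac{i+\alpha}{2i+1}$ of a received ``$+1$''-share, the same average count $\sum_{i=0}^{k}\gamma_i(2i+1)$ of sent/received shares, and the same product-of-averages combination, with only a cosmetic rearrangement of the misbehavior-(i) term ($\sum_i\gamma_i(2i+2)$ versus $1+\sum_i\gamma_i(2i+1)$). Your closing remark that the product form is a mean-field estimate matching the paper's implicit modeling is a fair observation but does not change the argument.
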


\begin{proof}
Each node generates $i+1$ shares of its tendency, and $i$ opposite shares, thus, the consumers receive a ``$+1$''-share with probability $\sum_{i=0}^{k} \gamma_i[\alpha\frac{i+1}{2i+1}+(1-\alpha)\frac{i}{2i+1}]=\sum_{i=0}^{k} \gamma_i\frac{i+\alpha}{2i+1}$. In addition, as the average number of receiving shares is $\sum_{i=0}^{k}\gamma_i (2i+1)$, the average number of ``$+1$''-shares is $\bigl[\sum_{i=0}^{k}\gamma_i (2i+1)\bigr]\bigl[\sum_{i=0}^{k}\gamma_i\frac{i+\alpha}{2i+1}\bigr]$. By altering any receiving ``$+1$''-share into ``$-1$''-share this impacts  $(1-(-1))\bigl[\sum_{i=0}^{k}\gamma_i (2i+1)\bigr]\bigl[\sum_{i=0}^{k}\gamma_i\frac{i+\alpha}{2i+1}\bigr]=2\bigl[\sum_{i=0}^{k}\gamma_i (2i+1)\bigr]\bigl[\sum_{i=0}^{k}\gamma_i\frac{i+\alpha}{2i+1}\bigr]$. Moreover, a dishonest node sends on average $\sum_{i=0}^{k}\gamma_i (2i+1)$ shares of value ``$-1$'' which affects by $1+\sum_{i=0}^{k}\gamma_i (2i+1)$. Consequently, the total average  impact from a dishonest node is $I_{avg}=\bigl[\sum_{i=0}^{k}\gamma_i (2i+1)\bigr]\bigl[1+2\sum_{i=0}^{k}\gamma_i\frac{i+\alpha}{2i+1}\bigr]+1$.
\end{proof}

The quantity $I_{avg}$ is minimized when all nodes generate the same number of shares, e.g., $2i+1$, and thus $I_{avg}=2(2i+\alpha+1)$.
In the worst case, a dishonest node sends $2k+1$ shares, the minimized average impact is $I_{avg}=2(2k+\alpha+1)$.
 %Since we consider the worst case where all dishonest nodes send $2k+1$ shares, minimum value of $I_{avg}$ is $2(2k+1+\alpha)$ when all nodes send $2k+1$ shares.
% \vspace{-1mm} 
 
\begin{corollary}\label{cor:bias_2k+1}
The expected biased result is in the range $[(2\alpha-1)N-(6k+4)D;$ $(2\alpha-1)N]$. More particularly, if all nodes send $2k+1$ shares, then the expected biased result is $(2\alpha-1)N-(4k+2\alpha+2)D$.
\vspace{-3mm}
\end{corollary}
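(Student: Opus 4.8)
\textbf{Proof plan for Corollary \ref{cor:bias_2k+1}.}

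The plan is to combine the ``no-dishonest-node'' correctness statement of Theorem \ref{lem:accuracy_idealcase} with the per-node impact bounds of Theorems \ref{total_impact_theorem} and \ref{theo:avg_impact}, treating the biased result as the honest expected outcome plus a perturbation contributed by the $D$ dishonest nodes. First I would observe that, were all $N$ nodes honest and a proportion $\alpha$ voting ``$+1$'', the expected poll outcome would be $\sum_n v_n = \alpha N - (1-\alpha)N = (2\alpha-1)N$; this is the baseline around which everything else is measured. The dishonest coalition of size $D$ then shifts this value, and by Theorem \ref{total_impact_theorem} each dishonest node can move the final tally by at most $6k+4$ (always downward in the worst case, since w.l.o.g.\ dishonest nodes vote ``$-1$'' and invert received ``$+1$''-shares), which pins the biased result into the interval $[(2\alpha-1)N-(6k+4)D,\ (2\alpha-1)N]$. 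The upper endpoint corresponds to the case where dishonest nodes cause no net decrease (indeed they cannot increase the count beyond what honest behaviour would give, since their own votes are ``$-1$'' and corrupting shares only lowers sums), and the lower endpoint is $D$ times the maximum per-node impact.

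For the sharper ``$2k+1$ shares'' claim, I would specialize Theorem \ref{theo:avg_impact}: when every node generates exactly $2k+1$ shares we have $\gamma_k = 1$ and $\gamma_i = 0$ for $i<k$, so the average impact per dishonest node becomes
\[
I_{avg} = (2k+1)\Bigl(1 + 2\cdot\frac{k+\alpha}{2k+1}\Bigr) + 1 = (2k+1) + 2(k+\alpha) + 1 = 4k + 2\alpha + 2.
\]
Multiplying by $D$ dishonest nodes and subtracting from the honest baseline $(2\alpha-1)N$ yields the expected biased result $(2\alpha-1)N - (4k+2\alpha+2)D$, exactly as stated. I would note that this matches the ``minimized average impact $I_{avg}=2(2k+\alpha+1)$'' remark already made just before the corollary, so the computation is essentially a restatement in terms of the global outcome rather than per-node effect.

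The main obstacle — really the only subtle point — is justifying that the contributions of the $D$ dishonest nodes combine \emph{additively} with no hidden double-counting or interaction effects. This rests on two facts already established: (a) the broadcasting phase with Property $P_{g_3}$ guarantees every honest node recovers the \emph{correct} collected data of every other node (misbehavior (iii) has no effect on the outcome, as argued in the proofs of Theorems \ref{lem:accuracy_idealcase} and \ref{total_impact_theorem}), so the only damage is through shares sent by and received by dishonest nodes in the sharing phase; and (b) each share is counted exactly once in the final sum $\sum_j c_j$, so a dishonest node's two avenues of attack — emitting up to $2k+1$ forged shares, and inverting up to $2k+1$ received ``$+1$''-shares — affect disjoint share-slots and hence add linearly across the coalition. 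Granting this linearity, the interval bound and the exact value both follow by arithmetic, so I would keep the proof short: cite Theorem \ref{total_impact_theorem} for the range, then substitute $\gamma_k=1$ into Theorem \ref{theo:avg_impact} for the exact figure.
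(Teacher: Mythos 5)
Your proposal is correct and follows essentially the same route as the paper: compute the honest baseline $(2\alpha-1)N$, apply Theorem \ref{total_impact_theorem} for the interval, and specialize Theorem \ref{theo:avg_impact} with $\gamma_k=1$ to obtain $I_{avg}=4k+2\alpha+2$ and hence the value $(2\alpha-1)N-(4k+2\alpha+2)D$. Your additional remarks on additivity of the per-node impacts only make explicit what the paper leaves implicit; no gap.
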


\begin{proof}
The expected outcome is   $\alpha N.1+(1-\alpha)N.(-1)=(2\alpha-1)N$. 
By Theorem \ref{total_impact_theorem}, the maximum impact of dishonest nodes is $(6k+4)D$, hence, the biased result must be in the range $[(2\alpha-1)N-(6k+4)D;$ $(2\alpha-1)N]$. In addition, by
Theorem \ref{theo:avg_impact}, if all nodes send $2k+1$ shares then total average impact is $(4k+2\alpha+2)D$, and this yields the proof.
\end{proof}

By Theorems \ref{theo:privacy_generalcase}--\ref{theo:avg_impact} and Corollaries \ref{cor:prob_reveal_certainty}--\ref{cor:bias_2k+1}, for a fixed parameter $k$,  the number of users
voting with a high number of shares (e.g., $2k+1$ shares) affects the privacy and accuracy.
More concretely, if we care about vote privacy, we should augment the number of nodes generating $2k+1$ shares since the probability to disclose a node's vote  with certainty ($P_{ce}$) and with greedy uncertainty ($P_{gr}$) will decrease. But this rises up the probability $P_{un}$ of non-greedy vote disclosure and the impact on the final outcome.
In contrast, if we take care of the accuracy of the final result, we should decrease the number of nodes voting with $2k+1$ shares since that reduces the impact on the final outcome. 
It also decreases $P_{un}$. %the probability of non-greedy vote detection. 
However this increases the values $P_{ce}$ and $P_{gr}$. %probability of a node's vote to be revealed with certainty and with greedy uncertainty.   
%\vspace{-1mm}

%---------------------------------------------

% \medskip
\noindent\textbf{Security.}
In this part,
% we first show the necessary condition of the graphs such that EPol could be deployed correctly, then 
we compute the maximum number of dishonest nodes that EPol can tolerate. %Recall that $\Delta_G$ is the network diameter.
% \vspace{-1mm}

\begin{lemma}\label{lem:bound_m-1}
A node decides correctly the collected data of an honest source $s$  if it connects directly to $s$ or there are at most $(m-1)/2$ dishonest neighbors preceding it (in the ordering w.r.t. source $s$).
\vspace{-3mm}
\end{lemma}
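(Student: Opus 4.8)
The plan is to prove the claim by strong induction on the position $\tau_s(n)$ of the node $n$ in the topological ordering of the vertices with respect to $s$ whose existence Property $P_{g_2}$ (the $m$-broadcasting property) guarantees; throughout, ``$n$ decides correctly'' means that the broadcasting procedure sets $h_n[s]$ to the true collected data $c_s$ of the honest source $s$. I work inside a graph of $\mathcal{G}_2$, so that Property $P_{g_3}$ gives the \emph{uniform} bound that every vertex has fewer than $m/2$, i.e.\ at most $\lfloor (m-1)/2\rfloor$, dishonest neighbours preceding it in the ordering; this uniformity is exactly what lets the induction hypothesis propagate from predecessors to $n$.

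First, the base case and the first alternative of the statement are immediate: if $n$ is a neighbour of $s$ then, since $s$ is honest and (in the setting of EPol) there is no message loss, $n$ receives $(\mathrm{DATA},s,c_s)$ directly along the edge $sn$, which triggers the $r=s$ branch and sets $h_n[s]\gets c_s$; in particular the earliest non-source vertex of the ordering, having $s$ as its only possible predecessor, falls here. For the inductive step let $n$ be honest, non-adjacent to $s$, and assume that every honest vertex strictly before $n$ that is adjacent to $s$ or has at most $\lfloor (m-1)/2\rfloor$ dishonest preceding neighbours decides correctly. By $P_{g_2}$ we have $\beta_n(s)\ge m$, and the protocol has $n$ fill the multiset $\mathcal{C}_n[s]$ with $m$ values taken from (the first messages sent by) $m$ of its preceding neighbours before invoking $\Decide$. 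Since at most $\lfloor (m-1)/2\rfloor$ of $n$'s preceding neighbours are dishonest, at least $m-\lfloor (m-1)/2\rfloor = \lceil (m+1)/2\rceil$ of those $m$ contributors are honest predecessors of $n$. Each such contributor $u$ is strictly before $n$, satisfies the hypothesis of the lemma by $P_{g_3}$, hence by induction has $h_u[s]=c_s$, and — being a neighbour of $n$ that precedes it — forwards $(\mathrm{DATA},s,c_s)$ to $n$. Thus $c_s$ occurs at least $\lceil (m+1)/2\rceil>m/2$ times in $\mathcal{C}_n[s]$, a strict majority of its $m$ entries, so $c_s$ is the unique most represented value and $\Decide(\mathcal{C}_n[s])=h_n[s]=c_s$. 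The same count shows that no incorrect value can reach more than $m/2$ occurrences (wrong entries can only come from the at most $\lfloor (m-1)/2\rfloor$ dishonest predecessors), so the optional early-decision shortcut ``stop as soon as more than $m/2$ identical values are seen'' also returns $c_s$.

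The argument is short, so there is no deep obstacle; the two points that need care are (a) setting the induction up over the $P_{g_2}$-ordering and noticing that it is precisely Property $P_{g_3}$ that makes the honest predecessors of $n$ inherit the lemma's hypothesis, so the induction actually closes, and (b) the elementary floor/ceiling bookkeeping that turns ``fewer than $m/2$ dishonest predecessors'' into ``a strict majority of the $m$ received values equals $c_s$''. This same counting is what will make the forthcoming global bound $D\le(m-1)\Delta_G/2$ the right hypothesis: along a shortest path of length at most $\Delta_G$ one can afford at most $(m-1)/2$ dishonest nodes per step while keeping $P_{g_3}$ satisfied for the ordering.
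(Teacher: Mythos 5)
Your proof is correct and follows essentially the same route as the paper's: the direct-neighbour case is immediate, and otherwise at most $\lfloor (m-1)/2\rfloor$ of the $m$ entries of $\mathcal{C}_n[s]$ can be corrupted, so the majority value returned by $\Decide$ is $c_s$. The only difference is that you make explicit, via induction along the $P_{g_2}$-ordering together with the global Property $P_{g_3}$, the step the paper leaves implicit --- namely that honest preceding neighbours indeed forward the true $c_s$ --- which is a welcome tightening rather than a genuinely different argument.
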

 
 \begin{proof}
 It is a trivial case when a node connects directly to source $s$. We only consider the case where a node $n\notin\Gamma(s)$. Since a node gets randomly $m$ messages among $\beta_n(s)$ ones, and its decision is based on the majority appearance of a value, it is necessary that less than $m/2$ messages are corrupted. This infers the desired result.
 \end{proof}
 	
\begin{theorem}[Tolerance to dishonest nodes]   \label{theo:max_tolerance}
The maximum number of dishonest nodes that EPol can tolerate is $\frac{m-1}{2}\Delta _G$.
\vspace{-3mm}
\end{theorem}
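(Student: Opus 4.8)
The plan is to prove the statement in two halves: (a) that the condition $D\le\frac{m-1}{2}\Delta_G$ — or, more precisely, Property $P_{g_3}$, which this inequality guarantees can be met — is \emph{sufficient} for EPol to output the expected value on honest sources, and (b) that it is also \emph{necessary}, i.e. once $D>\frac{m-1}{2}\Delta_G$ some honest node is forced to have at least $m/2$ dishonest preceding neighbours, and that the bound is attained.

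For (a), fix an honest source $s$ (one exists since every node has an honest friend). I would show by induction along the topological ordering $\pi_s$ provided by the $m$-broadcasting property that every honest node $n$ eventually stores $h_n[s]=c_s$. The base cases are $s$ and its neighbours $\Gamma(s)$, which receive $c_s$ directly (lines~\ref{mark2st_br_proc}--\ref{mark2end_br_proc}). For the step, an honest $n\notin\Gamma(s)$ has $\beta_n(s)\ge m$ preceding neighbours, of which fewer than $m/2$ are dishonest by $P_{g_3}$; the remaining strictly-more-than-$m/2$ honest ones come earlier in $\pi_s$ and, by the induction hypothesis, forward the correct $c_s$, so among the first $m$ messages $n$ collects at most $\lfloor(m-1)/2\rfloor$ carry a corrupted value and the most represented value is $c_s$. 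This is exactly Lemma~\ref{lem:bound_m-1} propagated up the ordering. Combined with Theorem~\ref{total_impact_theorem} for dishonest sources (whose data is re-decided by the same majority rule and whose corruption beyond $[-(2k+1),2k+1]$ is detected), this gives correctness whenever $P_{g_3}$ holds.

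For (b), fix an honest source $s$ and stratify the vertices by graph distance from $s$ into layers $L_0=\{s\},L_1,\dots,L_e$ with $e=\mathrm{ecc}(s)\le\Delta_G$. Intuitively, $c_s$ must cross all $e\le\Delta_G$ layer boundaries to reach the farthest node, and at each boundary the majority rule in the decide step absorbs at most $\lfloor(m-1)/2\rfloor$ corrupted incoming messages, hence at most $(m-1)/2$ dishonest nodes ``feeding'' that boundary; charging each dishonest node to the boundary at which it could corrupt $c_s$ yields $D\le\frac{m-1}{2}\Delta_G$. Equivalently, if $D$ were larger, some boundary would be crossed only through at least $m/2$ dishonest nodes, so some honest node would violate $P_{g_3}$ and, by Lemma~\ref{lem:bound_m-1}, decide incorrectly. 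Tightness then follows from an explicit ``layered'' graph: take $\Delta_G+1$ layers, each of size at least $\max\{m,2k+2\}$, join consecutive layers completely, put exactly $(m-1)/2$ dishonest nodes in each of the layers $L_1,\dots,L_{\Delta_G}$, and make every other node honest; a routine check shows this graph satisfies $P_{g_1},P_{g_2},P_{g_3}$, has diameter $\Delta_G$, and carries $D=\frac{m-1}{2}\Delta_G$ dishonest nodes.

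The step I expect to be the real obstacle is the per-boundary accounting in (b): the ordering $\pi_s$ need not be consistent with a BFS layering, and a single dishonest node can be a preceding neighbour of many nodes spread across several layers, so one cannot literally read off ``$(m-1)/2$ per layer'' from the per-node Property $P_{g_3}$. I would close this gap either by arguing that for graphs in $\mathcal{G}_2$ the ordering $\pi_s$ may always be refined from a BFS layering (so that the $m$ preceding neighbours of a non-source-neighbour lie in strictly smaller layers and the layered accounting is exact), or by a direct worst-case reduction to the layered graph above; the remaining bookkeeping is then elementary.
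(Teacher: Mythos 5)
Your argument follows essentially the same route as the paper: its proof likewise stratifies the dishonest nodes by their distance from the source, bounds each layer $|\Gamma_s^i\cap\mathcal{D}|$ by $(m-1)/2$ via Lemma~\ref{lem:bound_m-1}, sums over the at most $Rad(s,G)\le\Delta_G$ layers, and maximizes over sources. The per-boundary accounting you flag as the real obstacle is exactly the step the paper also treats informally (it notes that a preceding neighbour may lie at any distance from $s$ and then simply asserts the per-layer condition), so your induction for sufficiency and the explicit layered tightness construction only make the same argument more explicit rather than taking a different route.
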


\begin{proof}
We denote $\Gamma_n^1=\Gamma(n)$, and define recursively the set of friends at distance $j$ of $n$, where $j>1$ by $\Gamma^j_n=\{u \mid \delta(u,n)=j\}=  \{u\mid u\in\Gamma_v \wedge v\in\Gamma_n^{j-1} \wedge u\notin \bigcup_{k<j} \Gamma_n^k \}$.
We define by $\varphi_n(s)$ the set of $n$'s neighbors preceding $n$ in the ordering w.r.t. source $s$. We have $|\varphi_n(s)|=\beta_n(s)$.
Let $dist(u,v)$ denote the distance, i.e., the length of the shortest path, between nodes $u$ and $v$.
By Lemma \ref{lem:bound_m-1},  node $n$  decides correctly about the collected data of honest source $s$  if $n\in\Gamma(s)$ or $|\varphi_n(s)\cap\mathcal{D}|\le (m-1)/2$.
To find out the highest number of dishonest nodes in the system we only consider the case $n\notin \Gamma(s)$. 
In general,  node $u\in\varphi_n(s)$  may be at any distance from $s$ (even $dist(u,s)\ge dist(n,s)$). It means $n$ may receive a corrupted data from a dishonest node at any distance from $s$. Since $n$ receives at most $(m-1)/2$ corrupted messages, the necessary condition to guarantee $n$ decides correctly $s$'s collected data is there are at most $(m-1)/2$ dishonest friends of $s$ at distance $i$, i.e., $|\Gamma_s^i\cap\mathcal{D}|\le (m-1)/2$ where $1\le i \le Rad(s,G)$ and $Rad(s,G)=\max_{u\in V}\{dist(s,u)\}$. 
This infers the maximum number of dishonest nodes such that each node receives correct collected data of source $s$ is $D(s)=\sum_{i=1}^{Rad(s,G)}|\Gamma_s^i\cap\mathcal{D}|\le \frac{m-1}{2}Rad(s,G)$. Considering all source nodes in the network, we have $D\le\max_{s\in V}\{D(s)\}=\frac{m-1}{2}\Delta _G$. 
\begin{comment}
An example of the graph which tolerates $D=\frac{m-1}{2}\Delta _G$ dishonest nodes is illustrated in Fig.\ref{fig:max_tolerance_graph} for $m=3$.
 
\end{comment}
\end{proof}
Notice that our protocol can tolerate more than $\sqrt{N}$ dishonest nodes for a ring-based structure introduced in \cite{DBLP:conf/opodis/GuerraouiHKM09,DBLP:journals/jpdc/GuerraouiHKMV12} (unlike these works tolerate less than $\sqrt{N}$ dishonest nodes). Indeed, as this structure has the diameter $\Delta_G=\sqrt{N}$, and with parameter $m\ge 3$, the upper bound of $D$ is not less than $\sqrt{N}$.

\begin{comment}
\begin{figure}[t]
\centering
\begin{tikzpicture} 
[>=stealth',inner sep=0pt,
dishonestnode/.style={
	% The shape:
	circle, %rounded rectangle,
	% The size:
	minimum size=3mm,   
	% The border:
	thick,	
	draw, % 50% red and 50% black,
	% and that mixed with 50% white
	% The filling:
% 	top color=black, % a shading that is white at the top...
% 	bottom color=black, % and something else at the bottom
	fill,% Font
	font=\sffamily
},
honestnode/.style={
% The shape:
circle, %rounded rectangle,
minimum size=3mm,
% The rest
thick,
draw,%=black!50,
% The filling:
%top color=white,bottom color=black!50,
font=\sffamily
},every label/.style={thick}
]
\def \n {6}
\def \radius {1.5cm} 
\def \margin {45} % margin in angles, depends on the radius

\foreach \j in {0,1}
 \foreach \s in {0,1,2,3,4,5} 
{
  \node[honestnode] (v\j\s) at (\s,-\j) {};        
}

 \foreach \s in {0,1,2,3,4,5} 
{
  \node[dishonestnode] (v2\s) at (\s,-2) {};  
}
\coordinate (y1) at (-1,-1) {};
\coordinate (y2) at (6,-1) {};

\foreach \j/\s in {00/01,00/11,00/21, 10/01,10/11,10/21, 20/01,20/11,20/21,
01/02,01/12,01/22, 11/02,11/12,11/22, 21/02,21/12,21/22, 
02/03,02/13,02/23, 12/03,12/13,12/23, 22/03,22/13,22/23,
03/04,03/14,03/24, 13/04,13/14,13/24, 23/04,23/14,23/24,
04/05,04/15,04/25, 14/05,14/15,14/25, 24/05,24/15,24/25}
{
  \path (v\j) edge (v\s) ;       
} 

\path[-,dashed] (v10) edge (y1) ;   
\path[-,dashed] (v15) edge (y2) ; 
   
\end{tikzpicture}
\caption{A graph where protocol tolerates $\frac{m-1}{2}\Delta_G$ dishonest nodes.}
\label{fig:max_tolerance_graph}
\end{figure}

\end{comment}

\begin{corollary} \label{cor:prob_converge}
If $D\le\frac{m-1}{2}\Delta_G$ then a node decides wrongly the collected data of some other node with the probability converging to 0 exponentially fast in $N$ (and $\Delta_G$).
\vspace{-3mm}
\end{corollary}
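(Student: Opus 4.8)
The plan is to treat Corollary~\ref{cor:prob_converge} as a concentration statement built on top of the bound $D\le\frac{m-1}{2}\Delta_G$ established in Theorem~\ref{theo:max_tolerance}, together with the decision rule analysed in Lemma~\ref{lem:bound_m-1}. First I would fix an honest source $s$ and an honest node $n\notin\Gamma(s)$, and recall from Lemma~\ref{lem:bound_m-1} that $n$ decides wrongly the collected data of $s$ only if at least $\lceil m/2\rceil$ of the $m$ messages it accepts are corrupted, i.e. originate along ordered paths passing through dishonest nodes among the $\beta_n(s)$ predecessors of $n$. Because the $m$ accepted messages are chosen uniformly at random among the $\beta_n(s)$ candidate predecessors (as in the proof of Lemma~\ref{lem:bound_m-1}), the number $X$ of corrupted messages among them is hypergeometrically distributed with parameters $(\beta_n(s),\, |\varphi_n(s)\cap\mathcal D|,\, m)$. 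Under Property~$P_{g_3}$ the expectation $m\,|\varphi_n(s)\cap\mathcal D|/\beta_n(s)$ is strictly below $m/2$, so the bad event $\{X\ge m/2\}$ is a large-deviation event for a hypergeometric variable.

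The key steps, in order, would be: (1) write the failure event for the pair $(n,s)$ as $\{X_{n,s}\ge \lceil m/2\rceil\}$ and invoke the hypergeometric tail bound (Hoeffding / Chvátal inequality, which applies to sampling without replacement just as to the binomial) to get $\Pr[X_{n,s}\ge m/2]\le e^{-c m}$ for a constant $c>0$ depending only on the gap between $|\varphi_n(s)\cap\mathcal D|/\beta_n(s)$ and $1/2$ guaranteed by $P_{g_3}$; (2) observe that by Property~$P_{g_2}$ every node has $\beta_n(s)\ge m$, and that along the topological ordering the corruption can only reach $n$ through the at most $(m-1)/2$ dishonest predecessors allowed by $P_{g_3}$, so the per-node failure probability is at most $e^{-cm}$ \emph{uniformly}; (3) take a union bound over all pairs $(n,s)$, of which there are fewer than $N^2$, to conclude that the probability \emph{some} honest node decides wrongly about \emph{some} source is at most $N^2 e^{-cm}$; (4) finally relate $m$ to $N$ and $\Delta_G$ via Theorem~\ref{theo:max_tolerance}: since the hypothesis $D\le\frac{m-1}{2}\Delta_G$ together with the standing assumption that $D$ grows with the system (e.g.\ comparably to $\sqrt N$ as in \cite{DBLP:conf/opodis/GuerraouiHKM09,DBLP:journals/jpdc/GuerraouiHKMV12}) forces $m\ge 1+2D/\Delta_G$, the exponent $cm$ is $\Omega(D/\Delta_G)$, which dominates $2\log N$ for $N$ large, giving a bound that tends to $0$ exponentially fast in $N$ (and, reading the estimate the other way, in $\Delta_G$ as well). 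One then states $\Pr[\text{some wrong decision}]\le N^2 e^{-c m}=e^{-\Omega(N)}$.

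The main obstacle I anticipate is making precise the claim that the failure probability decays exponentially \emph{in $N$}: the per-node bound $e^{-cm}$ is exponential in $m$, not in $N$, so the statement only holds once one pins down the intended scaling regime in which $m$ itself is at least linear in $N$ (or in $D$, with $D$ linear in $N$). The excerpt is somewhat informal here — it parenthetically says ``and $\Delta_G$'' — so the honest version of the argument is: the probability of a wrong decision is at most $N^2 e^{-cm}$, and \emph{whenever the parameters are chosen so that $D=\Theta(N)$ (hence $m=\Omega(N/\Delta_G)$), or more simply whenever $m=\Omega(N)$}, this is $e^{-\Omega(N)}$. I would therefore phrase the corollary's proof to surface this dependence explicitly rather than hide it, deriving the clean tail bound $N^2 e^{-\Omega(m)}$ first and then specializing. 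A secondary, more routine obstacle is choosing the right hypergeometric tail inequality and checking that its hypotheses (the mean being bounded away from the threshold, which is exactly what $P_{g_3}$ provides) are met uniformly over all $n$ and $s$; this is standard once $P_{g_3}$ is invoked, so I would not belabour it.
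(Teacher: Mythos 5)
There is a genuine gap: you have misidentified what is random in this corollary, and as a result your concentration argument either proves a trivial statement or has no gap to exploit. The failure event analysed by the paper is, via Lemma~\ref{lem:bound_m-1}, the event $\{|\varphi_n(s)\cap\mathcal{D}|\ge m/2\}$, i.e.\ the event that the \emph{local} predecessor condition (Property~$P_{g_3}$) fails at node $n$ for source $s$; the randomness lies in the placement of the $D$ dishonest nodes among the predecessor sets, and the hypothesis $D\le\frac{m-1}{2}\Delta_G$ is what creates the deviation gap. Concretely, the paper treats $|\varphi_n(s)\cap\mathcal{D}|$ as a sample without replacement with mean $D/\Delta_G$ and applies Hoeffding's bound to get $p_s\le\exp\bigl(\tfrac{-2}{\Delta_G}(\tfrac{m}{2}\Delta_G-D)^2\bigr)\le\exp\bigl(\tfrac{-2}{N-1}(\tfrac{m}{2}\Delta_G-D)^2\bigr)$, then takes a union bound over the $N$ sources. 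Your proposal instead \emph{assumes} $P_{g_3}$ (``at most $(m-1)/2$ dishonest predecessors'') and puts the randomness in which $m$ of the $\beta_n(s)$ forwarded messages are accepted, calling the number of corrupted accepted messages hypergeometric. But if $P_{g_3}$ holds, then at most $(m-1)/2<m/2$ of \emph{any} $m$ accepted messages can be corrupted, so your bad event $\{X\ge m/2\}$ has probability exactly $0$ — Lemma~\ref{lem:bound_m-1} already gives correctness deterministically — and the hypergeometric tail bound is vacuous. Conversely, if you drop $P_{g_3}$ and keep only the corollary's actual hypothesis $D\le\frac{m-1}{2}\Delta_G$, you have no pointwise control on $|\varphi_n(s)\cap\mathcal{D}|/\beta_n(s)$ being bounded away from $1/2$, so the ``gap'' your step~(1) relies on is not available. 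In short, the corollary is precisely a bound on the probability that $P_{g_3}$ fails somewhere under the global budget $D\le\frac{m-1}{2}\Delta_G$; conditioning on $P_{g_3}$ assumes away the very event to be bounded.

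A secondary point: your step~(4) converts the per-node bound $e^{-cm}$ into decay in $N$ only under extra scaling assumptions ($D=\Theta(N)$, hence $m=\Omega(N/\Delta_G)$, or $m=\Omega(N)$), which the corollary does not impose. The paper avoids introducing such assumptions by keeping the exponent in the form $\tfrac{2}{N-1}\bigl(\tfrac{m}{2}\Delta_G-D\bigr)^2$ (using $\Delta_G\le N-1$), where the gap $\tfrac{m}{2}\Delta_G-D\ge\tfrac{\Delta_G}{2}$ follows directly from the hypothesis; admittedly the paper's final ``exponentially fast in $N$'' claim is itself stated informally, but your version inherits the same looseness while resting on the mismatched probabilistic model described above. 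To repair your proof, replace the hypergeometric-over-accepted-messages model by concentration of $|\varphi_n(s)\cap\mathcal{D}|$ itself (sampling without replacement, mean $D/\Delta_G$), use the hypothesis to lower-bound the deviation $\tfrac{m}{2}-\tfrac{D}{\Delta_G}$, and then union bound over sources as you intended.
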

% \vspace{-1mm}

\begin{proof}
We define by $\varphi_n(s)$ the set of $n$'s neighbors preceding $n$ in the ordering w.r.t. source $s$.
By Lemma \ref{lem:bound_m-1}, if $|\varphi_n(s)\cap\mathcal{D}|\ge m/2$ then a node $n$ may decide wrongly the collected data of source $s$ (or could not make decision). 
We inspire the idea from \cite{DBLP:journals/jpdc/GuerraouiHKMV12} to compute the probability this event occurs by using standard Hoeffding bounds for sampling from a finite population without replacement (with the notice that $\Delta_G\le N-1$): 
% \[
% \begin{split} 
$p_s = Pr\Bigl[|\varphi_n(s)\cap\mathcal{D}|\ge m/2\Bigl]
= Pr\Bigl[|\varphi_n(s)\cap\mathcal{D}|-\frac{D}{\Delta_G}\ge \frac{1}{\Delta_G}(\frac{m}{2}\Delta_G-D)\Bigr]
\le \exp(\frac{-2}{\Delta_G}(\frac{m}{2}\Delta_G-D)^2 )\le \exp(\frac{-2}{N-1}(\frac{m}{2}\Delta_G-D)^2) .
% \end{split}
% \]
$
\noindent As $D<\frac{m}{2}\Delta_G$, the right-hand function tends to 0 exponentially fast in N (and $\Delta_G$).
It implies the probability for a node decides wrongly the collected data of some source (or could not make decision) is
% \[
% \begin{split}
$P_c=Pr\Bigl[ \bigcup_{s\in V}\{ |\varphi_n(s)\cap\mathcal{D}|\ge m/2\}\Bigr]
 \le \sum_{s\in V}p_s\le N\exp(\frac{-2}{N-1}(\frac{m}{2}\Delta_G-D)^2).
%  \end{split}
% \]
$
The right-hand function  tends to 0 exponentially fast in $N$.
\end{proof}

% \medskip
\noindent\textbf{EPol  vs. graphs.}\label{subsec:necessary_cond_ideal}
We examine the necessary and sufficient condition for our protocol to be deployed safely and securely in the following Theorem.
% \vspace{-1mm}

\begin{theorem} \label{theo:nec_suf_normal}
The properties of $\mathcal{G}_1$ and $\mathcal{G}_2$ are respectively the necessary and sufficient condition for
EPol to be deployed safely and correctly in the system without and with the presence of dishonest
nodes.
\vspace{-3mm}
\end{theorem}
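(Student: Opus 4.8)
The plan is to split the statement into its two halves and treat ``safely and correctly'' as the conjunction of the guarantees already proved: termination with exact output $\sum_n v_n$ in the honest case, and, in the dishonest case, each honest node deciding the correct collected data of every honest source together with the bounded-impact/privacy guarantees. For the \emph{sufficiency} direction there is essentially nothing new to do: if $G\in\mathcal{G}_1$ then Theorem \ref{lem:accuracy_idealcase} gives termination and correctness, and Propositions \ref{propos:spatial_ideal}--\ref{propos:msg_ideal} give the near-linear complexities; if $G\in\mathcal{G}_2$ then Lemma \ref{lem:bound_m-1}, Theorem \ref{theo:max_tolerance} and Corollary \ref{cor:prob_converge} show every honest node decides the correct data of every honest source, and Theorems \ref{theo:privacy_generalcase}--\ref{theo:avg_impact} bound the privacy leakage and the impact. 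So the first paragraph of the proof just assembles these cited facts.

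The substantive work is the \emph{necessity} direction: each of $P_{g_1}$, $P_{g_2}$, $P_{g_3}$ must be shown to be indispensable, i.e.\ dropping any one of them admits a graph (and a run) on which EPol fails. For $P_{g_1}$ I would argue as in the discussion after the properties: if some node has fewer than $2k+1$ potential consumers it cannot realize the sharing scheme for its chosen $i$, and if a node may have $|\mathcal{R}_n|>2k+1$ producers then, as noted in the comparison with \cite{DBLP:journals/computing/BenkaouzE13,DBLP:conf/srds/BenkaouzGEH13}, a dishonest node can inject an arbitrarily large collected value, breaking the $[-(2k+1),2k+1]$ detectability range and hence the bounded-impact guarantee of Theorem \ref{total_impact_theorem}. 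For $P_{g_2}$ I would invoke the remark in the text that the $m$-broadcasting property implies $G$ is an ``honest graph'' in the sense of \cite{DBLP:conf/opodis/HoangI12} (there is an all-honest path between any two honest nodes); if $G$ is not an honest graph then some honest source $s$ is separated from some honest node $n$ by a dishonest cut, the adversary controls every copy of $s$'s data reaching $n$, and $n$ cannot recover $c_s$, so the output is wrong — hence $P_{g_2}$ is necessary for correctness (and it is also what keeps the complexity near-linear). For $P_{g_3}$ I would show that if some node $n$ has $\ge m/2$ dishonest predecessors w.r.t.\ a source $s$, then by the converse reasoning of Lemma \ref{lem:bound_m-1} the dishonest coalition can make the majority value in $\mathcal{C}_n[s]$ wrong, so $n$ decides incorrectly — exactly the failure mode ruled out by Corollary \ref{cor:prob_converge}; moreover $P_{g_3}$ is what yields the tolerance bound $D\le(m-1)\Delta_G/2$ via Theorem \ref{theo:max_tolerance}, so without it the adversary size cannot be controlled.

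The main obstacle I anticipate is making ``necessary and sufficient'' precise enough to be provable rather than merely plausible: necessity statements of this kind are really ``for every graph violating property $P$ there \emph{exists} an adversarial placement/run defeating the protocol,'' and one has to be careful that the counterexample graphs still satisfy the \emph{other} properties (so that the failure is genuinely attributable to the dropped one) and still meet the standing ``each node has at least one honest friend'' assumption. I would therefore organize the necessity half as three short lemmas, one per property, each exhibiting such a graph, and then conclude the theorem by combining these with the sufficiency assembled in the first paragraph. I expect the $P_{g_2}$ case to need the most care, since one must connect the purely structural $m$-broadcasting condition to the adversarial ``honest graph'' notion and check that the forwarding logic of Algorithm \ref{algo_polling} indeed deadlocks or misdecides when the ordering hypothesis fails.
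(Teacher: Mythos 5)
Your sufficiency half coincides with the paper's: the paper simply cites Theorems \ref{lem:accuracy_idealcase}--\ref{theo:max_tolerance} and moves on. Where you genuinely depart is the necessity half. The paper builds no counterexample graphs at all; it argues directly from the mechanics of Algorithm \ref{algo_polling}: in the sharing phase a node sends an odd number $2i+1\le 2k+1$ of shares and has $|\mathcal{R}_n|\le 2k+1$ producers, and since $\mathcal{S}_n$ and $\mathcal{R}_n$ need not be disjoint this forces $d_n\ge 2k+1$, i.e.\ $P_{g_1}$; in the broadcasting phase a node can only decide $h_n[s]$ if it is a neighbor of $s$ or is connected to $m$ nodes preceding it in the ordering, which is exactly the $m$-broadcasting property $P_{g_2}$; and Lemma \ref{lem:bound_m-1} is invoked to say that $P_{g_3}$ is what correct decision requires. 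Your plan of three per-property counterexample lemmas is a more demanding (and, if carried out, more rigorous) reading of ``necessary,'' and your concern that each counterexample must keep the other properties and the honest-friend assumption intact is well placed; the paper sidesteps all of this by its direct derivation from the protocol's operational requirements.

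One concrete weak spot in your plan: your primary argument for the necessity of $P_{g_2}$ routes through the ``honest graph'' notion (a dishonest cut between an honest source and an honest node). But honest-graph-ness is strictly weaker than the $m$-broadcasting property -- the text only says $P_{g_2}$ implies it, not conversely -- so exhibiting a failure when $G$ is not an honest graph establishes necessity of honest-graph-ness, not of $P_{g_2}$. Moreover $P_{g_2}$ must also be shown necessary for the all-honest family $\mathcal{G}_1$, where there is no dishonest cut at all. The correct core argument, which you only flag at the very end as needing care, is structural and adversary-free: the decision rule at lines \ref{mark4st_br_proc}--\ref{mark4end_br_proc} waits until $|\mathcal{C}_n[s]|=m$, so a non-neighbor of $s$ with fewer than $m$ preceding neighbors never decides on $s$'s collected data and the broadcasting phase never terminates. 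That deadlock argument should carry your $P_{g_2}$ lemma (it is, in essence, the paper's own justification), and the dishonest-cut discussion can be dropped.
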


\begin{proof}
We here just show this theorem with graphs of $\mathcal{G}_2$. For graphs of $\mathcal{G}_1$, the proof is easily considered as its consequences.
The sufficient condition $(\Leftarrow)$ is
proved by Theorems \ref{lem:accuracy_idealcase}--\ref{theo:max_tolerance}. % since our protocol works correctly with graphs of $\mathcal{G}_2$ and preserves its properties such as correctness, privacy, accuracy, termination. 
We only need to
clarify the necessary condition $(\Rightarrow)$. Assume we have a general graph $G$.
We approach the proof by sketching step by step the requirements $G$
should obtain so that all properties of protocol are guaranteed.
In the sharing phase, node $n$ sends (or receives) at most $2k+1$ messages,
i.e.,
$|\mathcal{S}_n|=|\mathcal{R}_n|\le 2k+1$. In addition, it has to send an odd number of messages, thus, $|\mathcal{S}_n|=2i+1$ where $0\le i\le k$.
As $|\mathcal{S}_n|$ and $|\mathcal{R}_n|$ might not be disjoint, then $d_n\ge 2k+1$.
Therefore, to apply protocol correctly, $G$ must have the property $P_{g_1}$.
In addition, in the broadcasting phase of EPol, a node decides a data of some source $s$ if it connects directly to $s$, or to at least $m$ nodes which have preceding orders. This infers the graph must satisfy the $m$-broadcasting condition.
Finally, the Lemma \ref{lem:bound_m-1} shows  the necessary property $P_{g_3}$ that a graph must satisfy to deploy EPol.
Accordingly, the graph $G$ is a member of family $\mathcal{G}_2$ of graphs.
\end{proof}
\vspace{-2mm}

%---------------------------------------------
\section{Crash and message loss analysis} \label{sec:crash_msg_loss}
% \vspace{-3mm}
% % Nodes communicate by UDP which may suffer message loss on the communication channels. In addition, nodes may be unreliable, causing expected messages to be lost due to sender crashes.
% This part analyzes the effect of node crashes and message losses  on  accuracy and termination of the protocol. 
% We assume the system contains no dishonest nodes. More details of this section are presented in Appendix \ref{appxsub:crash_msg_loss}.
Nodes communicate by UDP which may suffer message loss on the communication channels. In addition, nodes may be unreliable, causing expected messages to be lost due to sender crashes.
Assuming the presence of node crashes and message losses in the system, this part analyzes the effect of these factors on  accuracy and termination of the protocol. Here we assume that the system contains no dishonest nodes.
    
% \medskip
\noindent\textbf{Impact on termination.}
Suppose a node is crashed with probability $r$ (and it is never recovered from a crash),  a message is lost (throughout transmitting)  with probability $l$, a node    \emph{fails to send shares} to its consumers with probability $q=r+(1-r)l$.
A node $n$ \emph{fails to decide} a collected data of source $s$ since:
\vspace{-3mm}
\begin{enumerate}
 \item $n=s$: $s$ fails to compute its collected data $c_s$. 
 \item $n\in\Gamma(s)$: $n$ does not receive a broadcast message from $s$.
 \item $n\notin\Gamma(s)$:  
% $n$ receives at most $(m-1)$ messages from neighbors (preceding it in the ordering w.r.t. source $s$), i.e., 
more than $\beta_n(s)-m$ (preceding) neighbors \emph{fail to forward} the collected data as they either: (i) crashed, or (ii) have themselves not decided on the collected data, or (iii) have forwarded messages but they are lost.
\vspace{-2mm}
\end{enumerate}

We define by $e_{n_i}$ and $z_{n_i}$ respectively  the probability for a node $n$ at distance $i$ from source $s$ to fail to forward and fail to decide a  collected data of $s$.
We have $e_{n_i}=r+(1-r)[z_{n_i}+(1-z_{n_i})l]$,
where $z_{n_0}=z_s=\sum_{j=0}^{|\mathcal{R}_s|-1} \binom{|\mathcal{R}_s| }{j}(1-q)^j q^{|\mathcal{R}_s|-j}$, 
$z_{n_1}=e_{n_0}$,                                                                                                  
and        
%\[z_{n_i}=\sum_{j=0}^{m-1}\binom{\beta_n(s)}{j} (1-e_{{n_1}_{l_1}})(1-e_{{n_2}_{l_2}})\dots (1-e_{{n_j}_{l_j}}) e_{{n_{j+1}}_{l_{j+1}}} e_{{n_{j+2}}_{l_{j+2}}}\dots e_{{n_{\beta_n(s)}}_{l_{\beta_n(s)}}}
%\] 
$z_{n_i}=\sum_{j=0}^{m-1}\Bigl[\binom{\beta_n(s)}{j} \prod_{t=1}^{j} (1-e_{{n_t}_{l_t}}) \prod_{p=j+1}^{\beta_n(s)} e_{{n_p}_{l_p}}\Bigl]
$
where $i\ge 2$, $\{n_1,n_2,\dots,n_{\beta_n(s)}\}$ and $\{l_1,l_2,\dots,l_{\beta_n(s)}\}$ are respectively the sets of preceding friends of $n$ (w.r.t.  $s$) and their corresponding distances to $s$. Notice that $l_j$ could be greater than $i$ for $j=1,2,...\beta_n(s)$.
A node does not decide on the outcome if it has not decided on at least one collected data
of some source, that is $z_{n_{\Delta _G}}$.

\noindent\textbf{Impact on accuracy.}
%  We show the impact  on accuracy in Theorem \ref{theo:impact_crash_accuracy}.
% \vspace{-1mm}
% \begin{theorem}\label{theo:impact_crash_accuracy}
% The maximum impact on accuracy of a node crash is $3k+2$. 
% \vspace{-3mm}
% \end{theorem}
% \begin{proof}
A node crash affects  the final result when that node has some unique information and they  have not yet  been replicated, typically the shares of votes. It crashes (i) while sending its shares to consumers, or (ii) after summing up  shares from producers. The former case affects the final outcome up to $k+1$ (if it crashes after sending $k$ shares of $-v$). The latter case affects up to $2k+1$. Hence, the impact of such a crash on final result is at most $3k+2$.
% \end{proof}
\vspace{-3mm}

%---------------------------------------------

\section{Related work} \label{sec:related_work}
% \vspace{-3mm}
Several recent works related to  secret sharing and distributed polling have been proposed.
We introduce some works which are not based on any overlay structure and heavy computation. 
Secret sharing schemes \cite{DBLP:conf/crypto/Benaloh86a,DBLP:journals/cacm/Shamir79} may be used for polling with respect to addition. However, as they do not give the protection for the initial shares, the outcome is likely impacted with the presence of dishonest nodes. 
Verifiable Secret Sharing Scheme (VSS) 
and Multi-party Computation protocol (MPC) in \cite{DBLP:conf/focs/ChorGMA85,DBLP:conf/stoc/RabinB89}  
privately compute  the node's shares and give output with small error
if a majority of  nodes is honest. Nonetheless, without the condition of the initial input, a dishonest node can share arbitrary data, and bias the output. These protocols also use cryptography. This drawback is also applied for other studies based on MPC such as \cite{DBLP:conf/eurocrypt/CramerFSY96,DBLP:journals/ett/CramerGS97,DBLP:conf/crypto/DamgardIKNS08,DBLP:conf/crypto/DamgardN07} even  
the time and communication complexity are improved. Authors of  \cite{DBLP:conf/fc/MalkhiP01} proposed AMPC which provides users anonymity without using  cryptography, but this work used the notion of group. Based on AMPC and enhanced check vectors, E-voting protocol 
\cite{DBLP:conf/fc/MalkhiMP02} is the information-theoretically secure protocol. But it defines different roles for users and  thus, is different from our direction. The distributed ranking schemes are also related to our concern. 
%where a user evaluate the quality of one of her peers.  
However, they try to design accurate grading mechanism rather than providing efficient polling schemes \cite{Gupta:2003:RSP:776322.776346,DBLP:conf/colcom/Rodriguez-PerezEM05} and not address to privacy \cite{Dutta03thedesign}.  
In \cite{DBLP:conf/isit/ShahRR13}, the $m$-propagating condition enables the use of minimal shares for the secret. But, in our work, it is used to create a majority for deciding the  correct value during the broadcasting phase of our protocol.
The protocol in \cite{DBLP:conf/srds/GambsGHHK12} and \emph{AG-S3} \cite{DBLP:conf/journal/inforcom/Giurgiu20143} can be used for polling 
in a scalable and secure way, but they either use (i) a ring-based overlay, or (ii)  cryptography.  
\vspace{-3mm}

\section{Conclusion} \label{sec:conclusion}
% \vspace{-3mm}
This paper presented EPol, a distributed polling protocol for a general family of social graphs, while preserving vote privacy and limit the impact on accuracy of the polling outcome.
% In this paper, we presented EPol, a simple distributed polling protocol
% and defined a family of social graphs. We  showed  their structures constitute necessary
% and sufficient condition to assure vote privacy and limit the impact on accuracy of the polling outcome.    
Unlike other
work, our protocol is deployed in   a more general family of graphs, and we obtained 
some similar and better results. 
% More concretely, we achieved the same probability of nodes to be revealed with certainty, the same maximum number of votes to be revealed, and the same impact from the coalition on the final output. 
% %Each node is assured to obtain a correct data of other node in the broadcasting procedure.
In addition, we introduced some uncertain vote disclosure rules for dishonest nodes, and presented the probabilities of vote detection in these cases. We also analyzed the effect of message losses and nodes crashes on accuracy and termination.
Despite the use of richer social graph structures, the communication and spatial complexities of EPol are close to be linear.
In future work, we plan to implement our suggested protocol in some decentralized social networks like Diaspora and Tent.
%plan to design a distributed algorithm to convert arbitrary graph to one satisfying $m$-broadcasting property.
\vspace{-3mm}

%This motivates us to continue studying this direction in the sense that we should
%consider the polling protocol with the nature structure of graph, for instance,
%designing efficient protocol manipulated in the more general family of social
%graphs, instead of trying to transform into other overlay structures. Moreover,
%our algorithm has been evaluated within the environment which has no crash
%or message loss, and thus, we plan to examine it in a real-world network.

% \bibliographystyle{abbrv}
\bibliographystyle{IEEEtran}
\bibliography{authors_short}
%-----------------------------------------------
% \newpage
%% The Appendices part is started with the command \appendix;
%% appendix sections are then done as normal sections
\appendix

\subsection{Examples of the networks}
\label{appxsub:particular_graphs}

The algorithm EPol requires the graphs in a family satisfying the $m$-broadcasting property. Here we illustrate some particular graphs of this family. The examples of these graphs for $m=3$ are also depicted in Fig.\ref{fig:example_graphs}  and they can be generalized to any value $m$.
\vspace{-3mm}
\begin{enumerate}[(a)]
 \item 
\medskip\noindent\textbf{Layered networks.}
Each layer contains at least $m$ nodes, and each node links to all other nodes in neighboring layer. The number of nodes in each layer could be different. 
For each source node, an (increased) ordering satisfying the  $m$-broadcasting-source property w.r.t. this source is the ordering of the nodes with respect to the distance from the source.
% The number of dishonest that system can tolerate is up to $(m-1)/2.\Delta_G$.

\item
\medskip\noindent\textbf{Networks with a backbone.}
The graph includes a \emph{backbone} which is densely connected graph, and other nodes outside the backbone which connect directly to at least $m$ nodes in the backbone. 
An ordering satisfying the  $m$-broadcasting-source property w.r.t. one source is the ordering under the backbone subgraph, followed by all remaining nodes not in the backbone.

\item
\medskip\noindent\textbf{One-dimensional geometric networks.}
All nodes are arranged along a line, and there is a connection between two nodes if their distance is smaller than a fixed threshold. Each node has at least $m$ connections. 
An ordering of nodes w.r.t. one source is the order of their euclidean distance from the source.

\item
\medskip\noindent\textbf{Cluster-ring-based graph \cite{DBLP:conf/srds/BenkaouzGEH13,DBLP:conf/trustcom/EnglertG13,DBLP:conf/opodis/GuerraouiHKM09,DBLP:journals/jpdc/GuerraouiHKMV12}. }
The $N$ nodes are clustered into $r=\sqrt{N}$ \emph{ordered} groups, from $g_0$ to $g_{r-1}$. Each group is a clique. A node $n$ in group $g_i$ also links to a fixed-size set $\mathcal{S}_n$ of nodes in the next group ($\mathcal{S}_n\subset g_{i+1 \emph{ mod } r}$), and a fixed-size set $\mathcal{R}_n$ of nodes in the previous group where $|\mathcal{S}_n|=|\mathcal{R}_n|=2k+1$. Thus, all groups virtually form a ring with $g_0$ being the successor of $g_{r-1}$.
In fact, this structure is a particular case of layered networks presented above in which each layer is a clique of size $\sqrt{N}$, and all layers virtually form a ring.
Hence for each source node, an (increased) ordering satisfying the  $m$-broadcasting-source property w.r.t. a source is the ordering of the nodes with respect to the distance from the source. It is also ordered based on the distance from the source and the direction of sending messages, e.g., from group $g_i$ to $g_{i+1 \emph{ mod } r}$. 

%-----------------------------
\begin{figure}[ht]
\centering
% \hspace*{\fill}
\subfloat[Layered network]{
\begin{tikzpicture} 
[>=stealth',inner sep=0pt, scale=0.8,
dishonestnode/.style={
	% The shape:
	circle, %rounded rectangle,
	% The size:
	minimum size=3mm,   
	% The border:
	thick,	
	draw, % 50% red and 50% black,
	% and that mixed with 50% white
	% The filling:
% 	top color=black, % a shading that is white at the top...
% 	bottom color=black, % and something else at the bottom
	fill,% Font
	font=\sffamily
},
honestnode/.style={
% The shape:
circle, %rounded rectangle,
minimum size=3mm,
% The rest
thick,
draw,%=black!50,
% The filling:
%top color=white,bottom color=black!50,
font=\sffamily
},every label/.style={thick}
]
\def \n {6}
\def \radius {1.5cm} 
\def \margin {45} % margin in angles, depends on the radius

\foreach \j in {0,1,2}
 \foreach \s in {0,1,2,3,4,5} 
{
  \node[honestnode] (v\j\s) at (\s,-\j) {};        
}

%  \foreach \s in {0,1,2,3,4,5} 
% {
%   \node[honestnode] (v2\s) at (\s,-2) {};  
% }
\coordinate (y1) at (-1,-1) {};
\coordinate (y2) at (6,-1) {};

\foreach \j/\s in {00/01,00/11,00/21, 10/01,10/11,10/21, 20/01,20/11,20/21,
01/02,01/12,01/22, 11/02,11/12,11/22, 21/02,21/12,21/22, 
02/03,02/13,02/23, 12/03,12/13,12/23, 22/03,22/13,22/23,
03/04,03/14,03/24, 13/04,13/14,13/24, 23/04,23/14,23/24,
04/05,04/15,04/25, 14/05,14/15,14/25, 24/05,24/15,24/25}
{
  \path (v\j) edge (v\s) ;       
} 

\path[-,dashed] (v10) edge (y1) ;   
\path[-,dashed] (v15) edge (y2) ; 
   
\end{tikzpicture}
\label{fig:example_graph_sf1}
}
\subfloat[Backbone network]{
\begin{tikzpicture} 
[>=stealth',inner sep=0pt,  
honestnode/.style={
% The shape:
circle, %rounded rectangle,
minimum size=3mm,
% The rest
thick,
draw,%=black!50,
% The filling:
%top color=white,bottom color=black!50,
font=\sffamily
},every label/.style={thick}
]
\def \n {10}
\def \radius {1.5cm} 
\def \margin {45} % margin in angles, depends on the radius

\foreach \s in {1,3,4,6,8,9,10}{
  \node[honestnode] (v\s) at ({(360/\n * (\s - 1))}:\radius) {};

   \foreach \angle in {180,150,210}
      \draw[-] (v\s) -- ({(360/\n * (\s - 1))+\angle}:0.5);
}

\node[draw, cloud,  
   cloud puffs=10.9, cloud ignores aspect,minimum width=2.2cm,
    minimum height=1.75cm, fill=lightgray](cloud){};

\foreach \s in {1,...,5}
  \node[honestnode] (u\s) at ({(360/5 * (\s - 1))}:0.5) {};

\foreach \source/\dest in {u1/u2,u1/u3,u1/u4,u1/u5,u2/u3,u2/u4,u2/u5,u3/u4,u3/u5,u4/u5}
	\path[-] (\source) edge (\dest) ;

%  \coordinate (cloud) at (0,0);
%     \foreach \angle in {180,195,165}
%       \draw[->] (v1) -- (\angle:3);
%     \node[cloud, cloud puffs=15.7, minimum width=2.2cm, draw,
%           fill=white] (cloud) at (cloud) {Cloud};
% 

\end{tikzpicture}
\label{fig:example_graph_sf2}
}
% \hspace*{\fill}
\\
\subfloat[One-dimensional geometric network]{
\begin{tikzpicture} 
[>=stealth',inner sep=0pt, scale=0.8,
honestnode/.style={
% The shape:
circle, %rounded rectangle,
minimum size=3mm,
% The rest
thick,
draw,%=black!50,
% The filling:
%top color=white,bottom color=black!50,
font=\sffamily
},every label/.style={thick}
]
\def \n {6}
\def \radius {1.5cm} 
\def \margin {45} % margin in angles, depends on the radius

 \foreach \s in {0,...,8} 
{
  \node[honestnode] (v\s) at (\s,0) {};        
}

\coordinate (y1) at (-1,0) {};
\coordinate (y2) at (9,0) {};

\foreach \j/\s in {0/1,1/2,2/3,3/4,4/5,5/6,6/7,7/8}
    \path (v\j) edge (v\s) ; 

\foreach \j/\s in {0/2,2/4,4/6,6/8}
    \path (v\j) edge [bend left=40] node[below, font=\sffamily\tiny] {} (v\s) ; 
\foreach \j/\s in {1/3,3/5,5/7}
    \path (v\j) edge [bend right=40] node[below, font=\sffamily\tiny] {} (v\s) ; 

\path[-,dashed,dotted] (v0) edge (y1) ;   
\path[-,dashed,dotted] (v8) edge (y2) ; 
   
\end{tikzpicture}
\label{fig:example_graph_sf3}
}
\hfill
%----------DPol graph-----------------
\subfloat[Cluster-ring-based network]{
\begin{tikzpicture}
[>=stealth',inner sep=0pt,node distance=4mm, scale=0.8,
dishonestnode/.style={
	% The shape:
	circle, %rounded rectangle,
	% The size:
	minimum size=3mm,   
	% The border:
	thick,	
	draw=black!90, % 50% red and 50% black,
	% and that mixed with 50% white
	% The filling:
% 	top color=black, % a shading that is white at the top...
% 	bottom color=black, % and something else at the bottom
	fill,% Font
	font=\sffamily
},
honestnode/.style={
% The shape:
circle, %rounded rectangle,
minimum size=3mm, 
% The rest
thick,
draw,%=black!50,
% The filling:
%top color=white,bottom color=black!50,
font=\small
},
every label/.style={thick}
]

\node[circle,minimum size=1.8cm,draw, dashed,label=below:$g_i$] (gi) at (0,0)  {};
\node[honestnode] (v11) at (-0.5,0.5) {}  ;
\node[honestnode] (v12) at (-0.5,-0.5) {}  ;
\node[honestnode] (v13) at (0.5,-0.5) {}  ;
\node[honestnode,label={below right:{\footnotesize $u$}}] (v14) at (0.5,0.5) {}  ;

\node[circle,minimum size=1.8cm,draw, dashed,label=right:$g_{i-1}$] (gi1) at (2,2)  {};
\node[honestnode] (v21) at (2-0.5,2.5) {}  ;
\node[honestnode] (v22) at (2-0.5,2-0.5) {}  ;
\node[honestnode] (v23) at (2.5,2-0.5) {}  ;
\node[honestnode] (v24) at (2.5,2.5) {}  ;

\node[circle,minimum size=1.8cm,draw, dashed,label=below:$g_{i+1}$] (gi2) at (4,0)  {};
\node[honestnode] (v31) at (4-0.5,0.5) {}  ;
\node[honestnode] (v32) at (4-0.5,-0.5) {}  ;
\node[honestnode] (v33) at (4.5,-0.5) {}  ;
\node[honestnode] (v34) at (4.5,0.5) {}  ;

\foreach \i in {1,2,3}{
  \path (v\i1) edge (v\i2);
  \path (v\i1) edge (v\i3);
  \path (v\i1) edge (v\i4); 
  \path (v\i2) edge (v\i3); 
  \path (v\i2) edge (v\i4); 
  \path (v\i3) edge (v\i4); 
}
\foreach \source/\dest   in {v14/v21,v14/v22,v14/v23,v14/v31,v14/v32,v14/v34}
    \path[-] (\source) edge[bend left=20] node[below, font=\sffamily\tiny] {} (\dest);

\end{tikzpicture}
\label{fig:example_graph_sf4}
}
%---------- A graph where DPol cannot be deployed--------
\subfloat[Circle-based network]{
\begin{tikzpicture} 
[>=stealth',inner sep=0pt,  scale=0.8,
honestnode/.style={
% The shape:
circle, %rounded rectangle,
minimum size=3mm,
% The rest
thick,
draw,%=black!50,
% The filling:
%top color=white,bottom color=black!50,
font=\sffamily
},every label/.style={thick}
]
\def \n {6}
\def \radius {1.5cm} 
\def \margin {45} % margin in angles, depends on the radius

\foreach \s in {1,2,3,4,5,6}
{
  \node[honestnode] (v\s) at ({(360/\n * (\s - 1))}:\radius) {};

}
\foreach \source/\dest in {v1/v3,v3/v5,v5/v1,v2/v4,v4/v6,v6/v2}
	\path[->] (\source) edge (\dest) ;
\foreach \source/\dest in {v1/v2,v2/v3,v3/v4,v4/v5,v5/v6,v6/v1}
	\path[<->] (\source) edge (\dest) ;

\end{tikzpicture}
\label{fig:example_graph_sf5}
}
% \hspace*{\fill}
\caption{{\small Examples of networks satisfying the 3-broadcasting property.}}
\label{fig:example_graphs}
%  \vspace{-3mm}
\end{figure}
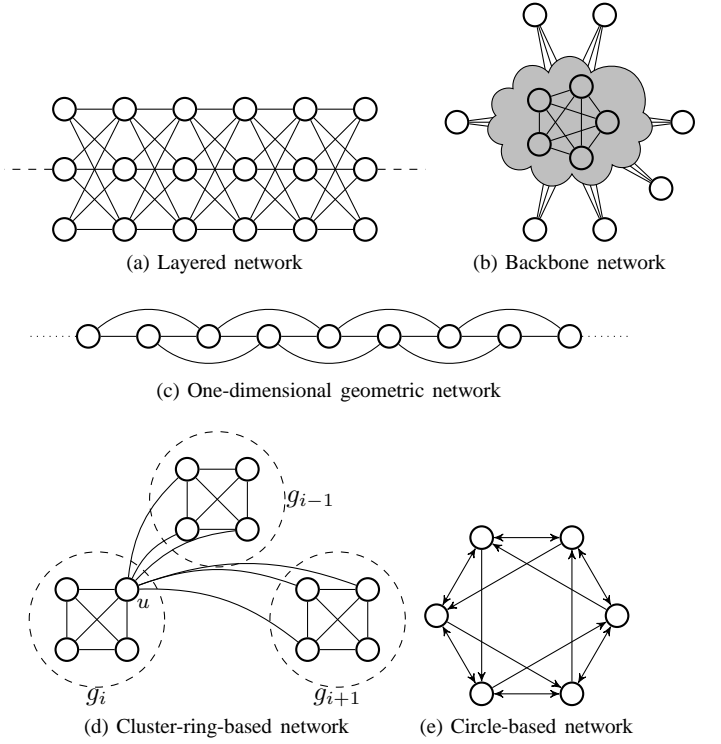  

\item
\medskip\noindent\textbf{Circle-based networks.}
Consider a graph $G_0\in\mathcal{G}_2$ of size $N>2k+1$ where each node $n$ has the set of consumers $\mathcal{S}_n$ which is
the output of the function: 
\begin{small}
\[
\begin{split}
 f\colon A &\to 2^A \\
 n &\mapsto \{(n-1)\bmod{N}, (n+1)\bmod{N},...(n+2k)\bmod{N}\}
\end{split}\]
 \end{small}
 where $A=\{0,1,2\ldots,(N-1)\}$.
The set of producers is $\mathcal{R}_n=\{(n+1)\bmod{N}, (n-1)\bmod{N}, (n-2)\bmod{N},\ldots,
(n-2k)\bmod{N}\}$ of size $2k+1$. 
Figure \ref{fig:example_graph_sf5} depicts an example of this graph with $N=6$ and $k=1$ in which the arrows express the direction of sending messages. 
For each source node, an (increased) ordering satisfying the  $m$-broadcasting-source property w.r.t. this source is the ordering of the nodes with respect to the distance from the source.                                                                                     

All protocols in  \cite{DBLP:journals/computing/BenkaouzE13,DBLP:conf/srds/BenkaouzGEH13,DBLP:conf/trustcom/EnglertG13,DBLP:conf/opodis/GuerraouiHKM09,DBLP:journals/jpdc/GuerraouiHKMV12} cannot be deployed in this graph since the condition  $\mathcal{S}_n\cap \mathcal{R}_n=\varnothing $ (proposed in these works) is not satisfied for every node $n$.

\end{enumerate}

%--------------------------------------------------

\subsection{Graph construction}
\label{appxsub:algo_construct_m_broad}
We present briefly an algorithm for constructing an $m$-broadcasting graph from an arbitrary graph of size $N$ where the set of nodes $V=\{1,2,...,N\}$  as follows:

\begin{mdframed} 
\begin{enumerate}
 \item  Choose any ordering of the $N$ nodes.  
 \item  For $i=1$ to $m$: If node $i$ does not have an edge from the source, then insert an edge from the source to this node.
 \item For $i=m+1$ to $n$: Count the number of edges linking to node $i$ from nodes in the set $\{1,...,i-1\}$. If this count is less than $m$ then add edges from an arbitrary subset of nodes $\{1,...,i-1\}$ to node $m$ such that the count reaches $m$. 
\end{enumerate}
\end{mdframed}

The resulting graph will satisfy the $m$-broadcasting property.
\end{document}